 \newcommand{\LONGVERSION}[1]{#1}
 \newcommand{\SHORTVERSION}[1]{}
\newcommand{\LONGSHORT}[2]{\LONGVERSION{#1}\SHORTVERSION{#2}}
\newcommand{\tores}{\textsc{Tores}}
\newcommand{\ictx}{~\mathsf{ictx}}
\newcommand{\itype}{~\mathsf{itype}}
\lstdefinelanguage{Beluga}
{
  morekeywords={rec, fun, LF, inductive, coinductive, stratified,type, ctype,
    and, schema, some, let, in, case, of},
  keepspaces=true,
  sensitive,
  morecomment=[l]{\%},
  morecomment=[n]{\%\{}{\}\%},
  morestring=[b]"
}[keywords,comments,strings]
\slshape\color{DimGrey},
\newcommand{\beluga}{\textsc{Beluga}}
\newcommand{\sep}{\;|\;}
\DeclareMathOperator{\Rec}{\mathtt{Rec}}
\newcommand{\psplit}[3]{\ensuremath{\code{split}\,#1\,\code{as}\,\pair{#2}{#3}\,\code{in}\,}}
\newcommand{\many}[1]{\overrightarrow{#1}}
\newcommand{\pr}{\ensuremath{^\prime}}
\newcommand{\bnfas}{\mathrel{::=}}
\newcommand{\code}[1]{\texttt{#1}}
\newcommand{\mtt}[1]{\mathtt{#1}}
\newcommand{\oft}{\ensuremath{\mathrel{:}}} 
\newcommand{\chk}{\Leftarrow}
\newcommand{\syn}{\Rightarrow}
\newcommand{\gives}{\ensuremath{\mathrel{\vdash}}} 
\newcommand{\evto}{\ensuremath{\Downarrow}} 
\newcommand{\eval}[3]{\ensuremath{#1[#2] \evto #3}}
\newcommand{\evalapp}[4]{\ensuremath{#1 \cdot #2\,#3 \evto #4}}
\newcommand{\fall}[1]{\forall #1.\;}
\DeclareMathOperator{\nat}{\mathsf{nat}}
\DeclareMathOperator{\suc}{\mathsf{suc}}
\DeclareMathOperator{\Vect}{\mathsf{Vec}}
\DeclareMathOperator{\Red}{\mathsf{Red}}
\DeclareMathOperator{\Stream}{\mathsf{Stream}}
\newcommand{\id}{\textsf{id}}
\newcommand{\unit}{\langle\rangle}
\newcommand{\lam}[1]{\ensuremath{\lambda\,#1.\,}}
\newcommand{\ilam}[2]{\ensuremath{\lambda\,#1,#2.\,}}
\newcommand{\rec}[1]{\ensuremath{\mtt{rec}\,#1.\,}}
\newcommand{\corec}[1]{\ensuremath{\mtt{corec}\,#1.\,}}
\newcommand{\app}{\,}
\newcommand{\appv}{\ensuremath{\mathrel{\cdot}}}
\newcommand{\appval}[3]{\ensuremath{#1 \appv #2\,#3}}
\newcommand{\appout}[1]{\ensuremath{#1 \appv_{\out{\nu}}}}
\newcommand{\pair}[2]{\ensuremath{\langle #1, #2 \rangle}}
\newcommand{\inj}[1]{\code{in}_{#1}\,}
\newcommand{\out}[1]{\code{out}_{#1}\,}
\newcommand{\ann}{\ensuremath{{:}}}
\newcommand{\case}[1]{\ensuremath{\code{case}\,#1\,\code{of}\,}}
\newcommand{\casetwo}[5]{\ensuremath{\code{case}\,#1\,\code{of}\,\inj{1}\,#2 \mapsto #3 \sep \inj{2}\,#4 \mapsto #5}}
\newcommand{\pack}[2]{\ensuremath{\code{pack}\,(#1, #2)}}
\newcommand{\unpack}[3]{\ensuremath{\code{unpack}\,#1\,\code{as}\,(#2, #3)\,\code{in}\,}}
\newcommand{\refl}{\code{refl}}
\newcommand{\eqelim}[4]{\ensuremath{\code{eq}\,#1\, \code{with}\,(#3. #2\,\mapsto\,#4)\,}}
\newcommand{\eqelimfalse}[1]{\ensuremath{\code{eq\_abort}\,#1\,}}
\newcommand{\abort}{\code{abort}\,}
\newcommand{\ind}{\ensuremath{\code{ind}\,}}
\newcommand{\indnat}[4]{\ensuremath{\code{ind}\,#1\,(#2,\,#3.\,#4)}}
\newcommand{\type}{*}
\newcommand{\pik}[1]{\ensuremath{\Pi #1.\,}}
\newcommand{\iarr}[3]{\ensuremath{(#1);\,#2 \to #3}}
\newcommand{\sit}[2]{\ensuremath{\Sigma #1.\,#2}}
\newcommand{\sigt}[1]{\ensuremath{\Sigma #1.\,}}
\newcommand{\cross}{\ensuremath{\times}}
\newcommand{\tsum}{\ensuremath{+}}
\newcommand{\Unit}{\ensuremath{1}}
\newcommand{\Lam}[1]{\ensuremath{\Lambda\,#1.\,}}
\newcommand{\App}{\,}
\newcommand{\Recnat}[5]{\ensuremath{\Rec_{#5}\,(0 \mapsto #1 \sep \suc #2,\,#3 \mapsto #4)}}
\newcommand{\rect}[1]{\ensuremath{\mu #1.\,}}
\newcommand{\corect}[1]{\ensuremath{\nu #1.\,}}
\newcommand{\empt}{\cdot}
\newcommand{\sub}[2]{#1/#2}
\newcommand{\powset}[1]{\ensuremath{\mathcal{P}(#1)}}
\newcommand{\mc}[1]{\ensuremath{\mathcal{#1}}}
\DeclareMathOperator{\FV}{FV}
\newcommand{\unif}{\doteqdot}
\newcommand{\match}{\doteq}
\newcommand{\fail}{\#}
\newcommand{\gen}{\searrow}
\newcommand{\synth}{\searrow}
\newcommand{\rl}[1]{\code{#1}}
\newcommand{\eunit}{\rl{e-unit}}
\newcommand{\evar}{\rl{e-var}}
\newcommand{\elam}{\rl{e-lam}}
\newcommand{\erec}{\rl{e-rec}}
\newcommand{\ecorec}{\rl{e-corec}}
\newcommand{\eind}{\rl{e-ind}}
\newcommand{\eapp}{\rl{e-app}}
\newcommand{\epair}{\rl{e-pair}}
\newcommand{\esplit}{\rl{e-split}}
\newcommand{\einj}[1]{\rl{e-in}_{#1}}
\newcommand{\eout}[1]{\rl{e-out}_{#1}}
\newcommand{\ecorecout}{\rl{e-corec-out}_{\nu}}
\newcommand{\erefl}{\rl{e-refl}}
\newcommand{\eeq}{\rl{e-eq}}
\newcommand{\epack}{\rl{e-pack}}
\newcommand{\eunpack}{\rl{e-unpack}}
\newcommand{\ecase}[1]{\rl{e-case-in}_{#1}}
\newcommand{\eapplam}{\rl{e-app-lam}}
\newcommand{\eapprec}{\rl{e-app-rec}}
\newcommand{\eappind}[1]{\rl{e-app-ind}_{#1}}
\newcommand{\tunit}{\rl{t-unit}}
\newcommand{\tvar}{\rl{t-var}}
\newcommand{\tlam}{\rl{t-lam}}
\newcommand{\tapp}{\rl{t-app}}
\newcommand{\tpair}{\rl{t-pair}}
\newcommand{\tsplit}{\rl{t-split}}
\newcommand{\tinj}[1]{\rl{t-in}_{#1}}
\newcommand{\tout}[1]{\rl{t-out}_{#1}}
\newcommand{\tcase}{\rl{t-case}}
\newcommand{\trec}{\rl{t-rec}}
\newcommand{\tcorec}{\rl{t-corec}}
\newcommand{\trefl}{\rl{t-refl}}
\newcommand{\teq}{\rl{t-eq}}
\newcommand{\teqfalse}{\rl{t-eqfalse}}
\newcommand{\tpack}{\rl{t-pack}}
\newcommand{\tunpack}{\rl{t-unpack}}
\newcommand{\tind}{\rl{t-ind}}
\newcommand{\U}{\mathcal{U}}
\newcommand{\V}{\mathcal{V}}
\newcommand{\A}{\mathcal{A}}
\newcommand{\B}{\mathcal{B}}
\newcommand{\C}{\mathcal{C}}
\newcommand{\E}{\mathcal{E}}
\newcommand{\F}{\mathcal{F}}
\newcommand{\X}{\mathcal{X}}
\newcommand{\VAL}{\Omega}
\newcommand{\N}{\mathbb N}
\newcommand{\sem}[1]{\ensuremath{\llbracket #1 \rrbracket}}
\newcommand{\semarr}[3]{#1\boldsymbol{,}\;#2\boldsymbol{\to}#3}
\newcommand{\semlfp}{\boldsymbol{\mu}}
\newcommand{\semgfp}{\boldsymbol{\nu}}
\newcommand{\injop}[1]{\code{in}_{#1}^*\,}
\newcommand{\outop}[1]{\code{out}_{#1}^*\,}
\newcommand{\meet}{\bigwedge}
\newcommand{\join}{\bigvee}
\newcommand{\CROSS}{\mathop{\bm{\times}\kern-1.93ex\bm{\times}}}
\newcommand{\TSUM}{\mathop{\bm{+}\kern-1.93ex\bm{+}}}
\DeclareMathOperator{\REC}{\bold{Rec}}
\newcommand{\RECnat}[2]{\REC\,#1\,#2}
\newenvironment{figureone}[1]{%
  \def\deffigurecaption{#1}%
  \begin{figure}[ht!]%
  \begin{center}%
  \begin{minipage}{\columnwidth}%
  \vspace*{2ex}%
}{%
  \end{minipage}%
  \end{center}%
  \caption{\deffigurecaption}%
  \end{figure}%
}
\newtheorem{req}{Requirement}
\crefname{section}{Section}{Sections}
\crefname{definition}{Def.}{Definitions}
\crefname{lemma}{Lemma}{Lemmas}
\crefname{theorem}{Thm}{Theorems}
\crefname{req}{Req.}{Requirements}
\title{Index-Stratified Types (Extended Version)}
\author{Rohan Jacob-Rao}{Digital Asset, Sydney, Australia}{rohanjr@digitalasset.com}{}{}
\author{Brigitte Pientka}{McGill University, Montreal, Canada}{bpientka@cs.mcgill.ca}{}{}
\author{David Thibodeau}{McGill University, Montreal, Canada}{david.thibodeau@mail.mcgill.ca}{}{}
\authorrunning{R. Jacob-Rao and B. Pientka and D. Thibodeau} 
\subjclass{
  \ccsdesc[500]{Theory of computation~Type theory}\\
  \ccsdesc[300]{Theory of computation~Logic and verification}
}
\keywords{Indexed types, (co)recursive types, logical relations}
\begin{document}

\maketitle

\begin{abstract}
We present \tores, a core language for encoding metatheoretic proofs.  The novel features we introduce
are well-founded Mendler-style (co)recursion over indexed
data types and a form of recursion over objects in the index
language to build new types.  The latter, which we
call \emph{index-stratified types}, are analogue to the concept of
large elimination in dependently typed languages.  These features
combined allow us to encode sophisticated case studies such as
normalization for lambda calculi and normalization by evaluation.  We
prove the soundness of \tores\/ as a programming and proof language
via the key theorems of subject reduction and termination.
\end{abstract}

\section{Introduction}
Recursion is a fundamental tool for writing useful programs in functional languages.
When viewed from a logical perspective via the Curry-Howard correspondence, well-founded recursion
corresponds to inductive reasoning. Dually, well-founded corecursion
corresponds to coinductive reasoning. However, concentrating only on
well-founded (co)recursive definitions is not sufficient to support
the encoding of meta-theoretic proofs. There are two missing ingredients:
1) To express fine-grained properties we often rely on first-order
logic which is analogous to  \emph{indexed types} in programming
languages. 2) Many common notions cannot be directly characterized by
well-founded (co)recursive definitions. An example is Girard's notion of
reducibility for functions: a term $M$ is reducible at type $A \rightarrow B$
if, for all terms $N$ that are reducible at type $A$, we have that $M~N$ is
reducible at type $B$.
This definition is well-founded because it is by structural recursion
on the type indices ($A$ and $B$), so we want to admit such definitions.


Our contribution in this paper is a core language called \tores\/ that features indexed types and
(co)inductive reasoning via well-founded (co)recursion.
The primary forms of types are \emph{indexed (co)recursive types}, over which we support reasoning via
Mendler-style (co)recursion.
Additionally, \tores\/ features \emph{index-stratified types}, which allow further definitions of
types via well-founded recursion over indices.
The main difference between the two forms is that (co)recursive types are more flexible, allowing
(co)induction, while stratified types only support unfolding based on their indices.
The combination of the two features is especially powerful for formalizing metatheory involving
logical relations.
This is partly because type definitions in \tores\/ do not require positivity, a condition used in
other systems to ensure termination and in turn logical consistency.
Despite this, we are able to prove termination of \tores\/ programs using a semantic interpretation
of types.

How to justify definitions that are recursively defined on
a given index in addition to well-founded (co)recursive definitions
has been explored in proof theory (see for example
\citet{Tiu:IJCAR12,Baelde:LICS12}). While this line of work is more
general, it is also more complex and further from
standard programming practice. In dependent type theories, large
eliminations achieve the same. Our approach,
grounded in the Curry-Howard isomorphism, provides a complementary
perspective on this problem where we balance expressiveness and ease
of programming with a compact metatheory. We believe this may be
an advantage when considering more sophisticated index languages and
reasoning techniques.

The combination of indexed (co)recursive types and stratified types is already used in the programming
and proof environment \beluga\/, where the index language is an extension of the logical framework
LF together with first-class contexts and substitutions
\citep{Nanevski:ICML05,Pientka:POPL08,Cave:POPL12}.
This allows elegant implementations of proofs using logical relations
\citep{Cave:LFMTP13,Cave:LFMTP15} and normalization by evaluation \citep{Cave:POPL12}.
\tores\/ can be seen as small kernel into which we elaborate total \beluga\/ programs, thereby
providing post-hoc justification of viewing \beluga~programs as (co)inductive proofs. 

\section{Index Language for \tores}
\label{sec:tores-index}
The design of \tores\/ is parametric over an index language.
Following \citet{Thibodeau:ICFP16} we stay as abstract as possible and state the general conditions the index language must satisfy.
Whenever we require inspection of the particular index language, 
namely the structure of stratified types and induction terms, we will draw attention to it. 

To illustrate the required structure for a concrete index language, we use 
natural numbers.
In practice, however, we can consider other index languages such as those of strings,
types \citep{ChH03Pha,Xi:POPL03}, or (contextual) LF \citep{Pientka:POPL08,Cave:POPL12}.
It is important to note that, for most of our design, we accommodate a general index language up to
the complexity of Contextual LF.
Thus we treat index types and \tores\/ kinds as dependently typed\LONGVERSION{, although we use natural numbers
in stratified types and induction terms}.

\LONGVERSION{The abstract requirements of our index language are listed throughout this section.
To summarize them here, they are: decidable type checking, decidable equality, standard substitution
principles, decidable unification as well as sound and complete matching.
Implicitly, we also require that each index type intended for use in stratified types and induction
terms should have a well-founded recursion scheme, i.e. an induction principle.
For an index language of Contextual LF, for example, the recursion scheme can be generated using a
covering set of index terms for each index type \citep{Pientka:TLCA15}.
This inductive structure is necessary to show decidability of type checking
(\cref{thm:dec-typ-chk}) and termination (\cref{thm:term}) of \tores.
}{}

\subsection{General Structure}

We refer to a term in the index language as an \emph{index term} $M$, which may have an \emph{index type} $U$.
In the case of natural numbers, there is a single index type $\nat$, and index terms are
built from $0$, $\suc$, and variables $u$ which must be declared in an \emph{index context} $\Delta$.
\[
\begin{array}{lll@{~\qquad~}lll}
\mbox{Index types} & U &:= \nat & \mbox{Index contexts} & \Delta &:= \empt \sep \Delta, u \ann U\\
\mbox{Index terms} & M &:= 0 \sep \suc M \sep u 
&
\mbox{Index substitutions} & \Theta & := \empt \sep \Theta,\sub{M}{u} \\
\end{array}
\]
\tores\/ relies on typing for index terms which we give for natural
numbers in \cref{fig:index}. 
The equality judgment for natural numbers is given simply by reflexivity (syntactic equality).
We also give typing for index substitutions, which supply an index term for each index variable in
the domain $\Delta$ and describe well-formed contexts. These definitions are generic.

\begin{figureone}{
  Index language structure
  \label{fig:index}
 }
\[
\begin{array}{c}
\multicolumn{1}{l}{
  \fbox{$\gives \Delta \ictx$}
  \quad
  \mbox{Index context $\Delta$ is well-formed}
  \qquad
    \fbox{$\Delta \gives U \itype$}
  \quad
  \mbox{Index type $U$ is well-kinded}
}\\[0.75em]
\infer{\gives \empt \ictx}{} \quad
\infer{\gives \Delta, u \ann U \ictx}{\gives \Delta \ictx & \Delta \gives U \itype}
\qquad\qquad\qquad\qquad\qquad\qquad\quad
\infer{\Delta \gives \nat \itype}{}
\\[0.75em]
\multicolumn{1}{l}{
  \fbox{$\Delta \gives M \oft U$}
  \quad
  \mbox{Index term $M$ has index type $U$ in index context $\Delta$}
}\\[0.75em]
\infer{\Delta \gives u \oft U}{u \ann U \in \Delta}
\quad
\infer{\Delta \gives 0 \oft \nat}{}
\quad
\infer{\Delta \gives \suc M \oft \nat}{\Delta \gives M \oft \nat}
\\[0.75em]
\multicolumn{1}{l}{
  \fbox{$\Delta \gives M = N$}
  \quad
  \mbox{Index term $M$ is equal to $N$}
}\\[0.75em]
\infer{\Delta \gives M = M}{}
\\[0.75em]
\multicolumn{1}{l}{
  \fbox{$\Delta\pr \gives \Theta \oft \Delta$}
  \quad
  \mbox{Index substitution $\Theta$ maps index variables from $\Delta$ to $\Delta\pr$}
}\\[0.75em]
\infer{\Delta\pr \gives \empt \oft \empt}{}
\qquad
\infer{\Delta\pr \gives \Theta,\sub{M}{u} \oft \Delta,u \ann U}
      {\Delta\pr \gives \Theta \oft \Delta &
       \Delta\pr \gives M \oft U[\Theta]}
\end{array}
\]
 \end{figureone}

We require that both typing and equality of index terms be decidable in order for type checking of
\tores\/ programs to be decidable.

\begin{req}
\label{req:dec-idx-typ-chk}
  Index type checking is decidable.
\end{req}

\begin{req}
\label{req:dec-idx-eq}
  Index equality is decidable.
\end{req}


We can lift the kinding, typing, equality and matching rules to \emph{spines} of index
terms and types generically.
We write $\empt$ and $(\empt)$ for the empty spines of terms and types respectively.
If $M_0$ is an index term and $\many{M}$ is a spine, then $M_0, \many{M}$ is a spine.
Similarly if $u_0 \ann U_0$ is an index type assignment and $(\many{u \ann U})$ is a type spine,
then $(u_0 \ann U_0, \many{u \ann U})$ is a type spine.
Spines are convenient for setting up the types and terms of \tores.
Unlike index substitutions $\Theta$ which are built
from right to left, spines are built from left to right.

\LONGVERSION{Below we define well-kinded spines of index types and well-typed spines of index
terms, which are generic to the particular index language.
\[
  \begin{array}{c}
\multicolumn{1}{l}{
    \fbox{$\Delta \gives (\many{u \ann U}) \itype$} \qquad \mbox{Spine $\many{u \ann U}$ of index types is well-kinded}}\\[0.5em]
\infer{\Delta \gives (\empt) \itype}{} \quad
\infer{\Delta \gives {(u_0{:}U_0, \many{u:U})} \itype}
{\Delta \gives U_0 \itype &
 \Delta, u_0{:}U_0 \gives (\many{u:U}) \itype}
\\[1em]
\multicolumn{1}{l}{
  \fbox{$\Delta \gives \vec M \oft (\many{u \ann U})$}  \qquad \mbox{Spine $\vec M$ of index terms have index types $(\many{u \ann U})$}}\\[0.5em]
\infer{\Delta \gives \empt : (\empt)}{}
\qquad
\infer{\Delta \gives M_0, \vec M \oft (u_0{:}U_0, \many{u{:}U})}
{\Delta \gives M_0 \oft U_0 &
    \Delta \gives \vec M \oft (\many{u{:}U})[\sub{M_0}{u_0}] }
  \end{array}
\]

}{}
\LONGVERSION{
\begin{lemma}
\label{lem:dec-idx-spine-chk}
  Type checking of index spines is decidable.
\end{lemma}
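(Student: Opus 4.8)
The plan is to proceed by structural induction on the index-term spine $\vec M$. Because the two rules defining the spine typing judgment force the term spine and the type spine to be built in lockstep, they always have the same length, so this is equivalently an induction on the type spine $(\many{u \ann U})$. Given an index context $\Delta$, a term spine $\vec M$, and a type spine $(\many{u \ann U})$, the goal is to decide whether $\Delta \gives \vec M \oft (\many{u \ann U})$ holds.

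First I would note that the rules are syntax-directed in the shapes of $\vec M$ and $(\many{u \ann U})$ taken together, so at most one rule can conclude the judgment. If both spines are empty, the axiom applies and the answer is ``yes''; if exactly one of the two is empty, no rule applies and the answer is ``no''. In the remaining case $\vec M = M_0, \vec M'$ and $(\many{u \ann U}) = (u_0 \ann U_0, \many{u' \ann U'})$, the judgment holds precisely when both premises of the second rule hold: (i) $\Delta \gives M_0 \oft U_0$, and (ii) $\Delta \gives \vec M' \oft (\many{u' \ann U'})[\sub{M_0}{u_0}]$.

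Then I would discharge each premise. Premise (i) is decidable directly by \cref{req:dec-idx-typ-chk}. For premise (ii), the substituted type spine $(\many{u' \ann U'})[\sub{M_0}{u_0}]$ is produced by the standard substitution operation on index types, lifted pointwise to type spines, which is total and effective, so this type spine can be computed; the induction hypothesis applied to the strictly shorter spine $\vec M'$ then decides (ii). The decision for the original judgment is the conjunction of these two decisions.

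I do not expect any genuine obstacle here: the argument is routine once the induction measure — the finite length of the spine, which strictly decreases at each step — is fixed. The only point worth remarking on is that well-kindedness of the substituted type spine need not be re-checked, since the spine typing rules impose it as no side condition; were one to build that condition in instead, it could be decided alongside without difficulty.
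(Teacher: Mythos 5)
Your proposal is correct and is essentially the paper's own argument made explicit: the paper's proof simply appeals to decidability of type checking for single index terms (\cref{req:dec-idx-typ-chk}), with the structural induction on the spine and the pointwise substitution left implicit. Spelling out the induction on spine length and the syntax-directedness of the two rules is a faithful elaboration, not a different route.
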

\begin{proof}
  Simply rely on decidable type checking of single index terms (\cref{req:dec-idx-typ-chk}).
\end{proof}
}

\LONGVERSION{\subsection{Substitutions}}{}
Throughout our development we use both a single index substitution operation $M[N/u]$ and a
simultaneous substitution operation $M[\Theta]$. For composition of
simultaneous substitutions we write $\Theta_1[\Theta_2]$.
\LONGVERSION{
\begin{definition}[Composition of index substitutions]
\label{def:isubst-comp}
Suppose $\Delta_1 \gives \Theta_1 \oft \Delta$ and $\Delta_2 \gives \Theta_2 \oft \Delta_1$.
Then $\Delta_2 \gives \Theta_1[\Theta_2] \oft \Delta$ where
  \begin{align*}
    (\empt)[\Theta_2] &= \Theta_2 \\
    (\Theta_1\pr,\sub{M}{u})[\Theta_2] &= \Theta_1\pr[\Theta_2],\sub{M[\Theta_2]}{u}
  \end{align*}
\end{definition}
}{}

\LONGVERSION{We rely on standard properties of single and simultaneous
substitutions which we summarize below. These say that substitutions preserve typing (\ref{req:isubst-typ-sing} and \ref{req:isubst-typ-sim})
and equality (\ref{req:isubst-eq}) and are associative (\ref{req:isubst-comp}).
}
\begin{req}[Index substitution principles]
  \label{req:isubst}$\;$\\[-1.5em]
  \counterwithin{enumi}{req}
  \begin{enumerate}
    \item\label{req:isubst-typ-sing}
      If $\Delta_1, u \ann U\pr, \Delta_2 \gives M \oft U$ and $\Delta_1 \gives N \oft U\pr$ then
      $\Delta_1, \Delta_2[\sub{N}{u}] \gives M[\sub{N}{u}] \oft U[\sub{N}{u}]$.
    \item\label{req:isubst-typ-sim}
      If $\Delta\pr \gives \Theta \oft \Delta$ and $\Delta \gives M \oft U$ then
      $\Delta\pr \gives M[\Theta] \oft U[\Theta]$.
    \item\label{req:isubst-eq}
      If $\Delta\pr \gives \Theta \oft \Delta$ and $\Delta \gives M = N$ then
      $\Delta\pr \gives M[\Theta] = N[\Theta]$.
    \item\label{req:isubst-comp}
      If $\Delta \gives M \oft U$ and
      $\Delta_1 \gives \Theta_1 \oft \Delta$ and
      $\Delta_2 \gives \Theta_2 \oft \Delta_1$ then
      $M[\Theta_1][\Theta_2] = M[\Theta_1[\Theta_2]]$.
  \end{enumerate}
\end{req}

\subsection{Unification and Matching}
\label{sec:index-unif}

Type checking of \tores\/ relies on a unification procedure to generate a \emph{most
general unifier (MGU)}.
A \emph{unifier} for index terms $M$ and $N$ in a context $\Delta$ is a substitution $\Theta$ which
transforms $M$ and $N$ into syntactically equal terms in another context $\Delta\pr$.
That is, $\Delta\pr \gives \Theta \oft \Delta$ and $\Delta\pr \gives M[\Theta] = N[\Theta]$.
$\Theta$ is ``most general'' if it does not make more commitments to variables than absolutely
necessary.
A unifying substitution $\Theta$ only makes sense together with its range $\Delta\pr$, so we usually
write them as a pair $(\Delta\pr \mid \Theta)$.
In general, there may be more than one MGU for a particular unification problem, or none at all.
However, we require here that each problem has at most one MGU up to $\alpha$-equivalence.
We write the generation of an MGU using the judgment $\Delta \gives M \unif N \gen P$, where $P$ is
either the MGU $(\Delta\pr \mid \Theta)$ if it exists or $\fail$ representing that unification
failed. To illustrate, we show the unification rules for natural
numbers. We write $\id_i$ for the identity substitution that maps
index variables from $\Delta_i$ to themselves.
\[
\begin{array}{l}
\fbox{$\Delta \gives M \unif N \gen P$}~~\mbox{Index terms $M$ and $N$ have MGU $P$}\\[0.75em]
\infer{\Delta \gives 0 \unif 0 \gen (\Delta \mid \id)}{} \quad
\infer{\Delta \gives \suc M \unif \suc N \gen P}{\Delta \gives M \unif N \gen P} \quad
\infer{\Delta \gives u \unif u \gen (\Delta \mid \id)}{}
\\[1em]
\infer{\Delta \gives u \unif M \gen (\Delta' \mid \id_1,M/u,\id_2)}
      {u \notin \FV(M) & \Delta = \Delta_1, u \ann U, \Delta_2 & \Delta' = \Delta_1, \Delta_2[M/u]} \quad
  (\text{same for } M \unif u)
\\[1em]
\infer{\Delta \gives 0 \unif \suc M \gen \fail}{} \quad
\infer{\Delta \gives \suc M \unif 0 \gen \fail}{} \quad
\infer{\Delta \gives u \unif M \gen \fail}{u \in \FV(M) & M \neq u} \quad
  (\text{same for } M \unif u)
\end{array}
\]
\LONGVERSION{The unification procedure is required for type checking the equality elimination forms
$\eqelim{s}{\Theta}{\Delta\pr}{t}$ and $\eqelimfalse{s}$ in \tores, which we explain in
\cref{sec:tores-terms}.
In each form, the term $s$ is a witness of an index equality $\Delta \gives M = N$.
In order to use this equality (or determine that it is spurious), we perform unification
$\Delta \gives M \unif N \gen P$ and check that the result $P$ matches the source term.
For the term $\eqelim{s}{\Theta}{\Delta\pr}{t}$, we check that $P$ is an $\alpha$-equivalent unifier
to the provided one $(\Delta\pr \mid \Theta)$.
For the failure term $\eqelimfalse{s}$ we check that $P$ is $\fail$, yielding a contradiction.
Hence our type checking algorithm for \tores\/ relies on a sound and complete unification procedure.
We summarize our requirements for unification below.
}{}
\begin{req}[Decidable unification]
\label{req:dec-unif}
  Given index terms $M$ and $N$ in a context $\Delta$, the judgment $\Delta \gives M \unif N \gen P$
  is decidable.
  Either $P$ is $(\Delta\pr \mid \Theta)$, the unique MGU up to $\alpha$-equivalence, or $P$ is
  $\fail$ and there is no unifier.
\end{req}

Finally, our operational semantics relies on index \emph{matching}.
This is an asymmetric form of unification: given terms $M$ in $\Delta$ and $N$ in $\Delta\pr$,
matching identifies a substitution $\Theta$ such that $\Delta\pr \gives M[\Theta] = N$.
We describe it using the judgment $\Delta \gives M \match N \synth (\Delta\pr \mid \Theta)$.
\LONGVERSION{

Matching is used during evaluation of the equality elimination $\eqelim{s}{\Theta}{\Delta\pr}{t}$ to
extend the substitution $\Theta$ to a full index environment (grounding substitution) $\theta$.
To achieve this, we must lift the notion of matching to the level of index substitutions.
This can be done generically given an algorithm for matching index terms.
The judgment $\Delta \gives \Theta_1 \match \Theta_2 \synth (\Delta\pr \mid \Theta)$ says that
matching discovered a substitution $\Theta$ such that $\Delta\pr \gives \Theta_1[\Theta] = \Theta_2$.

To illustrate an algorithm for index matching, we provide the rules for our natural number domain in
\cref{fig:matching}.
We also show the generic lifting of the algorithm to match index substitutions.

\begin{figureone}{
  Index matching for natural numbers and generic substitutions
  \label{fig:matching}
}
\[
\begin{array}{c}
\multicolumn{1}{l}{
  \fbox{$\Delta \gives M \match N \synth (\Delta\pr \mid \Theta)$}
  \quad
  \mbox{Index term $M$ matches index term $N$ under $\Theta$}
}\\[0.75em]
\infer{\Delta \gives u \match N \synth (\Delta\pr \mid \id_{\Delta_1},\sub{N}{u},\id_{\Delta_2})}
      {\Delta = \Delta_1, u:\nat, \Delta_2 & \Delta\pr = \Delta_1, \Delta_2}
\\[0.75em]
\infer{\Delta \gives z \match z \synth (\Delta \mid \id_{\Delta})}{}
\quad
\infer{\Delta \gives \suc M \match \suc N \synth (\Delta\pr \mid \Theta)}
      {\Delta \gives M \match N \synth (\Delta\pr \mid \Theta)}
\\[0.75em]
\multicolumn{1}{l}{
  \fbox{$\Delta \gives \Theta_1 \match \Theta_2 \synth (\Delta\pr \mid \rho)$}\quad
  \parbox{8.5cm}{Index substitution $\Theta_1$ matches index substitution $\Theta_2$ under $\rho$}
}\\[0.75em]
\infer{\Delta \gives \empt \match \empt \synth (\Delta \mid \id_{\Delta})}{}
\quad
\infer{\Delta \gives (\Theta_1,M/u) \match (\Theta_2, N/u) \synth (\Delta_2 \mid \rho_1[\rho_2])}
      {\Delta \gives \Theta_1 \match \Theta_2 \synth (\Delta_1 \mid \rho_1) &
       \Delta_1 \gives M[\rho_1] \match N \synth (\Delta_2 \mid \rho_2)}
\end{array}
\]
\end{figureone}

We then require that index (substitution) matching is both sound and complete.
We make these properties precise in our final requirements below.
}{The notion of matching also lifts to the level of index substitutions.
We omit the full specifications here and instead state the required properties.
}
\begin{req}[Soundness of index matching]$\;$\\[-1.5em]
  \counterwithin{enumi}{req}
  \begin{enumerate}
    \item
      If $\Delta \gives M \oft U$ and $\Delta \gives M \match N \synth (\Delta\pr \sep \Theta)$ then
      $\Delta\pr \gives \Theta \oft \Delta$ and $\Delta\pr \gives M[\Theta] = N$.
    \item\label{thm:sound-match}
      If $\Delta_1 \gives \Theta_1 \oft \Delta$ and
      $\Delta_1 \gives \Theta_1 \match \Theta_2 \synth (\Delta_2 \sep \Theta)$
      then
      $\Delta_2 \gives \Theta \oft \Delta_1$ and
      $\Delta_2 \gives \Theta_1[\Theta] = \Theta_2$.
  \end{enumerate}
\end{req}

\begin{req}[Completeness of index matching]
\label{req:match-comp}
  Suppose $\gives \theta \oft \Delta$ and $\gives M[\theta] = N[\theta]$ and
  $\Delta \gives M \unif N \gen (\Delta\pr \mid \Theta)$.
  Then $\Delta\pr \gives \Theta \match \theta \synth (\empt \sep \theta\pr)$.
\end{req}

\section{Specification of \tores}
\label{sec:tores-spec}
We now describe \tores\/, a programming language designed to express
(co)inductive proofs and programs using Mendler-style (co)recursion.
It also features \emph{index-stratified} types, which allow
definitions of types via well-founded recursion over indices.

\subsection{Types and Kinds}
\label{sec:tores-types}

Besides unit, products and sums, \tores\/ includes a nonstandard function type
$\iarr{\many{u \ann U}}{T_1}{T_2}$, which combines a dependent function type and a simple function
type.
It binds a number of index variables $\many{u \ann U}$ which may appear in both $T_1$ and $T_2$.
If the spine of type declarations is empty then $\iarr {\cdot}{T_1}{T_2}$ degenerates to the simple
function space.
We can also quantify existentially over an index using the type $\sigt{u \ann U}{T}$, and have a
type for index equality $M = N$.
These two types are useful for expressing equality constraints on
indices.
We model (co)recursive and stratified types as type constructors of kind $\pik{\many{u \ann U}} \type$.
These introduce type variables $X$, which we track in the type variable context $\Xi$.
There is no positivity condition on recursive types, as the typing rules for Mendler-recursion
enforce termination without it.

A stratified type is defined by primitive recursion on an index term.
For the index type $\nat$, the two branches correspond to the two constructors $0$ and $\suc$.
Intuitively, $T_{\Rec} \App 0$ will behave like $T_0$ and $T_{\Rec} \App (\suc M)$ will behave like
$T_s[\sub{M}{u},\sub{T_{\Rec} \App M}{X}]$.
For richer index languages such as Contextual LF we can generate an appropriate recursion scheme
following \citet{Pientka:TLCA15}.
\[
\begin{array}{ll@{~}r@{~}l}
\mbox{Kinds} & K &\bnfas& \type \sep \pik{u \ann U} K \\
\mbox{Types} & T &\bnfas& \Unit \sep T_1 \cross T_2 \sep T_1 \tsum T_2 \sep \iarr{\many{u \ann U}}{T_1}{T_2}
                      \sep \sigt{u \ann U}{T} \sep M = N \\
            && \sep & T \App M\sep \Lam u T \sep X \sep \rect{X \ann K} T \sep \corect{X \ann K} T \sep T_{\Rec} \\
\mbox{Stratified Types} & T_{\Rec} & \bnfas & \Rec_K (0 \mapsto T_0 \sep \suc u,\,X \mapsto T_s) \\
\mbox{Index Contexts}~& \Delta & \bnfas & \cdot \sep \Delta, u \ann U \\
\mbox{Type Var. Contexts} & \Xi &\bnfas& \empt \sep \Xi, X \ann K \\
\mbox{Typing Contexts}~& \Gamma & \bnfas & \cdot \sep \Gamma, x \ann T
\end{array}
\]
\begin{example}\label{ex:vec-types}
  We illustrate indexed recursive types and stratified types using vectors, i.e. lists indexed by their
  length, with elements of type $A$.
  Vectors are of kind $\Pi n{:}\nat.\type$.
  We omit the kind annotation for better readability in the subsequent type definitions.
  One way to define vectors is with an indexed recursive type, an explicit equality and an existential type:
  $ \Vect_{\mu} \equiv \rect{V} \Lam{n}    n = 0 \tsum \sigt{m \ann \nat} n = \suc m \cross (A \cross V \App m)$.

  Alternatively, they can be defined as a stratified type: $\Vect_{S} \equiv \Recnat{\Unit}{m}{V}{A \cross V}{}$.
  In this case equality reasoning is implicit.
  While we have a choice how to define vectors, some types are only possible to encode using one form or the other.
\end{example}

\begin{example}
  A type that must be stratified is the encoding of reducibility for
  simply typed lambda terms.
  This example is explored in detail by \citet{Cave:LFMTP13};
  our work gives it theoretical justification.

  Here the index objects are the simple types, \texttt{unit} and $\mathtt{arr}~A~B$ of
  index type \texttt{tp}, as well as lambda terms (), $\mathtt{lam}~x.M$ and
  $\mathtt{app}~M~N$ of index type \texttt{tm}.
  We can define reducibility as a stratified type of kind
  $\Pi a{:}\mathtt{tp}.\Pi m{:}\mathtt{tm}.\type$.
  This relies on an indexed recursive type $\mathtt{Halt}$ (omitted here) that
  describes when a term $m$ steps to a value.
  \[
    \begin{array}{r@{~}l@{~}c@{~}l}
      \Red \equiv \Rec~( & \mathtt{unit} & \mapsto & \Lambda m. \mathtt{Halt}~m  \\
      \sep   & \mathtt{arr}~a~b, R_a, R_b & \mapsto & \Lambda m. \mathtt{Halt}~m \times \iarr{n\ann\mathtt{tm }}{ R_a~n}{R_b~(\mathtt{app}~m~n)} ~)
    \end{array}
  \]
\end{example}

\begin{example}
  To illustrate a corecursive type, we define an indexed stream of bits following
  \citet{Thibodeau:ICFP16}.
  The index here guarantees that we are reading exactly $m$ bits.
  Once $m = 0$, we read a new message consisting of the length of the message $n$ together with a stream indexed by $n$.
  In contrast to the recursive type definition for vectors, here the equality constraints guard the observations we can make about a stream. 
  \[
    \begin{array}{r@{\;}r@{~}l@{~}l}
      \Stream \equiv \corect{\text{Str}} \Lam m & \iarr{\empt}{m =& 0 &}{\sit{n \ann \nat}{\text{Str}\;n}} \\
        \cross & \iarr{n \ann \nat}{m =& \suc n &}{\text{Str}\;n} \\
        \cross & \iarr{n \ann \nat}{m =& \suc n &}{\text{Bit}} \\
    \end{array}
  \] 
\end{example}

\subsection{Terms}
\label{sec:tores-terms}

\tores\/ contains many common constructs found in functional programming languages, such as unit,
pairs and case expressions.
We focus on the less standard constructs: indexed functions, equality witnesses, well-founded
recursion and index induction.

\[
\begin{array}{l@{~}r@{~}l}
  \mbox{Terms}~t,s & \bnfas & x \sep \unit \sep \ilam{\vec u}{x} t \sep t \app \vec M \app s \sep \pair{t_1}{t_2} \sep \psplit{s}{x_1}{x_2} t \\
  &\sep& \inj{i} t \sep (\case t \inj{1}\,x_1 \mapsto t_1 \sep \inj{2}\,x_2 \mapsto t_2) \\
  &\sep & \pack{M}{t} \sep \unpack{t}{u}{x} s \\
  &\sep & \refl \sep \eqelim{s}{\Theta}{\Delta}{t} \sep \eqelimfalse s\\
  &\sep & \inj{\mu} t \sep \rec f t \sep \corec f t \sep \out{\nu} t \sep \inj{l} t \sep \out{l} t \sep \indnat{t_0}{u}{f}{t_s} \sep t \ann T
\end{array}
\]

Since we combine the dependent and simple function types in $\iarr{\many{u \ann U}}{T_1}{T_2}$, we
similarly combine abstraction over index variables $\vec u$ and a term variable $x$ in our function
term $\ilam{\vec u}{x} t$.
The corresponding application form is $t \app \vec M \app s$.
The term $t$ of function type $\iarr{\many{u \ann U}}{T_1}{T_2}$ receives first a spine $\vec M$ of index objects followed by a term $s$.
Each equality type $M = N$ has at most one inhabitant $\refl$ witnessing the equality.
There are two elimination forms for equality: the term $\eqelim{s}{\Theta}{\Delta}{t}$ uses an
equality proof $s$ for $M = N$ together with a unifier $\Theta$ to refine the body $t$ in a new
index context $\Delta$.
It may also be the case that the equality witness $s$ is false, in which case we have reached a
contradiction and abort using the term $\eqelimfalse s$.
Both forms are necessary to make use of equality constraints that arise from indexed type
definitions and to show that some cases are impossible.

Recursive types are introduced by the ``fold'' syntax $\inj{\mu}$,
and stratified types are introduced by $\inj{l}$ terms.
Here $l$ ranges over constructors in the index language such as 
$0$ and $\suc$.
The important difference is how we eliminate recursive and stratified types.
We can analyze data defined by a recursive type using Mendler-style recursion $\rec f t$.
This gives a powerful means of recursion while still ensuring termination.
Stratified types can only be unfolded using  $\out{l}$ according to the index.
To take full advantage of stratified types, we also allow programmers to use well-founded recursion
over index objects, writing $\indnat{t_0}{u}{f}{t_s}$. Intuitively, if
the index object is $0$, then we pick  the first branch and execute
$t_0$; if the index object is $\suc M$ then we pick the second branch
instantiating $u$ with $M$ and allowing recursive calls $f$ inside
$t_s$. While this induction principle is specific to natural numbers,
it can also be derived for other index domains, in particular contextual LF (see \citet{Pientka:TLCA15}).

\begin{example}
\label{ex:copy-rec}
  Recall that vectors can be defined using the indexed recursive type
  $\Vect_\mu$ or the stratified type $\Vect_S$. 
Which definition we choose impacts how we write programs that analyze
vectors. We show the difference using a recursive function that copies a vector.
%
  \[
    \begin{array}{lcl}
      \multicolumn{3}{l}{copy\oft \iarr{n \ann \nat}{\Vect_\mu n}{\Vect_\mu n} \equiv \rec f \ilam{n}{v} \case v }\\
      \qquad \sep \inj{1} z & \mapsto & \inj{\mu} (\inj{1} z) \\
      \qquad \sep \inj{2} s & \mapsto & \unpack{s}{m}{p} \\
      \qquad &  & \psplit{p}{e}{p\pr} \\
      \qquad &  & \psplit{p\pr}{h}{t} \\
      \qquad &  &\inj{\mu} (\inj{2} (\pack{m}{\pair{e}{\pair{h}{f \app m \app t}}}))
    \end{array}
  \]
To analyze the recursively defined vector, we use recursion and case
analysis of the input vector to reconstruct the output vector. If we receive a non-empty list, we take it apart and expose the equality proofs, before
  reassembling the list.
  The recursion is valid according to the Mendler typing rule since the recursive call to
  $f$ is made on the tail of the input vector.
  \LONGVERSION{The program is fairly verbose due to the need to unpack the $\Sigma$-type and to split pairs.
  We also need to inject values into the recursive type using the $\inj{\mu}$ tag. In general, we may also need to reason explicitly with equality constraints.}

 To contrast we show the program using induction on natural numbers and unfolding the stratified type
  definition of $\Vect_S$.
  Note that the first argument is the natural number index $n$ paired with a unit term argument,
  since index abstraction is always combined with term abstraction.
  The program analyzes $n$ and in the $\suc$ case unfolds the input vector before
  reconstructing it using the result of the recursive call.
  In this version of $copy$ the equality constraints are handled silently by the type checker.
\[
  \begin{array}{r@{~}lcl}
    \multicolumn{4}{l}{ copy \oft \iarr{n \ann \nat}{\Unit}{\iarr{\empt}{\Vect_S n}{\Vect_S n}} \equiv} \\
\ind (            &                0 & \mapsto & \lam v \inj{0} \unit \\
\qquad \sep &\suc m,\, f_m & \mapsto & \lam v \psplit{(\out{\suc} v)}{h}{t} ~\;\inj{\suc} \pair{h}{f_m \app t})
  \end{array}
\]
\end{example}
\LONGVERSION{
\begin{example}
  Let us now build streams using Mendler-style corecursion.  Streams
  of natural numbers are defined as $\corect{X \ann K} \nat \times
  X$. We can define a stream of natural numbers starting from $n$ as:
  \[
   natsFrom \oft \nat \to \Stream \equiv \corec f {\lam n \pair{n}{f\ (\suc\ n)}}
  \]
  Here, the corecursion constructor takes a function whose argument is
  the seed used to build the stream and produces the pair of the head
  and the tail of our stream. In this case, the seed is simply the
  natural number corresponding to the current head of the stream.
  As we move to the tail, we simply use the corecursive call made available
  by $f$ on the successor of the seed, as the next element of the stream
  will be this new number.

  Let us now a second example: the stream of Fibonacci
  numbers.
  \[
  fibFrom \oft \nat \times \nat \to \Stream \equiv
  \corec f {\lam s \psplit{s}{m}{n} \pair{m}{f\ \pair{n}{m + n}}}
  \]
  In order to build the Fibonacci stream, we use as a seed a pair of
  two numbers corresponding of the last two Fibonacci numbers that
  have been computed so far. From there, the head of the stream is
  simply the first number of the pair, while the new seed is simply
  the second number together with the sum of the two, representing the
  second and third number, respectively.  Hence, to obtain the whole
  Fibonacci stream, we simply write $fibFrom\ \pair 0 1$.
\end{example}}{}

\begin{example}
  Note that \tores\/ does not have an explicit notion of falsehood.
  This is because it is definable using existing constructs: we can define the empty type as
  a recursive type $\bot \equiv \rect{X \ann \type} X$, and a contradiction term
  $\abort \equiv \rec f f \oft \bot \to C$, for any type $C$.
  Our termination result with the logical relation in \cref{sec:typ-interp} shows
  that the $\bot$ type contains no values and hence no closed terms, which implies logical consistency
  of \tores\/ (not all propositions can be proven).
\end{example}

\subsection{Typing Rules}
\label{sec:tores-typing}

We define a bidirectional type system in \cref{fig:tores-typing}\LONGVERSION{~with two mutually defined judgments:
checking a term $t$ against a type $T$ and synthesizing a type $T$ for a term $t$.}{.}
\LONGVERSION{We can move from checking to synthesis via the conversion
  rule and from synthesis to checking using a type annotation. The typing rules for unit, products, sums and
  existentials are standard.}{We focus here on equality, recursive and stratified types.}

\begin{figureone}{
  Typing rules for \tores
  \label{fig:tores-typing}
}
\begin{small}
\[
\begin{array}{l}
\fbox{$\Delta;\Xi;\Gamma \gives t \chk T$}\qquad\mbox{Term $t$ checks against input type $T$}
\\[0.75em]
\infer{\Delta;\Xi;\Gamma \gives \pair{t_1}{t_2} \chk T_1 \times T_2}
      {\Delta;\Xi;\Gamma \gives t_1 \chk T_1 & \Delta;\Xi;\Gamma \gives t_2 \chk T_2}
\quad
\infer{\Delta;\Xi;\Gamma \gives \psplit{p}{x_1}{x_2} t \chk T}
      {\Delta;\Xi;\Gamma \gives p \syn T_1 \times T_2 &
       \Delta;\Xi;\Gamma,x_1 \ann T_1,x_2 \ann T_2 \gives t \chk T}
\\[0.75em]
\infer{\Delta;\Xi;\Gamma \gives \inj{i} t \chk T_1 \tsum T_2}{\Delta;\Xi;\Gamma \gives t \chk T_i}
\quad
\infer{\Delta;\Xi;\Gamma \gives (\case t \inj{1} x_1 \mapsto t_1 \sep \inj{2} x_2 \mapsto t_2) \chk S}
      {\Delta;\Xi;\Gamma \gives t \syn T_1 \tsum T_2 &
       \Delta;\Xi;\Gamma,x_1 \ann T_1 \gives t_1 \chk S &
       \Delta;\Xi;\Gamma,x_2 \ann T_2 \gives t_2 \chk S}
\\[0.75em]
\infer{\Delta;\Xi;\Gamma \gives \pack{M}{t} \chk \sigt{u \ann U} T}
      {\Delta \gives M \oft U & \Delta;\Xi;\Gamma \gives t \chk T[\sub{M}{u}]}
\quad
\infer{\Delta;\Xi;\Gamma \gives \unpack{t}{u}{x} s \chk S}
      {\Delta;\Xi;\Gamma \gives t \syn \sigt{u \ann U}{T} & \Delta,u \ann U;\Xi;\Gamma,x \ann T \gives s \chk S}
\\[0.75em]
\infer{\Delta;\Xi;\Gamma \gives \refl \chk M = N}{\Delta \gives M = N}
\quad
\infer{\Delta;\Xi;\Gamma \gives \eqelimfalse s \chk T}
      {\Delta;\Xi;\Gamma \gives s \syn M = N &
       \Delta \gives M \unif N \gen \fail}
\\[0.75em]
\infer{\Delta;\Xi;\Gamma \gives \eqelim{s}{\Theta}{\Delta\pr}{t} \chk T}
      {\Delta;\Xi;\Gamma \gives s \syn M = N &
       \Delta \gives M \unif N \gen (\Delta\pr \mid \Theta) &
       \Delta\pr;\Xi[\Theta];\Gamma[\Theta] \gives t \chk T[\Theta]}
\\[0.75em]
\infer{\Delta;\Xi;\Gamma \gives \inj{\mu} t \chk (\rect{X \ann K} \Lam{\vec u} T) \App \vec M}
      {\Delta;\Xi;\Gamma \gives t \chk T[\many{\sub{M}{u}};\sub{\rect{X \ann K} \Lam{\vec u} T}{X}]}
\quad
\infer{\Delta;\Xi;\Gamma \gives \rec f t \chk \iarr{\many{u \ann U}}{(\rect{X \ann K} \Lam{\many{v}} S) \App \vec{u}}{T}}
      {\Delta;\Xi,X \ann K;\Gamma,f \ann (\iarr{\many{u \ann U}}{X \vec{u}}{T})
       \gives t \chk \iarr{\many{u \ann U}}{S[\many{\sub{u}{v}}]}{T}}
\\[0.75em]
\infer{\Delta;\Xi;\Gamma \gives \corec f t \chk \iarr{\many{u \ann U}}{S}{(\corect{X \ann K} \Lam{\many{v}} T) \App \vec{u}}}
      {\Delta;\Xi,X \ann K;\Gamma,f \ann (\iarr{\many{u \ann U}}{S}{X \vec{u}})
        \gives t \chk \iarr{\many{u \ann U}}{S}{T[\many{\sub{u}{v}}]}}
\quad
\infer{\Delta;\Xi;\Gamma \gives \ilam{\vec u}{x} t \chk \iarr{\many{u \ann U}}{S}{T}}
      {\Delta,\many{u \ann U};\Xi;\Gamma, x \ann S \gives t \chk T}
\\[0.75em]
\infer{\Delta;\Xi;\Gamma \gives \indnat{t_0}{u}{f}{t_s} \chk \iarr{u \ann \nat}{\Unit}{T}}
      {\Delta;\Xi;\Gamma \gives t_0 \chk T[\sub{0}{u}] &
        \Delta,u \ann \nat;\Xi;\Gamma,f \ann T \gives t_s \chk T[\sub{\suc u}{u}]}
\quad
\infer{\Delta;\Xi;\Gamma \gives \unit \chk 1}{}
\\[0.75em]
\infer{\Delta;\Xi;\Gamma \gives \inj{0} t \chk T_{\Rec} \App 0 \App \vec M}
      {\Delta;\Xi;\Gamma \gives t \chk T_0 \App \vec M
      }
\quad
\infer
      {\Delta;\Xi;\Gamma \gives \inj{\suc} t \chk T_{\Rec} \App (\suc N) \App \vec M}
      {\Delta;\Xi;\Gamma \gives t \chk T_s[\sub{N}{u};\sub{(T_{\Rec} \App N)}{X}] \App \vec M
      }
\quad
\infer{\Delta;\Xi;\Gamma \gives t \chk T}{\Delta;\Xi;\Gamma \gives t \syn T}
\\[0.75em]
\fbox{$\Delta;\Xi;\Gamma \gives t \syn T$}\qquad\mbox{Term $t$ synthesizes output type $T$}
\\[0.75em]
\infer{\Delta;\Xi;\Gamma \gives x \syn T}{x \ann T \in \Gamma}
\quad
\infer{\Delta;\Xi;\Gamma \gives t \app \vec M \app s \syn T[\many{\sub{M}{u}}]}
      {\Delta;\Xi;\Gamma \gives t \syn \iarr{\many{u \ann U}}{S}{T} &
       \Delta \gives \vec{M} \oft (\many{u\ann U}) &
       \Delta;\Xi;\Gamma \gives s \chk S[\many{\sub{M}{u}}]}
\\[0.75em]
\infer{\Delta;\Xi;\Gamma \gives \out{0} t \syn T_0 \App \vec M}
      {\Delta;\Xi;\Gamma \gives t \syn T_{\Rec} \App 0 \App \vec M}
\quad
\infer{\Delta;\Xi;\Gamma \gives \out{\suc} t \syn T_s[\sub{N}{u};\sub{(T_{\Rec} \App N)}{X}] \App \vec M}
      {\Delta;\Xi;\Gamma \gives t \syn T_{\Rec} \App (\suc N) \App \vec M}
\qquad
\infer{\Delta;\Xi;\Gamma \gives t \ann T \syn T}{\Delta;\Xi;\Gamma \gives t \chk T}
\\[0.75em]
\infer{\Delta;\Xi;\Gamma \gives \out{\nu} t \syn T[\many{\sub{M}{u}};\sub{\corect{X \ann K} \Lam{\vec u} T}{X}]}
      {\Delta;\Xi;\Gamma \gives t \syn (\corect{X \ann K} \Lam{\vec u} T) \App \vec M}
\end{array}
\]
\end{small}
\end{figureone}

The introduction for an index equality type is simply $\refl$, which is checked via equality
in the index domain.
Both equality elimination forms rely on unification in the index domain (see \cref{sec:index-unif}).
Specifically, the $\eqelimfalse s$ term checks against any type because the unification must fail,
establishing a contradiction.
For the term $\eqelim{s}{\Theta}{\Delta\pr}{t}$, unification must result in the MGU which by
\cref{req:dec-unif} is $\alpha$-equivalent to the supplied unifier $(\Delta\pr \mid \Theta)$.
We then check the body $t$ using the new index context $\Delta\pr$ and $\Theta$ applied to the
contexts $\Xi$ and $\Gamma$ and the goal type $T$.

This treatment of equality elimination is similar to the use of refinement substitutions for
dependent pattern matching \citep{Pientka:PPDP08,Cave:POPL12}, and is inspired by equality
elimination in proof theory \citep{Tiu:JAL12,McDowell:TOCL02,SchroederHeister93}.
In the latter line of work, type checking involves trying all unifiers from a \emph{complete set of
unifiers} (which may be infinite!), instead of a single most general unifier.
We believe our requirement for a unique MGU is a practical choice for type checking.

Indexed recursive and stratified types are both introduced by injections ($\inj{\mu}$ and
$\inj{l}$), though their elimination forms are different.
Stratified types are eliminated (unfolded) in reverse to the corresponding fold rules.
For recursive types on the other hand, the naive unfold rules lead to nontermination, so we use a
Mendler-style recursion form $\rec f t$, generalizing the original formulation \citep{Mendler:PHD}
to an indexed type system.
The idea is to constrain the type of the function variable $f$ so that it can only be applied to
structurally smaller data. This is achieved by declaring $f$ of type
 $\iarr{\many{u\ann U}}{X~\vec u} T$ in the premise of the rule.
Here $X$ represents types exactly one constructor smaller than the recursive type,
so the use of $f$ is guaranteed to be well-founded.

\begin{theorem}
\label{thm:dec-typ-chk}
  Type checking of terms is decidable.
\end{theorem}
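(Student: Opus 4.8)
The plan is to prove decidability by exhibiting a terminating algorithm that follows the bidirectional structure of the typing rules in \cref{fig:tores-typing}. The two judgments $\Delta;\Xi;\Gamma \gives t \chk T$ and $\Delta;\Xi;\Gamma \gives t \syn T$ are syntax-directed on the term $t$: for each term former there is exactly one applicable rule in each mode, so the algorithm recurses on the structure of $t$, with the synthesis mode computing an output type and the checking mode consuming a given input type. The key observations making this work are that (i) every premise mentions a strictly smaller term, so the recursion is well-founded on term size; (ii) all side conditions are decidable by the requirements on the index language; and (iii) the implicit equality checks on types (needed at the conversion rule $t \syn T$ used where $t \chk T$ is wanted, and wherever a synthesized type must match an expected shape) reduce to decidable index equality.

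\textbf{The main steps} I would carry out are as follows. First, I would make precise the measure: termination is by induction on the size of the subject term $t$, since in every rule the terms in the premises are proper subterms of the conclusion's subject. Second, for each typing rule I would check that its side conditions are decidable: index typing $\Delta \gives M \oft U$ and spine typing are decidable by \cref{req:dec-idx-typ-chk} (and \cref{lem:dec-idx-spine-chk}); index equality $\Delta \gives M = N$, used for $\refl$, is decidable by \cref{req:dec-idx-eq}; and the unification premises in the two equality-elimination rules, $\Delta \gives M \unif N \gen \fail$ and $\Delta \gives M \unif N \gen (\Delta\pr \mid \Theta)$, are decidable by \cref{req:dec-unif}, which moreover guarantees the MGU is unique up to $\alpha$-equivalence so the check against the user-supplied $(\Delta\pr \mid \Theta)$ is a decidable $\alpha$-equivalence test. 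Third, I would handle the non-syntax-directed aspects: when a term $t$ is checked against $T$ but the only applicable rule is in synthesis mode, we synthesize $T'$ and test $T \aeq T'$ (up to the decidable index equality lifted to types); likewise, eliminators such as $\psplit{\cdot}{\cdot}{\cdot}$, $\case{\cdot}$, $\unpack{\cdot}{\cdot}{\cdot}$, $\out{0}$, $\out{\suc}$, $\out{\nu}$, and application first synthesize a type for the scrutinee/head and then pattern-match it against the required shape ($T_1 \times T_2$, $T_1 \tsum T_2$, $\sigt{u\ann U}{T}$, $T_{\Rec}\App 0\App\vec M$, $T_{\Rec}\App(\suc N)\App\vec M$, $(\corect{X\ann K}\Lam{\vec u}T)\App\vec M$, $\iarr{\many{u\ann U}}{S}{T}$), a decidable structural test on the type. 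Fourth, the substitutions appearing in the rules ($T[\Theta]$, $T[\sub{M}{u}]$, the unfoldings of recursive and stratified types) are computable, so building the types passed to recursive calls is effective.

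\textbf{The main obstacle} I expect is not termination of the typing algorithm itself (that follows from subterm recursion) but rather confirming that all the auxiliary notions the rules invoke are genuinely decidable and well-founded — in particular the equality of \tores\/ types, which must be decided whenever the conversion rule fires or an eliminator inspects a synthesized type. Type equality here is essentially syntactic up to index equality together with the definitional unfolding of stratified types $T_{\Rec}$, so I would need to argue this unfolding relation is terminating: an application $T_{\Rec}\App M$ unfolds according to whether $M$ is $0$ or $\suc N$, strictly decreasing the index, which is why the requirement of a well-founded recursion scheme on index types (noted in the long-version summary and discussed for Contextual LF via \citet{Pientka:TLCA15}) is precisely what is needed. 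Assuming that, type equality is decidable and the whole algorithm terminates, giving decidability of both judgments and hence of type checking of \tores\/ terms.
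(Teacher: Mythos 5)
Your proposal is correct and takes essentially the same route as the paper's own (much terser) proof, which simply observes that the rules are syntax-directed and that the algorithm relies on decidable index-level type checking (\cref{req:dec-idx-typ-chk}), equality (\cref{req:dec-idx-eq}) and unification (\cref{req:dec-unif}); you merely spell out the subterm-recursion measure and the shape-matching/conversion checks that the paper leaves implicit. Your one worry about termination of stratified-type unfolding during type equality is in fact moot in \tores: types are never reduced when compared, since unfolding of $T_{\Rec}$ happens only inside the typing rules for $\inj{l}$ and $\out{l}$, so type equality stays syntactic up to decidable index equality.
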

\begin{proof}
  Since the typing rules are syntax directed, it is straight-forward to extract a type checking
  algorithm.
  Note that the algorithm relies on decidability of judgments in the index language, namely
  index type checking (\cref{req:dec-idx-typ-chk}), equality (\cref{req:dec-idx-eq}) and
  unification (\cref{req:dec-unif}).
\end{proof}

\newcommand{\close}[3]{(#1)[#2;#3]}
\newcommand{\msub}[2]{#2[#1]}
\newcommand{\mor}{~\text{or}~}

\subsection{Operational Semantics}
\label{sec:tores-eval}

We define a big-step operational semantics using environments, which provide closed values for the
free variables that may occur in a term.
%
\[
\begin{array}{lll}
  \text{Term environments} & \sigma &:= \empt \sep \sigma, v/x \\
  \text{Function values} & g &:= \ilam{\vec u}{x} t \sep \rec f t  \sep \corec f t \sep \indnat{t_0}{u}{f}{t_s} \\
  \text{Closures} & c &:= (g)[\theta;\sigma] \sep (\corec f t)[\theta;\sigma] \cdot \vec N\ v \\
  \text{Values} & v &:= c \sep \unit \sep \pair{v_1}{v_2} \sep \inj{i} v \sep \pack{M}{v} \sep \refl \sep \inj{\mu} v \sep \inj{l} v \\
\end{array}
\]
Values consist of unit, pairs, injections, reflexivity, and closures.
Typing for values and environments, which is used to state the subject
reduction theorem, are given \LONGVERSION{in \cref{fig:valtyping} in the appendix}{in the appendix}.

The main evaluation judgment, $t[\theta;\sigma] \evto v$,
describes the evaluation of a term $t$ under environments $\theta;\sigma$ to
a value $v$. Here, $t$ stands for a term in an index context $\Delta$
and term variable context $\Gamma$. The index environment $\theta$
provides closed index objects for all the index variables in $\Delta$, while
$\sigma$ provides closed values for all the variables declared in
$\Gamma$, i.e. $\gives \theta \oft \Delta$ and $\sigma \oft \Gamma[\theta]$.
For convenience, we factor out the application of a
closure $c$ to values $\vec N$ and $v$ resulting in a value $w$, using
a second judgment written $\appval{c}{\vec N}{v} \evto w$.
This allows us to treat application of functions (lambdas, recursion and
induction) uniformly. Similarly, we factor out the application of $\out{\nu}$
to a closure $c$ in an additional judgment written $\appout{c} \evto w$. This
simplifies the type interpretation used to prove termination.

\begin{figureone}{
  Big-step evaluation rules
  \label{fig:tores-eval}
}
\begin{small}
\[
\begin{array}{l}
\fbox{$\eval{t}{\theta;\sigma}{v}$}
\quad
\mbox{Term $t$ under environments $\theta$ and $\sigma$ evaluates to $v$}
\\[0.8em]
\infer{\eval{x}{\theta;\sigma}{v}}{\sigma(x) = v}
\quad
\infer{\eval{\unit}{\theta;\sigma}{\unit}}{}
\quad
\infer{\eval{\pair{t_1}{t_2}}{\theta;\sigma}{\pair{v_1}{v_2}}}{\eval{t_1}{\theta;\sigma}{v_1} & \eval{t_2}{\theta;\sigma}{v_2}}
\quad
\infer{\eval{(\psplit{t}{x_1}{x_2} s)}{\theta;\sigma}{v}}
      {\eval{t}{\theta;\sigma}{\pair{v_1}{v_2}} & \eval{s}{\theta;\sigma,\sub{v_1}{x_1},\sub{v_2}{x_2}}{v}}
\\[0.75em]
\infer{\eval{(\inj{i} t)}{\theta;\sigma}{\inj{i} v}}{\eval{t}{\theta;\sigma}{v}}
\quad
\infer{\eval{(\casetwo{t}{x_1}{t_1}{x_2}{t_2})}{\theta;\sigma}{v}}
      {\eval{t}{\theta;\sigma}{\inj{i} v'} & \eval{t_i}{\theta;\sigma,\sub{v'}{x_i}}{v}}
\quad
\infer{\eval{(t \ann T)}{\theta;\sigma}{v}}{\eval{t}{\theta;\sigma}{v}}
\\[0.75em]
\infer{\eval{(\pack{M}{t})}{\theta;\sigma}{\pack{M[\theta]}{v}}}{\eval{t}{\theta;\sigma}{v}}
\quad
\infer{\eval{(\unpack{t}{u}{x} s)}{\theta;\sigma}{v}}{\eval{t}{\theta;\sigma}{\pack{N}{v'}} & \eval{s}{\theta,\sub{N}{u};\sigma,\sub{v'}{x}}{v}}
\\[0.75em]
\infer{\eval{\refl}{\theta;\sigma}{\refl}}{}
\quad
\infer{\eval{(\eqelim{s}{\Theta}{\Delta}{t})}{\theta;\sigma} v}
      {\eval{s}{\theta;\sigma}{\refl} &
       \Delta \gives \Theta \match \theta \synth (\empt \sep \theta\pr) &
       \eval{t}{\theta\pr;\sigma}{v}}
\quad
\infer{\eval{(\inj{l} t)}{\theta;\sigma}{\inj{l} v}}{\eval{t}{\theta;\sigma}{v}}
\\[0.75em]
\infer{\eval{(\ilam{\vec u}{x}{t})}{\theta;\sigma}{(\ilam{\vec u}{x}{t})[\theta;\sigma]}}{}
\quad
\infer{\eval{(\rec f t)}{\theta;\sigma}{(\rec f t)[\theta;\sigma]}}{}
\quad
\infer{\eval{(\out{l} t)}{\theta;\sigma}{v}}{\eval{t}{\theta;\sigma}{\inj{l} v}}
\\[0.75em]
\infer{\eval{(\corec f t)}{\theta;\sigma}{(\corec f t)[\theta;\sigma]}}{}
\quad
\infer{\eval{(\out{\nu} t)}{\theta;\sigma}{w}}
	  {\eval{t}{\theta;\sigma}{c} & \appout{c} \evto w}
\\[0.75em]
\infer{\eval{(\indnat{t_0}{u}{f}{t_s})}{\theta;\sigma}{(\indnat{t_0}{u}{f}{t_s})[\theta;\sigma]}}{}
\quad
\infer{\eval{(t \app \vec M \app s)}{\theta;\sigma}{w}}{\eval{t}{\theta;\sigma}{c} & \eval{s}{\theta;\sigma}{v} & \appval{c}{\many{M[\theta]}}{v} \evto w}
\\[0.75em]
\fbox{$\appval{c}{\vec N}{v} \evto w$}
\quad
\mbox{Closure $c$ applied to values $\vec N$ and $v$ evaluates to $w$}
\\[0.8em]
\infer{\evalapp{(\ilam{\vec u}{x} t)[\theta;\sigma]}{\vec N}{v}{w}}
      {t[\theta,\many{\sub{N}{u}};\sigma,\sub{v}{x}] \evto w}
\quad
\infer{\evalapp{(\rec f t)[\theta;\sigma]}{\vec N}{(\inj{\mu} v)}{w}}
      {t[\theta;\sigma,\sub{(\rec f t)[\theta;\sigma]}{f}] \evto c & \evalapp{c}{\vec N}{v}{w}}
\\[0.75em]
\infer{\evalapp{(\corec f t)[\theta;\sigma]}{\vec N}{v}
               {(\corec f t)[\theta;\sigma] \cdot \vec N\ v}}{}

\\[0.75em]
\infer{\evalapp{(\indnat{t_0}{u}{f}{t_s})[\theta;\sigma]}{0}{\unit}{w}}{\eval{t_0}{\theta;\sigma}{w}}
\quad
\infer{\evalapp{(\indnat{t_0}{u}{f}{t_s})[\theta;\sigma]}{(\suc N)}{\unit}{w}}
      {\evalapp{(\indnat{t_0}{u}{f}{t_s})[\theta;\sigma]}{N}{\unit}{v} &
        \eval{t_s}{\theta,\sub{N}{u};\sigma,\sub{v}{f}}{w}}
\\[0.75em]
\fbox{$\appout{c} \evto w$}
\quad
\mbox{Closure $c$ applied to observation $\out{\nu}$ evaluates to $w$}
\\[0.8em]
\infer{\appout{((\corec f t)[\theta;\sigma] \cdot \vec N\ v)} \evto w}
      {t[\theta;\sigma,\sub{(\corec f t)[\theta;\sigma]}{f}] \evto c
        & \evalapp{c}{\vec N}{v}{w}}
\end{array}
\]
\end{small}
\end{figureone}

We only explain the evaluation rule for equality elimination $\eqelim{s}{\Theta}{\Delta}{t}$.
We first evaluate the equality witness $s$ under environments $\theta;\sigma$ to the value $\refl$.
This ensures that $\theta$ respects the index equality $M = N$ witnessed by $s$.
From type checking we know that $\Delta \gives M[\Theta] = N[\Theta]$: the key is how we extend
$\Theta$ at run-time to produce a new index environment $\theta\pr$ that is consistent with $\theta$.
This relies on sound and complete index substitution matching (see \cref{sec:index-unif}) to
generate $\theta\pr$ such that $\empt \gives \theta\pr \oft \Delta$ and
$\empt \gives \Theta[\theta\pr] = \theta$.
We can then evaluate the body $t$ under the new index environment $\theta\pr$ and the same term
environment $\sigma$ to produce a value $v$.

Notably absent is an evaluation rule for $\eqelimfalse t$.
This term is used in a branch of a case split that we know statically to be impossible.
Such branches are never reached at run time, so there is no need for an evaluation rule.
For example, consider a type-safe ``head'' function, which receives a nonempty vector as input.
As we write each branch of a case split explicitly, the empty list case must use $\eqelimfalse t$,
but is never executed.
We now state subject reduction for \tores.

\begin{restatable}[Subject Reduction]{theorem}{subred}$\;$\\[-1.5em]
\label{thm:sub-red}
  \begin{enumerate}
  \item
    If $\eval t {\theta;\sigma} v$ where
    $\Delta;\cdot ;\Gamma \gives t \chk T$ or $\Delta;\cdot;\Gamma \gives t \syn T$,
    and $\gives \theta : \Delta$ and $\sigma : \Gamma[\theta]$,
    then $v : T[\theta]$.
  \item
    If $\evalapp{g[\theta;\sigma]}{\vec N}{v}{w}$ where
    $\Delta;\empt;\Gamma \gives g \chk \iarr{\many{u\ann U}}{S}{T}$ and
    $\gives \theta : \Delta$ and $\sigma : \Gamma[\theta]$ and
    $\gives \vec N \oft (\many{u\ann U})[\theta]$ and
    $v : S[\theta,\many{N/u}]$,
    then $w: T[\theta,\many{N/u}]$.
  \item
    If $\appout{c} \evto w$ where $c : (\corect {X{:}K} \Lam {\vec u} T)\ \vec M$
    then $w : T[\many{\sub{M}{u}}; (\corect {X{:}K} \Lam {\vec u} T)/X]$.

  \end{enumerate}
\end{restatable}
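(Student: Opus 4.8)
The plan is to prove the three statements simultaneously by induction on the derivations of the evaluation judgments $\eval{t}{\theta;\sigma}{v}$, $\evalapp{c}{\vec N}{v}{w}$, and $\appout{c} \evto w$, exploiting their mutual definition. For each evaluation rule we invert the corresponding typing derivation (using that the typing rules are syntax-directed, modulo the conversion/annotation rules, which we handle by a preliminary inversion lemma stating that a well-typed term has a derivation ending in the rule matching its head constructor at a type equal to the given one). The index environment hypotheses $\gives \theta \oft \Delta$ together with \cref{req:isubst}\ref{req:isubst-typ-sim} let us push $\theta$ through types and contexts as needed, so at each step the induction hypothesis delivers a value typed at the appropriate substituted type.

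First I would dispatch the easy cases: variables (read off from $\sigma \oft \Gamma[\theta]$), $\unit$, pairs, injections, $\refl$, $\inj{\mu}$, $\inj{l}$, and the annotation rule all follow by direct inversion plus the IH. For $\psplit{}{}{}$, $\case{}$, and $\unpack{}{}{}$ we invert the typing rule, apply the IH to the scrutinee to learn the shape of its value (a pair, an injection, or a $\pack{}{}$), and then apply the IH to the branch in the extended environment; in the $\unpack$ case we additionally extend $\theta$ by $N/u$ and must check $\sigma,\sub{v'}{x} \oft (\Gamma,x\ann T)[\theta,\sub{N}{u}]$, which holds since $v' \oft T[\theta,\sub{N}{u}]$ from the IH. For application $t \app \vec M \app s$ we invert to get $t \syn \iarr{\many{u\ann U}}{S}{T}$ and $s \chk S[\many{\sub{M}{u}}]$, apply the IH to get a closure $c$ of that arrow type and a value $v \oft S[\many{M[\theta]}{u}]$, and then invoke statement (2) for $\evalapp{c}{\many{M[\theta]}}{v}{w}$; note $\gives \many{M[\theta]} \oft (\many{u\ann U})[\theta]$ follows from \cref{req:isubst}\ref{req:isubst-typ-sim}. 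The $\out{l}$ and $\out{\nu}$ rules are symmetric: invert the synthesis rule, apply the IH to see the value is $\inj{l} v$ or a corecursion closure $c$, and for $\out{\nu}$ invoke statement (3).

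For statement (2), the $\ilam$-closure case substitutes into the body and uses \cref{req:isubst}\ref{req:isubst-typ-sing} repeatedly to reconcile $T[\theta,\many{\sub{N}{u}}]$; the $\rec f t$ case is the heart of the Mendler argument: the closure is applied to $\inj{\mu} v$, we unfold the body $t$ once in an environment extended with $f$ bound to the recursion closure itself (typed at $\iarr{\many{u\ann U}}{X\,\vec u}{T}$ with $X$ instantiated to the one-constructor-smaller type), obtain a closure $c$, then recurse via (2) on $\evalapp{c}{\vec N}{v}{w}$ — crucially the value $v$ we feed has type $S[\cdots]$, which matches the premise of the Mendler typing rule exactly. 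The $\corec$ case just builds the delayed closure value $(\corec f t)[\theta;\sigma]\cdot\vec N\ v$, and we must show this value is typable at $(\corect{X\ann K}\Lam{\vec u}T)\App\vec M$; this needs the value-typing rules in the appendix to admit exactly such delayed-corecursion closures. The $\ind$ cases recurse on the predecessor (for $\suc N$) and then evaluate $t_s$ in the environment extended with $u\mapsto N$ and $f$ bound to the recursive result, using \cref{req:isubst}\ref{req:isubst-typ-sing} to line up $T[\sub{\suc u}{u}][\theta]$ with $T[\theta,\sub{\suc N}{u}]$. Statement (3), for $\appout{}$, unfolds the corecursion body with $f$ bound to the corec closure and invokes (2).

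The main obstacle is the equality-elimination rule $\eqelim{s}{\Theta}{\Delta\pr}{t}$. Here typing gives $s \syn M=N$, a most general unifier $\Delta\pr \gives \Theta \oft \Delta$ with $\Delta \gives M[\Theta]=N[\Theta]$, and $\Delta\pr;\Xi[\Theta];\Gamma[\Theta] \gives t \chk T[\Theta]$; evaluation gives $s\evto\refl$, $\Delta\pr\gives\Theta\match\theta\synth(\empt\sep\theta\pr)$, and $t[\theta\pr;\sigma]\evto v$. From subject reduction applied to $s$ (statement (1), inductively) and inversion of the $\refl$ typing rule we learn $\gives M[\theta]=N[\theta]$, so $\theta$ is a ground unifier of $M,N$ compatible with the MGU, and by \cref{req:match-comp} (completeness of matching) the matching $\Theta\match\theta$ indeed succeeds, while soundness of index matching gives $\gives\theta\pr\oft\Delta\pr$ and $\gives\Theta[\theta\pr]=\theta$. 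It then remains to check that $\sigma \oft \Gamma[\Theta][\theta\pr]$ and the goal type satisfies $T[\Theta][\theta\pr] = T[\theta]$; both follow by the associativity law \cref{req:isubst}\ref{req:isubst-comp} together with $\Theta[\theta\pr]=\theta$, after which the IH on $t$ yields $v \oft T[\Theta][\theta\pr] = T[\theta]$ as required. The delicacy is entirely in tracking these substitution composition equalities and in relying on the precise soundness/completeness statements for index matching; once those are marshalled, the case closes cleanly. There is no evaluation rule for $\eqelimfalse s$, so that typing case is vacuous for subject reduction.
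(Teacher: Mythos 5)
Your proposal follows essentially the same route as the paper's proof: mutual induction on the three evaluation judgments, syntax-directed typing inversion in each case, the index substitution principles (\cref{req:isubst}) to push $\theta$ and $\Theta$ through types and contexts, and soundness of index matching plus associativity of substitution composition to close the equality-elimination case, with the $\eqelimfalse$ case vacuous for lack of an evaluation rule. The only small deviation is your appeal to completeness of matching (\cref{req:match-comp}) to argue that matching succeeds --- this is unnecessary here since the matching judgment is already a premise of the evaluation rule (it is needed in the termination proof, not in subject reduction) --- but this is harmless and the argument is otherwise the paper's own.
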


\section{Termination Proof}
\label{sec:tores-proof}
We now describe our main technical result: termination of evaluation.
Our proof uses the logical predicate technique of \cite{Tait67} and \cite{Girard1972}.
We interpret each language construct (index types, kinds, types, etc.) into a semantic model of sets
and functions.

\subsection{Interpretation of Index Language}

We start with the interpretations for index types and spines.
In general, our index language may be dependently typed, as it is if we choose Contextual LF.
Hence our interpretation for index types $U$ must take into account an environment $\theta$
containing instantiations for index variables $u$.
Such an index environment $\theta$ is simply a grounding substitution
$\gives \theta \oft \Delta$.

\begin{definition}[Interpretation of index types $\sem U$ and index
  spines $\sem {\many{u \ann U}}$]\label{def:sem_spine}
\[
  \begin{array}{lcl}
   \sem{U}(\theta) & = & \{ M \sep \empt \gives M \oft U[\theta] \}\\[0.5em]
    \sem{(\empt)}(\theta) &=& \{ \empt \} \\
    \sem{(u_0 \ann U_0, \many{u \ann U})}(\theta)
    &=& \{ M_0,\vec M \sep M_0 \in \sem{U_0}(\theta), \vec M \in \sem{(\many{u \ann U})}(\theta,\sub{M_0}{u_0}) \}
  \end{array}
\]
\end{definition}

The interpretation of an index type $U$ under environment $\theta$ is the set of closed terms of
type $U[\theta]$.
The interpretation lifts to index spines $(\many{u \ann U})$.
%
%
With these definitions, the following lemma follows from the substitution principles of index terms
(\cref{req:isubst}).

\begin{lemma}[Interpretation of index substitution]$\;$\\[-1.5em]
  \counterwithin{enumi}{theorem}
  \begin{enumerate}
    \item\label{lem:sem_isubst}
      If $\Delta \gives M \oft U$ and $\gives \theta \oft \Delta$ then $M[\theta] \in \sem{U}(\theta)$.
    \item\label{lem:sem_isubst_spine}
      If $\Delta \gives \vec M \oft (\many{u \ann U})$ and $\gives \theta \oft \Delta$ then
      $\many{M[\theta]} \in \sem{(\many{u \ann U})}(\theta)$.
  \end{enumerate}
\end{lemma}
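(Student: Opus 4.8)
The plan is to obtain both statements as direct consequences of the index substitution principles of \cref{req:isubst}, with Part~1 being essentially a one-line unfolding and Part~2 an easy induction over spines that reuses Part~1. First I would record that $\gives \theta \oft \Delta$ means $\empt \gives \theta \oft \Delta$, i.e. $\theta$ is a grounding simultaneous substitution. For Part~1, apply the simultaneous substitution typing principle of \cref{req:isubst} to $\Delta \gives M \oft U$ and $\empt \gives \theta \oft \Delta$, obtaining $\empt \gives M[\theta] \oft U[\theta]$; since $\sem{U}(\theta) = \{ N \mid \empt \gives N \oft U[\theta] \}$ by \cref{def:sem_spine}, this is exactly $M[\theta] \in \sem{U}(\theta)$.

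For Part~2 I would induct on the length of the spine $\vec M$ (equivalently, on the spine typing derivation). In the base case $\vec M = \empt$ with spine type $(\empt)$, and $\empt[\theta] = \empt \in \{\empt\} = \sem{(\empt)}(\theta)$. In the step case $\vec M = M_0, \vec M'$ with spine type $(u_0 \ann U_0, \many{u \ann U})$; inverting the spine typing rule gives $\Delta \gives M_0 \oft U_0$ and $\Delta \gives \vec M' \oft (\many{u \ann U})[\sub{M_0}{u_0}]$. Part~1 gives $M_0[\theta] \in \sem{U_0}(\theta)$ and, in particular, $\empt \gives M_0[\theta] \oft U_0[\theta]$, so $\theta, \sub{M_0[\theta]}{u_0}$ is a grounding substitution for $\Delta, u_0 \ann U_0$. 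The induction hypothesis applied to $\Delta \gives \vec M' \oft (\many{u \ann U})[\sub{M_0}{u_0}]$ and $\gives \theta \oft \Delta$ yields $\vec M'[\theta] \in \sem{(\many{u \ann U})[\sub{M_0}{u_0}]}(\theta)$. To conclude $M_0[\theta], \vec M'[\theta] \in \sem{(u_0 \ann U_0, \many{u \ann U})}(\theta)$ via \cref{def:sem_spine}, it remains to identify $\sem{(\many{u \ann U})[\sub{M_0}{u_0}]}(\theta)$ with $\sem{(\many{u \ann U})}(\theta, \sub{M_0[\theta]}{u_0})$; unfolding \cref{def:sem_spine} on both sides, this reduces to the index-level identity $V[\sub{M_0}{u_0}][\theta] = V[\theta, \sub{M_0[\theta]}{u_0}]$ for each index type $V$ in the (tail) spine type, which follows from associativity of index substitution composition in \cref{req:isubst}, after viewing the single substitution $\sub{M_0}{u_0}$ as the simultaneous substitution $\id_\Delta, \sub{M_0}{u_0}$ from $\Delta, u_0 \ann U_0$ to $\Delta$.

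The only step that requires genuine care is this last one: the interpretation must be shown to commute with index substitution, and this is precisely where the associativity requirement on index substitutions is used. Since it has to be applied at the level of the whole tail spine type rather than a single index type, it is cleanest either to invoke the spine-lifted form of the associativity law componentwise or to carry out a short auxiliary induction on the spine type; everything else in the argument is routine unfolding of \cref{def:sem_spine} together with the substitution-typing principles.
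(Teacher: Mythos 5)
Your proof is correct and takes essentially the same route as the paper, which gives no explicit proof beyond asserting that the lemma ``follows from the substitution principles of index terms'' (\cref{req:isubst}): Part~1 is the simultaneous-substitution typing principle specialized to a grounding $\theta$, and Part~2 is the evident induction on the spine typing derivation, using Part~1 together with the composition identity $(\id_{\Delta},\sub{M_0}{u_0})[\theta] = \theta,\sub{M_0[\theta]}{u_0}$ exactly as the paper does in analogous steps elsewhere. The only caveat, consistent with the paper's abstract treatment of the index language, is that \cref{req:isubst} is stated for index \emph{terms}, so its lifting to index \emph{types} (needed for $U[\sub{M_0}{u_0}][\theta] = U[\theta,\sub{M_0[\theta]}{u_0}]$ in your last step) is assumed implicitly.
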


\subsection{Lattice Interpretation of Kinds}

We now describe the lattice structure that underlies the interpretation of kinds in our
language.
The idea is that types are interpreted as sets of term-level values and type constructors as
functions taking indices to sets of values.
We call the set of all term-level values $\VAL$ and write its power set as $\powset \VAL$.
The interpretation is defined inductively on the structure of kinds.

\begin{definition}[Interpretation of kinds $\sem K$]
\[
  \begin{array}{lcl}
    \sem{\type}(\theta) &=& \powset \VAL \\
    \sem{\pik{u \ann U} K}(\theta) &=& \{ \C \sep \fall{M \in \sem{U}(\theta)} \C(M) \in \sem{K}(\theta,\sub{M}{u}) \}
  \end{array}
\]
\end{definition}

A key observation in our metatheory is that each $\sem{K}(\theta)$ forms a \emph{complete lattice}.
In the base case, $\sem{\type}(\theta) = \powset \VAL$ is a complete lattice under the subset
ordering, with meet and join given by intersection and union respectively.
For a kind $K = \pik{u \ann U} K\pr$, we induce a lattice structure on $\sem{K}(\theta)$ by lifting
the lattice operations pointwise.
Precisely, we define
\[ \A \leq_{\sem{K}(\theta)} \B \quad\text{iff}\quad
\fall{M \in \sem{U}(\theta)} \A(M) \leq_{\sem{K\pr}(\theta,\sub{M}{u})} \B(M). \]
The meet and join operations can similarly be lifted pointwise.

This structure is important because it allows us to define pre-fixed points for operators on the
lattice, which is central to our interpretation of recursive types.
Here we rely on the existence of arbitrary meets, as we take the meet over an impredicatively
defined subset of $\mc L$.

\begin{definition}[Mendler-style pre-fixed and post-fixed points]
\label{def:prefp}
Suppose $\mc L$ is a complete lattice and $\F : \mc L \to \mc L$.
Define $\semlfp_{\mc L} : (\mc L \to \mc L) \to \mc L$ by
\[
  \semlfp_{\mc L} \F = \meet \{ \C \in \mc L \sep \fall{\X \in \mc L} \X \leq_{\mc L} \C \implies \F(\X) \leq_{\mc L} \C \}.
\]
and $\semgfp_{\mc L} : (\mc L \to \mc L) \to \mc L$ by
\[
  \semgfp_{\mc L} \F = \join \{ \C \in \mc L \sep \fall{\X \in \mc L} \C \leq_{\mc L} \X \implies \C \leq_{\mc L} \F(\X) \}.
\]
\end{definition}
We will mostly omit the subscript denoting the underlying lattice $\mc L$ of the order
$\leq$ and pre-fixed and post-fixed points, $\semlfp$ and $\semgfp$.

Note that a usual treatment of recursive types would define the least pre-fixed point of a monotone
operator as $\meet \{ \C \in \mc L \sep \F(\C) \leq \C \}$ and the greatest post-fixed point
of a monotone operator as $\join \{ \C \in \mc L \sep \C \leq \F(\C) \}$, using the Knaster-Tarski theorem.
However, our unconventional definition (following \citet{Rao:LFMTP16}) more closely models
Mendler-style (co)recursion and does not require $\F$ to be monotone
(thereby avoiding a positivity restriction on recursive types).

\subsection{Interpretation of Types}
\label{sec:typ-interp}

In order to interpret the types of our language, it is helpful to define semantic versions of some
syntactic constructs.
We first define a semantic form of our indexed function type $\iarr{\many{u \ann U}}{T_1}{T_2}$,
which helps us formulate the interaction of function types with fixed points and recursion.

\begin{definition}[Semantic function space]
  \label{def:sem-funsp}
  For a spine interpretation $\vec \U$ and functions $\A, \B : \vec \U \to \powset \VAL$, define
$
    \semarr{\vec \U}{\A}{\B} = \{ c \sep \fall{\vec M \in \vec \U} \fall{v \in \A(\vec M)} \evalapp{c}{\vec M}{v}{w} \in \B(\vec M) \}.
$
\end{definition}

It will also be convenient to lift term-level $\inj{}$ tags to the level of sets and functions in
the lattice $\sem{K}(\theta)$.
We define the lifted tags $\injop{} : \sem{K}(\theta) \to \sem{K}(\theta)$ inductively on $K$.
If $\V \in \sem{\type}(\theta) = \powset \VAL$ then $\injop{} \V = \inj{} \V = \{ \inj{} v \sep v \in \V \}$.
If $\C \in \sem{\pik{u \ann U} K\pr}(\theta)$ then $(\injop{} \C)(M) = \injop{} (\C(M))$ for all
$M \in \sem{U}(\theta)$.
Essentially, the $\injop{}$ function attaches a tag to every element in the set produced after the
index arguments are received.

Dually we define $\outop{\nu} : \sem{K}(\theta) \to \sem{K}(\theta)$.
If $\V \in \sem{\type}(\theta) = \powset \VAL$ then $\outop{\nu} \V = \out{\nu} \V = \{ c \sep  \appout{c} \evto w \in \V \}$.
If $\C \in \sem{\pik{u \ann U} K\pr}(\theta)$ then $(\outop{\nu} \C)(M) = \outop{\nu} (\C(M))$ for all
$M \in \sem{U}(\theta)$.

Finally, we define the interpretation of type variable contexts $\Xi$.
These describe semantic environments $\eta$ mapping each type variable to an object in its
respective kind interpretation.
Such environments are necessary to interpret type expressions with free type variables.

\begin{definition}[Interpretation of type variable contexts $\sem \Xi$]
\label{def:sem_tvctx}
\[
  \begin{array}{lcl}
    \sem{\empt}(\theta) &=& \{ \empt \} \\
    \sem{\Xi,X \ann K}(\theta) &=& \{ \eta,\sub{\X}{X} \sep \eta \in \sem{\Xi}(\theta), \X \in \sem{K}(\theta) \} \\
  \end{array}
\]
\end{definition}

We are now able to define the interpretation of types $T$ under environments $\theta$ and $\eta$.
This is done inductively on the structure of $T$.

\begin{definition}[Interpretation of types and constructors]
\label{def:sem_typ}
  \[
  \begin{array}{lcl}
    \sem{\Unit}(\theta;\eta) &=& \{ \unit \} \\
    \sem{T_1 \cross T_2}(\theta;\eta) &=& \{ \pair{v_1}{v_2} \sep v_1 \in \sem{T_1}(\theta;\eta), v_2 \in \sem{T_2}(\theta;\eta) \} \\
    \sem{T_1 \tsum T_2}(\theta;\eta) &=& \inj{1} \sem{T_1}(\theta;\eta) \bigcup \inj{2} \sem{T_2}(\theta;\eta) \\
    \sem{\iarr{\many{u \ann U}}{T_1}{T_2}}(\theta;\eta) &=& \semarr{\sem{(\many{u \ann U})}(\theta)}{\mc T_1}{\mc T_2} \\
    &&\text{where } \mc T_i(\vec M) = \sem{T_i}(\theta,\many{\sub{M}{u}};\eta) \text{ for } i \in \{1, 2\} \\
    \sem{\sigt{u \ann U}{T}}(\theta;\eta) &=& \{ \pack{M}{v} \sep M \in \sem{U}(\theta), v \in \sem{T}(\theta,\sub{M}{u};\eta) \} \\
    \sem{T \App M}(\theta;\eta) &=& \sem{T}(\theta;\eta)(M[\theta]) \\
    \sem{M = N}(\theta;\eta) &=& \{ \refl \sep \gives M[\theta] = N[\theta] \} \\
    \sem{X}(\theta;\eta) &=& \eta(X) \\
    \sem{\Lam u T}(\theta;\eta) &=& (M \mapsto \sem{T}(\theta,\sub{M}{u};\eta)) \\
    \sem{\rect{X \ann K} T}(\theta;\eta) &=& \semlfp_{\sem{K}(\theta)}(\X \mapsto \injop{\mu}(\sem{T}(\theta;\eta,\sub{\X}{X}))) \\
    \sem{\corect{X \ann K} T}(\theta;\eta) &=& \semgfp_{\sem{K}(\theta)}(\X \mapsto \outop{\nu} (\sem{T}(\theta;\eta,\sub{\X}{X}))) \\
    \sem{\Recnat{T_z}{u}{X}{T_s}{K}}(\theta;\eta) &=& \REC_{\sem{K}(\theta)}\,(\injop{0} \sem{T_z}(\theta;\eta)) \\
    && \qquad(N \mapsto \X \mapsto \injop{\suc} \sem{T_s}(\theta,\sub{N}{u};\eta,\sub{\X}{X}))
  \end{array}
  \]
\\[-1.75em]
\mbox{~~where}
  \[
  \begin{array}{llllcl}
    \REC_{\mc L} : &\mc L \to &(\N \to \mc L \to \mc L) \to &\N &\to &\mc L \\
    \REC_{\mc L} \App &\C \App &\F \App &0 &= &\C \\
    \REC_{\mc L} \App &\C \App &\F \App &(\suc N) &= &\F \App N \App (\REC_{\mc L} \App \C \App \F \App N)
  \end{array}
  \]
\end{definition}

The interpretation of the indexed function type
$\sem{\iarr{\many{u \ann U}}{T_1}{T_2}}(\theta;\eta)$ contains closures which, when applied to values in
the appropriate input sets, evaluate to values in the appropriate output set.
The interpretation of the equality type $\sem{M = N}(\theta;\eta)$ is the set $\{\refl\}$ if
$\gives M[\theta] = N[\theta]$ and the empty set otherwise.
The interpretation of a recursive type is the pre-fixed point of the function obtained from the
underlying type expression.
Finally, interpretation of a stratified type built from $\Rec$ relies on an analogous semantic
operator $\REC$.
It is defined by primitive recursion on the index argument, returning the first argument in the base
case and calling itself recursively in the step case.
Note that the definition of $\REC$ is specific to the index type it recurses over.
We only use the index language of natural numbers here, so the appropriate set of index values is
$\sem \nat = \N$.

Last, we give the interpretation for typing contexts $\Gamma$, describing well-formed
term-level environments $\sigma$.

\begin{definition}[Interpretation of typing contexts]
\label{def:sem_typctx}
  \begin{align*}
    \sem{\empt}(\theta;\eta) &= \{ \empt \} \\
    \sem{\Gamma, x \ann T}(\theta;\eta) &= \{ \sigma, \sub{v}{x} \sep \sigma \in \sem{\Gamma}(\theta;\eta), v \in \sem{T}(\theta;\eta) \}
  \end{align*}
\end{definition}

\subsection{Proof}

We now sketch our proof using some key lemmas.
The following two lemmas concern the fixed point operators $\semlfp$ and $\semgfp$,
and are key for reasoning about (co)recursive types and Mendler-style (co)recursion.
These lemmas generalize those of \citet{Rao:LFMTP16} from the simply typed setting.

\begin{restatable}[Soundness of pre-fixed point]{lemma}{prefp}
\label{lem:prefp}
Suppose $\mc L$ is a complete lattice, $\F : \mc L \to \mc L$ and $\semlfp$ is as in Def.
\ref{def:prefp}. Then $\F(\semlfp \F) \leq \semlfp \F$.
\end{restatable}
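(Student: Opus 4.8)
The plan is to unfold the definition of $\semlfp \F$ and exploit the fact that it is a meet (greatest lower bound) in the complete lattice $\mc L$. Write
\[ S = \{ \C \in \mc L \sep \fall{\X \in \mc L} \X \leq \C \implies \F(\X) \leq \C \}, \]
so that $\semlfp \F = \meet S$ by Def.~\ref{def:prefp}. The goal $\F(\semlfp \F) \leq \semlfp \F$ will follow once we show $\F(\meet S) \leq \C$ for every $\C \in S$, since $\semlfp \F = \meet S$ is by definition the greatest lower bound of $S$, hence any element below all of $S$ is below $\meet S$.

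So first I would fix an arbitrary $\C \in S$ and aim to prove $\F(\meet S) \leq \C$. Since $\meet S$ is a \emph{lower} bound of $S$, we have $\meet S \leq \C$. Now instantiate the defining property of $\C \in S$ with $\X := \meet S$: the hypothesis $\meet S \leq \C$ is exactly what we just observed, so the property yields $\F(\meet S) \leq \C$. As $\C \in S$ was arbitrary, $\F(\meet S)$ is a lower bound of $S$, and therefore $\F(\meet S) \leq \meet S = \semlfp \F$, which is the claim. Notice that at no point did we use monotonicity of $\F$; this is the point of the Mendler-style formulation, and it is why the argument goes through despite the absence of a positivity restriction.

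The main subtlety — really the only thing to be careful about — is the direction of the instantiation: we must apply the property of $\C$ with $\X = \meet S$ rather than, say, with $\C$ itself, and we need the trivial-looking fact that $\meet S \leq \C$ for each $\C \in S$, which is just the definition of $\meet$ as a lower bound. Everything else is a one-line application of the universal quantifier in the definition of $S$. I expect no genuine obstacle here; the lemma is essentially the observation that the ``Mendler meet'' computes a pre-fixed point by the same Knaster–Tarski-style reasoning as the classical case, only phrased so that the monotonicity hypothesis is absorbed into the quantifier ranging over all $\X \leq \C$. The dual statement (post-fixed point for $\semgfp$) will follow by the order-reversed argument, using that $\semgfp \F = \join \{\ldots\}$ is a least upper bound.
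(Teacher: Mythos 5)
Your proposal is correct and is essentially the paper's own argument: show $\F(\semlfp\F)\leq\C$ for each $\C$ in the defining set by instantiating the quantified property of $\C$ with $\X=\meet S$, using that the meet is a lower bound, and then conclude by the universal property of the meet. The observation that monotonicity of $\F$ is never needed (the impredicative/Mendler-style quantifier does the work) matches the remark accompanying the paper's proof.
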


\begin{restatable}[Function space from pre-fixed and post-fixed points]{lemma}{funprefp}
\label{lem:fun-prefp}
Let $\mc L = \vec \U \to \powset \VAL$ and $\B \in \mc L$ and $\F : \mc L \to \mc L$.
\begin{enumerate}
\item If $\fall{\X \in \mc L} c \in \semarr{\vec \U}{\X}{\B} \implies
  c \in \semarr{\vec{\U}}{\F \X}{\B}$,
  then $c \in \semarr{\vec \U}{\semlfp \F}{\B}$.
\item If $\fall{\X \in \mc L} c \in \semarr{\vec \U}{\B}{\X} \implies
  c \in \semarr{\vec{\U}}{\B}{\F \X}$,
  then $c \in \semarr{\vec \U}{\B}{\semgfp \F}$.
  \end{enumerate}
\end{restatable}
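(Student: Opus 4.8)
The plan is to prove both parts uniformly by unfolding the definition of $\semlfp$ (resp. $\semgfp$) from Def.~\ref{def:prefp} and exploiting the fact that the semantic function space $\semarr{\vec\U}{-}{\B}$ is \emph{antitone} in its first argument and $\semarr{\vec\U}{\B}{-}$ is \emph{monotone} in its second. I will write out part (1) in detail; part (2) is entirely dual.

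For part (1), recall $\semlfp \F = \meet \{ \C \in \mc L \sep \fall{\X \in \mc L} \X \leq \C \implies \F(\X) \leq \C \}$, where $\mc L = \vec\U \to \powset\VAL$ with the pointwise order. First I would observe that $\semarr{\vec\U}{-}{\B}$ turns $\leq$ into $\supseteq$: if $\A' \leq \A$ in $\mc L$ then $\semarr{\vec\U}{\A}{\B} \subseteq \semarr{\vec\U}{\A'}{\B}$, directly from Def.~\ref{def:sem-funsp}, since shrinking the input sets $\A(\vec M)$ only weakens the universally quantified condition on $c$. Consequently, $\semarr{\vec\U}{\semlfp\F}{\B} = \semarr{\vec\U}{\meet S}{\B}$, where $S$ is the indexing set above, and since $\meet S \leq \C$ for every $\C \in S$, we get $\semarr{\vec\U}{\C}{\B} \subseteq \semarr{\vec\U}{\meet S}{\B}$ for each such $\C$; in fact I will show the stronger statement $\semarr{\vec\U}{\meet S}{\B} = \bigcup_{\C \in S} \semarr{\vec\U}{\C}{\B}$ is not needed — it suffices to show $c$ lies in every $\semarr{\vec\U}{\C}{\B}$ with $\C \in S$ would be too strong; instead the right move is: it suffices to exhibit \emph{one} $\C^\star \in S$ with $c \in \semarr{\vec\U}{\C^\star}{\B}$ and $\C^\star \leq \semlfp\F$ — but $\semlfp\F = \meet S$ is the greatest lower bound, so $\C^\star \leq \semlfp\F$ fails in general. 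The correct argument is the following: take $\C^\star$ to be the set $\{ c' \in \VAL \text{-closures} \sep c' \in \semarr{\vec\U}{\semlfp\F}{\B} \}$ lifted to a constant element of $\mc L$ — more precisely, define $\C^\star \in \mc L$ by $\C^\star(\vec M) = \B(\vec M)$ is wrong too; let me instead use the hypothesis directly.

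Here is the clean line of reasoning I will actually carry out. Fix $c$ satisfying the hypothesis of part (1): $\fall{\X} c \in \semarr{\vec\U}{\X}{\B} \implies c \in \semarr{\vec\U}{\F\X}{\B}$. Define $\C^\star \in \mc L$ by $\C^\star = \semlfp\F$ itself — no; rather, I claim that $\semlfp\F \leq \X_0$ where $\X_0$ is the largest element $\X \in \mc L$ with $c \in \semarr{\vec\U}{\X}{\B}$, namely $\X_0(\vec M) = \{ v \sep \fall{w}\, \evalapp{c}{\vec M}{v}{w} \implies w \in \B(\vec M) \}$. By antitonicity, $c \in \semarr{\vec\U}{\X_0}{\B}$ holds by construction, and $\X \leq \X_0 \iff c \in \semarr{\vec\U}{\X}{\B}$. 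Now I show $\X_0 \in S$: suppose $\X \leq \X_0$; then $c \in \semarr{\vec\U}{\X}{\B}$, so by the hypothesis $c \in \semarr{\vec\U}{\F\X}{\B}$, hence $\F\X \leq \X_0$ — exactly the defining condition for membership of $\X_0$ in $S$. Therefore $\semlfp\F = \meet S \leq \X_0$, which by definition of $\X_0$ means $c \in \semarr{\vec\U}{\semlfp\F}{\B}$, as required. Part (2) follows by the dual construction: $\semarr{\vec\U}{\B}{-}$ is monotone, one defines $\X_0(\vec M) = \{ w \sep \exists v\,\exists\text{-witness}\ldots \}$ appropriately as the \emph{smallest} $\X$ with $c \in \semarr{\vec\U}{\B}{\X}$, shows $\X_0 \in S' := \{\C \sep \fall\X\, \C \leq \X \implies \C \leq \F\X\}$ using monotonicity and the hypothesis, and concludes $c \in \semarr{\vec\U}{\B}{\join S'} = \semarr{\vec\U}{\B}{\semgfp\F}$.

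The main obstacle is getting the extremal element $\X_0$ right — it must be the \emph{greatest} $\X$ with $c \in \semarr{\vec\U}{\X}{\B}$ in part (1) (because of antitonicity, large inputs are hard) and the \emph{least} $\X$ with $c \in \semarr{\vec\U}{\B}{\X}$ in part (2) — and verifying that this $\X_0$ actually lands in the impredicatively-defined indexing set $S$ (resp.\ $S'$) over which the meet (resp.\ join) is taken. Once the membership $\X_0 \in S$ is established the rest is immediate from the universal property of $\meet$. I expect no difficulty from the spine structure $\vec\U$ since everything is pointwise in $\vec M$; nor from well-definedness of $\X_0$, since it is an explicit comprehension over closures and values. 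I will also remark that Lemma~\ref{lem:prefp} (soundness of the pre-fixed point, $\F(\semlfp\F) \leq \semlfp\F$) is \emph{not} needed for this lemma — the two results are parallel consequences of the same unfolding of Def.~\ref{def:prefp}.
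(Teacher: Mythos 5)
Your route for part (1) is, in skeleton, exactly the paper's proof: the paper introduces $\E_c(\B) \in \mc L$ with $\E_c(\B)(\vec M) = \{ v \sep \evalapp{c}{\vec M}{v}{w} \in \B(\vec M) \}$ --- precisely your ``greatest $\X$ such that $c \in \semarr{\vec \U}{\X}{\B}$'' --- records the equivalence $c \in \semarr{\vec \U}{\A}{\B} \iff \A \leq \E_c(\B)$, uses the hypothesis to show that this element lies in the impredicative set $\mc S$ whose meet is $\semlfp \F$, and concludes $\semlfp \F \leq \E_c(\B)$ by the universal property of the meet. Your remark that \cref{lem:prefp} is not needed also matches the paper.

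The substantive problem is your explicit formula for $\X_0$: you take $\X_0(\vec M)$ to be the set of $v$ such that \emph{whenever} $\evalapp{c}{\vec M}{v}{w}$ holds, $w \in \B(\vec M)$. But in \cref{def:sem-funsp} the condition $\evalapp{c}{\vec M}{v}{w} \in \B(\vec M)$ is read existentially (there \emph{is} a result $w$, and it lies in $\B(\vec M)$): termination is built into the semantic arrow, this being a termination proof. Under your universally quantified reading, any $v$ on which the application of $c$ produces no value belongs vacuously to $\X_0(\vec M)$, so neither ``$c \in \semarr{\vec \U}{\X_0}{\B}$ by construction'' nor the direction $\X \leq \X_0 \implies c \in \semarr{\vec \U}{\X}{\B}$ --- the direction you invoke in the final step --- holds in general. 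The repair is to define $\X_0$ with the existential reading, which is literally the paper's $\E_c(\B)$; with that change your part (1) is the paper's argument. Also, part (2) is not ``entirely dual'' in the way you suggest: the \emph{least} $\X$ with $c \in \semarr{\vec \U}{\B}{\X}$ (the image of $\B$ under application of $c$) exists only if the application of $c$ terminates on every input drawn from $\B$, and the hypothesis of (2) does not hand you that; a greatest element on the domain side always exists, a least element on the codomain side needs totality. (In fairness, the paper's own treatment of (2) rewrites the claim into a statement relating $\E_c(\B)$ directly to $\semgfp \F$, which is likewise not an exact translation; in the only place the lemma is applied, $c$ is a \code{corec} closure whose applications always step immediately to a thunk, so totality is automatic there.) So: right idea and the same decomposition as the paper for (1), but fix the definition of $\X_0$, and write out (2) explicitly rather than appealing to duality.
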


Another key result we rely on is that type-level substitutions associate with our semantic
interpretations.
Note that single index (and spine) substitutions on types are handled as special cases of the result for
simultaneous index substitutions.
We omit the definitions of type substitutions for brevity.

\begin{lemma}[Type-level substitution associates with interpretation]\label{lem:subst}$\;$\\
  Suppose $\Delta;\Xi \gives T \chk K$ or $\Delta;\Xi \gives T \syn K$,
  and $\gives \theta \oft \Delta\pr$ and $\eta \in \sem{\Xi\pr}(\theta)$.
  \begin{enumerate}
    \item
      If $\Delta\pr \gives \Theta \oft \Delta$ and $\Xi\pr = \Xi[\Theta]$ then
      $\sem{\Xi\pr}(\theta) = \sem{\Xi}(\Theta[\theta])$ and
      $\sem{T[\Theta]}(\theta;\eta) = \sem{T}(\Theta[\theta];\eta)$.
    \item
      If $\Delta = \Delta\pr$ and $\Xi = \Xi\pr,X \ann K$ and
      $\Delta\pr;\Xi\pr \gives S \chk K$ or $\Delta\pr;\Xi\pr \gives S \syn K$, then
      $\sem{T[\sub{S}{X}]}(\theta;\eta) = \sem{T}(\theta;\eta,\sub{\sem{S}(\theta;\eta)}{X})$.
  \end{enumerate}
\end{lemma}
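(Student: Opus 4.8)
The plan is to prove both parts by induction on the kinding derivation of $T$ (equivalently, by structural induction on $T$). Two preliminary facts are needed. First, the index-level interpretations commute with substitution: $\sem{U[\Theta]}(\theta) = \sem{U}(\Theta[\theta])$ and $\sem{(\many{u \ann U})[\Theta]}(\theta) = \sem{(\many{u \ann U})}(\Theta[\theta])$; both are immediate from \cref{def:sem_spine} together with the associativity principle $M[\Theta][\theta] = M[\Theta[\theta]]$ of \cref{req:isubst}. Second, the kind interpretation commutes with index substitution, $\sem{K[\Theta]}(\theta) = \sem{K}(\Theta[\theta])$, and in fact as an equality of complete lattices, since the order and the meets/joins on $\sem{K}(\theta)$ are determined pointwise by the index-spine interpretation; this is a short induction on $K$ using the first fact. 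The first conjunct of part~(1), $\sem{\Xi\pr}(\theta) = \sem{\Xi}(\Theta[\theta])$, then follows by induction on $\Xi$ using the kind fact. Finally I would record two routine weakening/exchange properties of the type interpretation $\sem{-}$ in its environments: that $\sem{S}$ is insensitive to a fresh entry in $\theta$ or in $\eta$, and that adjacent independent entries in $\eta$ may be exchanged; the index-weakening instance is itself a consequence of part~(1) applied to a weakening substitution, and the others are direct inductions.

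For the second conjunct of part~(1) I would walk through the clauses of \cref{def:sem_typ}. The non-binding cases ($\Unit$, $T_1 \cross T_2$, $T_1 \tsum T_2$, $X$, $M = N$, $T \App M$) reduce directly to the induction hypothesis and the preliminary facts: for $T \App M$ one rewrites $\sem{(T \App M)[\Theta]}(\theta;\eta) = \sem{T[\Theta]}(\theta;\eta)(M[\Theta][\theta])$ by the IH and by associativity to $\sem{T}(\Theta[\theta];\eta)(M[\Theta[\theta]])$; for $M = N$ one uses $M[\Theta][\theta] = M[\Theta[\theta]]$ so the two singleton-or-empty sets coincide; for $X$ neither side is touched by $\Theta$. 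The fixed-point cases $\rect{X \ann K} T$ and $\corect{X \ann K} T$ and the stratified case $\Recnat{T_z}{u}{X}{T_s}{K}$ additionally use that the underlying lattices $\sem{K[\Theta]}(\theta)$ and $\sem{K}(\Theta[\theta])$ are equal, so that the operators $\semlfp$, $\semgfp$, $\REC$ — which are defined relative to a fixed lattice — live over the same lattice on both sides; the defining operators then agree by the IH applied to the body in the environment extended with $\sub{\X}{X}$, which lands in the correct interpreted context by the first conjunct.

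The binder cases — $\iarr{\many{u \ann U}}{T_1}{T_2}$, $\sigt{u \ann U}{T}$, $\Lam u T$, and the index abstractions inside recursive/corecursive type constructors and inside the step branch of $\Rec$ — are where I expect the real effort to go, though it is bookkeeping rather than anything semantic. Here $(\iarr{\many{u \ann U}}{T_1}{T_2})[\Theta] = \iarr{\many{u \ann U}}{T_1[\Theta\pr]}{T_2[\Theta\pr]}$, where $\Theta\pr$ is $\Theta$ suitably weakened and extended by the identity on the freshly bound $\many{u}$, and the crux is the identity $\Theta\pr[\theta, \many{\sub{M}{u}}] = (\Theta[\theta]), \many{\sub{M}{u}}$ (again \cref{req:isubst} together with the definition of substitution composition). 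Given it, $\sem{T_i[\Theta\pr]}(\theta,\many{\sub{M}{u}};\eta) = \sem{T_i}(\Theta\pr[\theta,\many{\sub{M}{u}}];\eta) = \sem{T_i}((\Theta[\theta]),\many{\sub{M}{u}};\eta)$ by the IH, which is exactly the clause matching $\sem{\iarr{\many{u \ann U}}{T_1}{T_2}}(\Theta[\theta];\eta)$; the $\sigt{}{}$ and $\Lam u{}$ cases are analogous. I expect keeping $\Theta\pr$, the de Bruijn shifting, and the extended index environment in sync to be the one genuinely error-prone point.

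For part~(2) I again induct on the kinding of $T$. The defining case is $T = X$: $\sem{X[\sub{S}{X}]}(\theta;\eta) = \sem{S}(\theta;\eta) = \sem{X}(\theta;\eta,\sub{\sem{S}(\theta;\eta)}{X})$ by \cref{def:sem_typ}, while for a type variable $Y \neq X$ both sides are $\eta(Y)$. The structural cases follow from the IH. In a binder case, recursing under an index variable $u$ extends $\theta$ by $\sub{M}{u}$ and recursing under a type variable $Y$ extends $\eta$ by a fresh $\sub{\X}{Y}$; in each case one first uses the weakening property to see that the fresh entry does not affect $\sem{S}$, applies the IH, and then uses exchange to move $\sub{\sem{S}(\theta;\eta)}{X}$ past the binder. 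As in part~(1), the $\mu$, $\nu$ and $\Rec$ cases also exploit that $\sem{K}(\theta)$ does not depend on $\eta$, so $\semlfp$, $\semgfp$ and $\REC$ are taken over the same lattice on the two sides. The main obstacle throughout is the substitution hygiene in the binder cases; all the semantic content is carried by the fixed-point machinery, which is reused unchanged via the lattice-equality observations.
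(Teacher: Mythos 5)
Your proposal is correct and follows essentially the same route as the paper, which proves this lemma simply ``by induction on the structure of $T$'' without giving further detail. The auxiliary facts you isolate (index/kind/context interpretations commuting with substitution, composition of substitutions for the binder cases, and environment weakening/exchange for part~(2)) are exactly the bookkeeping that induction needs, so your write-up is a faithful elaboration of the paper's argument rather than a different approach.
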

\begin{proof}
  By induction on the structure of $T$.
\end{proof}

The next two lemmas concern recursive types and terms respectively.

\begin{restatable}[Recursive type contains unfolding]{lemma}{appprefp}
\label{lem:app_prefp}$\;$\\
  Let $R = \rect{X \ann K} \Lam{\vec u} S$ where
  $K = \pik{\many{u \ann U}} \type$ and $\Delta;\Xi \gives R \syn K$,
  and $\Delta \gives \vec M \oft (\many{u \ann U})$ and
  $\gives \theta \oft \Delta$ and $\eta \in \sem{\Xi}(\theta)$.
  Then $\inj{\mu} \sem{S[\many{\sub{M}{u}};\sub{R}{X}]}(\theta;\eta) \subseteq \sem{R \App \vec M}(\theta;\eta)$.
\end{restatable}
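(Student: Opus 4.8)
The plan is to recognise $\sem{R \App \vec M}(\theta;\eta)$ as an instance of a Mendler pre-fixed point and then apply \cref{lem:prefp}. Write $\mc L = \sem{K}(\theta)$, which is a complete lattice (recall each $\sem{K}(\theta)$ carries a complete lattice structure), and set $\F = \bigl(\X \mapsto \injop{\mu}\sem{\Lam{\vec u} S}(\theta;\eta,\sub{\X}{X})\bigr) : \mc L \to \mc L$. By \cref{def:sem_typ} we have $\sem{R}(\theta;\eta) = \semlfp_{\mc L}\F$, and since $\vec M[\theta] \in \sem{(\many{u \ann U})}(\theta)$ (see the interpretation of index substitution, \cref{lem:sem_isubst_spine}), iterating the clause for $\sem{T \App M}$ gives $\sem{R \App \vec M}(\theta;\eta) = (\semlfp_{\mc L}\F)(\vec M[\theta])$.

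First I would invoke \cref{lem:prefp} to obtain $\F(\semlfp\F) \leq_{\mc L} \semlfp\F$. Unfolding the pointwise order on $\mc L = \sem{\pik{\many{u \ann U}}\type}(\theta)$ at the spine $\vec M[\theta]$ yields $\bigl(\F(\semlfp\F)\bigr)(\vec M[\theta]) \subseteq (\semlfp\F)(\vec M[\theta]) = \sem{R \App \vec M}(\theta;\eta)$. Next I would compute the left-hand side by unfolding definitions: by the definition of the lifted tag $\injop{\mu}$ over a spine it equals $\inj{\mu}\bigl(\sem{\Lam{\vec u}S}(\theta;\eta,\sub{\semlfp\F}{X})(\vec M[\theta])\bigr)$, and iterating the clause for $\sem{\Lam{u}{T}}$ rewrites the inner set as $\sem{S}(\theta,\many{\sub{M[\theta]}{u}};\eta,\sub{\semlfp\F}{X})$. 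So the inclusion reads $\inj{\mu}\,\sem{S}(\theta,\many{\sub{M[\theta]}{u}};\eta,\sub{\semlfp\F}{X}) \subseteq \sem{R \App \vec M}(\theta;\eta)$.

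It then remains to identify the set on the left with $\sem{S[\many{\sub{M}{u}};\sub{R}{X}]}(\theta;\eta)$, which I would do with two applications of \cref{lem:subst}. From $\Delta;\Xi \gives R \syn K$ the kinding rule for recursive types gives $\Delta,\many{u\ann U};\Xi,X\ann K \gives S \chk \type$, so applying part~(1) with the index substitution $\many{\sub{M}{u}}$ (well-formed from $\Delta$ into $\Delta,\many{u\ann U}$ since $\Delta \gives \vec M \oft (\many{u\ann U})$, and leaving $\Xi,X\ann K$ unchanged) gives $\sem{S[\many{\sub{M}{u}}]}(\theta;\eta,\sub{\semlfp\F}{X}) = \sem{S}(\theta,\many{\sub{M[\theta]}{u}};\eta,\sub{\semlfp\F}{X})$; then applying part~(2) with the type substitution $\sub{R}{X}$ (whose hypothesis is exactly $\Delta;\Xi \gives R \syn K$, using $\sem{R}(\theta;\eta) = \semlfp\F$) gives $\sem{(S[\many{\sub{M}{u}}])[\sub{R}{X}]}(\theta;\eta) = \sem{S[\many{\sub{M}{u}}]}(\theta;\eta,\sub{\semlfp\F}{X})$. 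Since $R$ binds $\vec u$ and index terms contain no type variables, the two substitutions may be performed simultaneously, so $(S[\many{\sub{M}{u}}])[\sub{R}{X}] = S[\many{\sub{M}{u}};\sub{R}{X}]$; chaining these equalities and plugging the result into the inclusion above finishes the proof.

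The step I expect to be the main obstacle is the last one: keeping the index and type-variable contexts straight so that the two parts of \cref{lem:subst} apply, and justifying that the index and type substitutions commute. The appeal to \cref{lem:prefp} is immediate, and the only other content is a routine unfolding of the clauses of \cref{def:sem_typ} for $\rect{X\ann K}{T}$, $\Lam{u}{T}$, $T \App M$, and the lifted tag $\injop{\mu}$ over spines.
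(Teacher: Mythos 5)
Your proposal is correct and follows essentially the same route as the paper's proof: identify $\sem{R}(\theta;\eta)$ as $\semlfp\F$ for the operator $\F(\X) = \injop{\mu}\sem{\Lam{\vec u}S}(\theta;\eta,\sub{\X}{X})$, apply \cref{lem:prefp} to get $\F(\semlfp\F)\leq\semlfp\F$, evaluate pointwise at $\many{M[\theta]}$, and use \cref{lem:subst} (together with composition of index substitutions) to identify $\inj{\mu}\sem{S[\many{\sub{M}{u}};\sub{R}{X}]}(\theta;\eta)$ with $\F(\sem{R}(\theta;\eta))(\many{M[\theta]})$. The only cosmetic difference is that you spell out the two parts of \cref{lem:subst} and the commutation of the index and type substitutions, which the paper handles in a single cited step.
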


\begin{restatable}[Backward closure]{lemma}{backclos}
\label{lem:back_clos}$\;$\\
Let $t$ be a term, $\theta$ and $\sigma$ environments, and $\A, \B \in \vec \U \to \powset \VAL$.
\begin{enumerate}
  \item If $t[\theta;\sigma, \sub{(\rec f t)[\theta;\sigma]}{f}] \evto c\pr \in \semarr{\vec \U}{\A}{\B}$,
  then $(\rec f t)[\theta;\sigma] \in \semarr{\vec \U}{\injop{\mu} \A}{\B}$.
 \item If $t[\theta;\sigma, \sub{(\corec f t)[\theta;\sigma]}{f}] \evto c\pr \in \semarr{\vec \U}{\A}{\B}$,
  then $(\corec f t)[\theta;\sigma] \in \semarr{\vec \U}{\A}{\outop{\nu} \B}$.
\end{enumerate}
\end{restatable}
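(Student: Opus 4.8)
The plan is to prove both parts by unfolding \cref{def:sem-funsp} (semantic function space) together with the definitions of the lifted tag operators $\injop{\mu}$ and $\outop{\nu}$ on $\powset\VAL$, and then reading off the required big-step derivations directly from the evaluation rules for applying a $\rec$-closure and a $\corec$-closure (and, for part~2, for the observation $\out{\nu}$ on a $\corec$-closure).

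For part~1, set $c_0 = (\rec f t)[\theta;\sigma]$ and let $c\pr$ be such that $t[\theta;\sigma,\sub{c_0}{f}] \evto c\pr$ with $c\pr \in \semarr{\vec\U}{\A}{\B}$. To show $c_0 \in \semarr{\vec\U}{\injop{\mu}\A}{\B}$, fix $\vec M \in \vec\U$ and $v \in (\injop{\mu}\A)(\vec M)$; since $(\injop{\mu}\A)(\vec M) = \{\inj{\mu} v\pr \sep v\pr \in \A(\vec M)\}$, we have $v = \inj{\mu} v\pr$ with $v\pr \in \A(\vec M)$. As $c\pr \in \semarr{\vec\U}{\A}{\B}$ and $\vec M \in \vec\U$ and $v\pr \in \A(\vec M)$, there is $w$ with $\evalapp{c\pr}{\vec M}{v\pr}{w}$ and $w \in \B(\vec M)$. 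The evaluation rule for applying a $\rec$-closure to an $\inj{\mu}$-tagged value combines the premises $t[\theta;\sigma,\sub{c_0}{f}] \evto c\pr$ and $\evalapp{c\pr}{\vec M}{v\pr}{w}$ into $\evalapp{c_0}{\vec M}{v}{w}$, which is exactly what membership $c_0 \in \semarr{\vec\U}{\injop{\mu}\A}{\B}$ requires.

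For part~2, set $c_0 = (\corec f t)[\theta;\sigma]$, and fix $\vec M \in \vec\U$ and $v \in \A(\vec M)$. The evaluation rule for applying a $\corec$-closure fires unconditionally and gives $\evalapp{c_0}{\vec M}{v}{w}$ with $w = c_0 \cdot \vec M\,v$, which is again a value (a closure). It remains to show $w \in (\outop{\nu}\B)(\vec M) = \{c \sep \appout{c} \evto w\pr \in \B(\vec M)\}$, i.e.\ that $\appout{w} \evto w\pr$ for some $w\pr \in \B(\vec M)$. The $\out{\nu}$ rule for $\corec$-closures takes premises $t[\theta;\sigma,\sub{c_0}{f}] \evto c\pr$ (our hypothesis) and $\evalapp{c\pr}{\vec M}{v}{w\pr}$; the latter holds, with $w\pr \in \B(\vec M)$, because $c\pr \in \semarr{\vec\U}{\A}{\B}$ and $v \in \A(\vec M)$. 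Hence $\appout{w} \evto w\pr$ with $w\pr \in \B(\vec M)$, so $w \in (\outop{\nu}\B)(\vec M)$, as needed.

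The only subtlety is that both the $\rec$-application rule and the $\out{\nu}$ rule re-evaluate the body $t[\theta;\sigma,\sub{c_0}{f}]$ at invocation/observation time, whereas the hypothesis already pins its value to $c\pr$; matching these uses determinism of the big-step evaluation relation, which is routine for this environment-based semantics. After that observation, both parts are a mechanical assembly of derivation trees, so I do not anticipate any genuine obstacle beyond this bookkeeping.
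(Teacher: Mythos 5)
Your proof is correct and follows essentially the same route as the paper's: unfold the semantic function space and the lifted tag operators, then assemble the required derivations directly from the $\rec$-application rule, the $\corec$-application rule, and the $\out{\nu}$ observation rule. The final appeal to determinism is unnecessary: membership in $\semarr{\vec \U}{\A}{\B}$ only asks for \emph{some} evaluation derivation ending in a value of $\B(\vec M)$, and since the application/observation rules take the body evaluation $t[\theta;\sigma,\sub{c_0}{f}] \evto c\pr$ as a premise, the hypothesized derivation can be plugged in directly, exactly as in the paper's proof.
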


Our final lemma concerns the semantic equivalence of an applied stratified type with its unfolding.
Note that here we only state and prove the lemma for an index language of natural numbers.
For a different index language, one would need to reverify this lemma for the corresponding stratified type.
This should be straight-forward once the semantic $\REC$ operator is chosen to reflect the inductive
structure of the index language.

\begin{lemma}[Stratified types equivalent to unfolding]
\label{lem:trec}$\;$\\
  Let $T_{\Rec} \equiv \Recnat{T_z}{n}{X}{T_s}{K}$ where
  $K = \pik{n \ann \nat} \pik{\many{u \ann U}} \type$ and $\Delta;\Xi \gives T_{\Rec} \syn K$,
  and $\Delta \gives \vec M \oft (\many{u \ann U})$ and $\Delta \gives N \oft \nat$
  and $\gives \theta \oft \Delta$ and $\eta \in \sem{\Xi}(\theta)$. Then
  \begin{enumerate}
    \item
    $\sem{T_{\Rec} \App 0 \App \vec M}(\theta;\eta) = \inj{0} (\sem{T_z \App \vec M}(\theta;\eta))$ and
    \item
    $\sem{T_{\Rec} \App (\suc N) \App \vec M}(\theta;\eta) =
    \inj{\suc} (\sem{T_s[\sub{N}{n};\sub{(T_{\Rec} \App N)}{X}] \App \vec M}(\theta;\eta))$.
  \end{enumerate}
\end{lemma}

Finally we state the main termination theorem.

\begin{restatable}[Termination of evaluation]{theorem}{termination}
\label{thm:term}
  If $\Delta;\Xi;\Gamma \gives t \chk T$ or $\Delta;\Xi;\Gamma \gives t \syn T$, and
  $\gives \theta \oft \Delta$ and $\eta \in \sem{\Xi}(\theta)$ and $\sigma \in \sem{\Gamma}(\theta;\eta)$,
  then $t[\theta;\sigma] \evto v$ for some $v \in \sem{T}(\theta;\eta)$.
\end{restatable}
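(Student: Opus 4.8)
The plan is to prove \cref{thm:term} by induction on the typing derivation $\Delta;\Xi;\Gamma \gives t \chk T$ or $\Delta;\Xi;\Gamma \gives t \syn T$, generalizing the usual logical-relations argument to the present indexed setting. For each typing rule, I would assume the statement for the subderivations (with appropriately chosen environments $\theta$, $\eta$, $\sigma$) and show that the conclusion term evaluates to a value in $\sem{T}(\theta;\eta)$. The proof that evaluation terminates is built directly into the statement (``$t[\theta;\sigma]\evto v$ for some $v$''), so the work is really to check, case by case, that the evaluated value lands in the semantic interpretation of the type. For the fully structural cases --- unit, pairs, \code{split}, injections, \code{case}, \code{pack}, \code{unpack}, type annotations --- the argument is routine: evaluate subterms via the IH, then the semantic clause of \cref{def:sem_typ} for the relevant type connective matches the evaluation rule of \cref{fig:tores-eval} on the nose.

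\medskip

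The interesting cases are the index-dependent and (co)recursive ones, and here I would lean on the lemmas assembled above. For the indexed function type $\iarr{\many{u\ann U}}{S}{T}$ and application $t\app\vec M\app s$, I would use \cref{lem:subst} to commute the index substitution $\many{\sub{M}{u}}$ with the interpretation, combined with \cref{lem:sem_isubst_spine} to know $\many{M[\theta]}\in\sem{(\many{u\ann U})}(\theta)$, so that the semantic function space \cref{def:sem-funsp} delivers a value in the right output set. For the equality forms: $\refl$ is immediate from the equality clause of \cref{def:sem_typ}; for $\eqelim{s}{\Theta}{\Delta\pr}{t}$ the IH on $s$ forces it to evaluate to $\refl$, whence $\gives M[\theta]=N[\theta]$, and then completeness of index matching (\cref{req:match-comp}) supplies the refined environment $\theta\pr$ with $\gives\theta\pr\oft\Delta\pr$ and $\gives\Theta[\theta\pr]=\theta$; applying the IH to the body $t$ under $\theta\pr$, together with the substitution lemma \cref{lem:subst} to relate $\sem{T[\Theta]}(\theta\pr;\eta)$ and $\sem{T}(\theta;\eta)$, closes the case. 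For $\inj{\mu} t$ I would use \cref{lem:app_prefp} (recursive type contains its unfolding). For $\rec f t$ and $\corec f t$ the key tools are \cref{lem:fun-prefp} (function space from (post-)fixed points) and \cref{lem:back_clos} (backward closure): one shows that for every candidate $\X$, assuming $f$ behaves at type $\iarr{}{X\vec u}{T}$ one gets the body to behave one constructor out, then uses the pre-fixed-point machinery to conclude behaviour at the actual recursive type. For $\inj{l} t$, $\out{l} t$, and $\indnat{t_0}{u}{f}{t_s}$ over stratified types I would case on whether the index is $0$ or $\suc N$ and invoke \cref{lem:trec} to rewrite $\sem{T_{\Rec}\App 0\App\vec M}$ and $\sem{T_{\Rec}\App(\suc N)\App\vec M}$ into their unfoldings, matching the evaluation rules; the induction term additionally needs an inner induction on the natural-number index $N$ to justify the recursive calls through $f$.

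\medskip

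I expect the main obstacle to be the well-founded (co)recursion cases, $\rec f t$ and $\corec f t$. The subtlety is that the obvious induction hypothesis is not directly available: when we unfold $\rec f t$ we put $(\rec f t)[\theta;\sigma]$ itself into the environment for $f$, so a naive induction on typing derivations is circular. The standard resolution --- the one the Mendler-style pre-fixed-point definition in \cref{def:prefp} is designed to support --- is to prove, for an \emph{arbitrary} $\X\in\sem{K}(\theta)$, the implication ``if $(\rec f t)[\theta;\sigma]$ behaves at $\iarr{}{\injop{\mu}\X}{T}$-ish type (equivalently $X$ interpreted as $\X$) then it behaves at $\iarr{}{\injop{\mu}(\sem{S}(\ldots,\sub{\X}{X}))}{T}$'', which is exactly the hypothesis feeding \cref{lem:fun-prefp}(1); the outer IH on $t$ is applied with the type environment $\eta$ extended by $\sub{\X}{X}$, and \cref{lem:back_clos}(1) bridges the gap between ``the body evaluates to a closure in the function space'' and ``$(\rec f t)[\theta;\sigma]$ is in the function space''. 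Getting the quantifier structure right --- arbitrary $\X$, the implicational form, and the interplay with \cref{lem:prefp} so that $\semlfp\F$ genuinely contains the unfolding --- is the delicate engineering; the corecursive case is dual, using post-fixed points, $\outop{\nu}$, and the $\appout{}$ judgment, with an extra wrinkle because a $\corec$ closure applied to arguments is itself a value (a ``delayed'' closure $\,(\corec f t)[\theta;\sigma]\cdot\vec N\ v\,$) whose observation is only forced by $\out{\nu}$. Everything else is bookkeeping with \cref{lem:subst} and the substitution principles of \cref{req:isubst}.
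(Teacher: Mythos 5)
Your proposal matches the paper's proof in both structure and substance: the same induction on the (mutual) typing derivations, the same routine treatment of the structural cases, and the same use of the key lemmas — the substitution lemma and index-spine interpretation for applications, completeness of matching for equality elimination, the pre-fixed-point unfolding lemma for $\inj{\mu}$, the combination of the fixed-point function-space lemma with backward closure and an arbitrary candidate $\X$ (interpreting the premise under $\eta,\sub{\X}{X}$ and $\sigma,\sub{c}{f}$) for $\rec$/$\corec$, the stratified-type unfolding lemma for $\inj{l}$/$\out{l}$, and an inner induction on the natural-number index for $\ind$. No gaps worth noting; this is essentially the paper's argument.
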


\section{Related Work}
Our work draws inspiration from two different areas: dependent type
theory and proof theory.

Dependent type theories often support large eliminations that are definitions of dependent types by primitive recursion.
\LONGVERSION{For example, a large elimination on a natural number is of the form $\Rec\;t\;T_0\;(X.\,T_{\suc})$,
similar to our stratified type.
In a dependent type theory, this large elimination reduces in the same way as term-level recursion,
depending on the value of the natural number expression $t$.}{They
reduce the same way as term-level recursion.}
\LONGVERSION{Large eliminations are important for increasing the expressive power of dependent type theories, in
particular allowing one to prove that constructors of inductive types are disjoint (e.g. $0 \neq 1$).
Jan Smith \citep{Smith89} gave an account of large eliminations (calling them
\emph{propositional functions}) as an extension of Martin-L\"of type theory.
\citet{Werner:1992} was able to prove strong normalization for a language (dependently typed System
F) with large eliminations over natural numbers.
Werner's proof needs to consider the normalizability of the natural number argument to the
large elimination.
Our interpretation of stratified types is simplified by the fact that the natural number argument
comes from an index language containing only normal forms.
}{}
In general, our work shows how to gain the power of large eliminations
in an indexed type system by simulating reduction on the level of types by unfolding stratified types in their typing rules.


In the world of proof theory, our core language corresponds to a first-order logic with equality,
inductive and stratified (recursive) definitions.
\citet{Momigliano:TYPES03,Tiu:JAL12} give comprehensive treatments of logics with induction and
co-induction as well as first-class equality.
They present their logics in a sequent calculus style and prove cut elimination \LONGVERSION{(i.e. that the cut
rule is admissible)}{} which implies consistency of the logics.
Their cut elimination proof extends Girard's proof technique of reducibility candidates, similar to
ours.
Note that they require strict positivity of inductive definitions, i.e. the head of a definition
(analogous to the recursive type variable) is not allowed to occur to the left of an implication.

\cite{Tiu:IJCAR12} also develops a first-order logic with stratified definitions similar to our
stratified types.
His notion of stratification comes from defining the ``level'' of a formula, which measures its size.
A recursive definition is then called stratified if the level does not increase from the head of the
definition to the body.
This is a more general formulation than our notion of stratification for types:
we require the type to be stratified exactly according to the structure of an index term, instead of
a more general decreasing measure.
However, we could potentially replicate such a measure by suitably extending our index language.

Another approach to supporting recursive definitions in proof theory is via a rewriting relation, as
in the Deduction Modulo system \citep{dowek:modulo}.
The idea of this system is to generalize a given first-order logic to account for a congruence
relation defined by a set of rewrite rules.
This rewriting could include recursive definitions in the same sense as Tiu.
\citet{Dowek:JSL03} show that such logics under congruences can be proven normalizing given general
conditions on the congruence.
\LONGVERSION{Specifically, they define the notion of a \emph{pre-model} of a congruence, whose existence is
sufficient to prove cut elimination of the logic.}{}
\citet{Baelde:LICS12} extend this work in the following way.
First, they present a first-order logic with inductive and co-inductive definitions, together with a
general form of equality.
They show strong normalization for this logic using a reducibility candidate argument.
Crucially, their proof is in terms of a pre-model which anticipates the addition of recursive
definitions via a rewrite relation.
Then they give conditions on the rewrite rules, essentially requiring that each definition follows a
well-founded order on its arguments.
Under these conditions, they are able to construct a pre-model for the relation, proving
normalization as a result.
Although their notion of stratification of recursive definitions is
slightly more general than ours, our treatment is perhaps more direct
as the rewriting of types takes place within our typing rules, and our
semantic model accounts for stratified types directly. 

From a programmer's view, the proof theoretic foundations give rise to
programs written using iterators; our use of Mendler-style
(co)recursion is arguably closer to standard programming practice. 
Mendler-style recursion schemes for term-indexed languages have been investigated by \cite{Ahn:thesis}.
Ahn describes an extension of System F$_{\omega}$ with erasable term indices, called F$_i$.
He combines this with fixed points of type operators, as in the Fix$_{\omega}$ language by
\cite{Abel:CSL04}, to produce the core language Fix$_i$.
In Fix$_i$, one can embed Mendler-style recursion over term-indexed data types by Church-style
encodings.

Fundamentally, our use of indices is more liberal than in Ahn's core languages.
In F$_i$, term indices are drawn from the same term language as programs.
They are treated polymorphically, in analogue with polymorphic type indices, i.e. they must have
closed types and cannot be analyzed at runtime.
Our approach is to separate the language of index terms from the language of programs.
In \tores\/, the indices that appear in types can be handled and analyzed at runtime, may be
dependently typed and have types with free variables.
This flexibility is crucial for writing inductive proofs over LF specifications as we do in Beluga.

\LONGVERSION{Another difference from Ahn's work is our treatment of Mendler-style recursion.
Ahn is able to embed a variety of Mendler-style recursion schemes via Church encodings, taking
advantage of polymorphism and type-level functions inherited from System F$_{\omega}$.
Our work does not include polymorphism and general type-level functions as we concentrate on a small
core language for inductive reasoning.
For this purpose, \tores\/ includes recursive types with a Mendler-style elimination form.
We believe this treatment is a more direct interpretation of Mendler recursion for indexed recursive types.
}

\section{Conclusion and Future Work}
We presented a core language \tores\/ extending an indexed type system with (co)recursive types and
stratified types.
We argued that \tores\/ provides a sound and powerful foundation for programming (co)inductive proofs, in
particular those involving logical relations.
This power comes from the (co)induction principles on recursive types given by Mendler-style (co)recursion
as well as the flexibility of recursive definitions given by stratified types.
Type checking in \tores\/ is decidable and types are preserved during evaluation.
The soundness of our language is guaranteed by our logical predicate semantics and termination
proof.
We believe that \tores\/ balances well the proof-theoretic power with a simple
metatheory (especially when compared with full dependent types).

An important question to investigate in the future is how to compile a
practical language that supports (co)pattern matching into the core language we
propose in \tores. \SHORTVERSION{Similarly it would also be interesting to explore
how our treatment of indexed recursive and stratified types could help
(or hinder) proof search.  }
Such issues are important to solve in order to create a productive user experience for dependently
typed programming and proving.

\LONGVERSION{
It would also be interesting to explore how our treatment of indexed recursive and stratified
types could help (or hinder) proof search.
Proof search is a fundamental technique to ease the development and maintenance of proofs, by
automatically generating parts of proof terms.
Like \citet{Baelde:LICS12}, we are curious to see how our treatment of recursive definitions can be
handled by search techniques, especially those derived from focusing \citep{Andreoli92}.
}


\LONGVERSION{\newpage}
\appendix
\section{Appendix}
\label{sec:appendix}

\subsection{Kinding}

\LONGSHORT{
Our kinding rules are shown in \cref{fig:kinding}.  We employ a
bidirectional kinding system to make it evident when kinds can be
inferred and when kinding annotations are necessary.  Kinding depends
on two contexts: the index context $\Delta$, since index variables may
appear in types and kinds, and the type variable context $\Xi$.  Note
that in general the kinds assigned to type variables in $\Xi$ may
depend on the index variables in $\Delta$.  The checking judgment
$\Delta;\Xi \gives T \chk K$ takes all expressions as inputs and
verifies that the type is well-kinded.  The inference judgment
$\Delta;\Xi \gives T \syn K$ takes the contexts and type as input to
produce a kind as output.

In our rules, the kind $\type$ of types is always checked, and may rely on inference via the
conversion rule.
In addition, type-level lambdas $\Lam u T$ are checked against a kind $\pik{u \ann U} K$ by
checking the body $T$ under an extended index context $\Delta, u \ann U$.
On the other hand, kinds are inferred (synthesized) for type variables by looking them up in the
context $\Xi$, as well as for type-level applications, recursive types and stratified types.
Subtly, the kinding for type applications $T \App M$ requires that $T$ synthesize a kind: in
particular $T$ cannot be a type-level lambda, which would be checked against a kind.
This means that types in $\type$ do not arise from reducible lambda applications: lambdas must occur
within recursive or stratified types.
Finally, recursive and stratified types synthesize the kinds in their annotations.
Recursive type variables are added to the context $\Xi$ for checking the body of the type.
Stratified types require checking the constituent types using the index information gleaned in each
branch: $T_0$ is checked against $K\pr[\sub{0}{u}]$ and $T_s$ against $K\pr[\sub{\suc u}{u}]$.
}{
We employ a bidirectional kinding system to show when kinds can be inferred and
when kinding annotations are necessary.
The judgments use two contexts: an index context $\Delta$ and a type variable
context $\Xi$.
Note that in general the kinds assigned to type variables in $\Xi$ may
depend on index variables in $\Delta$.}

\begin{figureone}{
  Kinding rules for \tores
  \label{fig:kinding}
}
\[
\begin{array}{l}
\fbox{$\Delta;\Xi \gives T \chk K$}~\qquad\mbox{Check type $T$ against kind $K$}
\\[0.8em]
\infer{\Delta;\Xi \gives \Unit \chk \type}{}
\qquad
\infer{\Delta;\Xi \gives T_1 \cross T_2 \chk \type}{
       \Delta;\Xi \gives T_1 \chk \type & \Delta;\Xi \gives T_2 \chk \type}
\qquad
\infer{\Delta;\Xi \gives T_1 \tsum T_2 \chk \type}{
       \Delta;\Xi \gives T_1 \chk \type & \Delta;\Xi \gives T_2 \chk \type}
\\[1em]
\infer{\Delta;\Xi \gives \iarr{\many{u \ann U}}{S}{T} \chk \type}
      {\Delta \gives (\many{u \ann U}) \itype &
       \Delta,\many{u \ann U};\Xi \gives S \chk \type &
       \Delta,\many{u \ann U};\Xi \gives T \chk \type}
\qquad
\infer{\Delta;\Xi \gives \sigt{u \ann U}{T} \chk \type}{
       \Delta \gives U \itype & \Delta,u \ann U;\Xi \gives T \chk \type}
\\[1em]
\infer{\Delta;\Xi \gives M = N \chk \type}{
       \Delta \gives M \oft U & \Delta \gives N \oft U}
\qquad
\infer{\Delta;\Xi \gives \Lam u T \chk \pik{u \ann U} K}{\Delta,u \ann U;\Xi \gives T \chk K}
\qquad
 \infer{\Delta;\Xi \gives T \chk \type}{\Delta;\Xi \gives T \syn \type}
\\[1em]
\fbox{$\Delta;\Xi \gives T \syn K$}~\qquad\mbox{Infer a kind $K$ for type $T$}
\\[0.8em]
\infer{\Delta;\Xi \gives X \syn K}{X \ann K \in \Xi}
\qquad
\infer{\Delta;\Xi \gives T \App M \syn K[\sub M u]}{
       \Delta;\Xi \gives T \syn \pik{u \ann U} K &
       \Delta \gives M \oft U}
\\[1em]
\infer{\Delta;\Xi \gives \rect {X \ann K} T \syn K}
      {\Delta;\Xi,X \ann K \gives T \chk K}
\qquad
\infer{\Delta;\Xi \gives \corect {X \ann K} T \syn K}
      {\Delta;\Xi,X \ann K \gives T \chk K}
\\[1em]
\infer{\Delta;\Xi \gives \Rec_{K} (0 \mapsto T_0 \sep \suc u,\,X \mapsto T_s) \syn K}
      {K = \pik{u \ann \nat} K\pr &
      \Delta;\Xi \gives T_0 \chk K\pr[\sub{0}{u}] &
      \Delta,u \ann \nat;\Xi,X \ann K\pr \gives T_s \chk K\pr[\sub{\suc u}{u}]}
\\[1em]
\end{array}
\]
\end{figureone}

\subsection{Value Typing}

Values are the results of evaluation.
Note that values are closed, and hence their typing judgment does not require a
context.
However, closures do contain terms (typed with the main typing judgment) and
environments (typed against the contexts $\Delta$ and $\Gamma$).

\begin{figureone}{
  Value and environment typing
  \label{fig:valtyping}
}
\[
\begin{array}{l}
\fbox{$v : T$}\quad
\mbox{Value $v$ has type $T$} \\[0.8em]
\infer
    {\close{g}{\theta}{\sigma} : \msub \theta T} {
     \cdot \gives \theta : \Delta
  & \sigma : \msub{\theta} \Gamma
  & \Delta; \empt; \Gamma \gives g \chk T
    }
\quad
\infer{\unit : \Unit}{}
\quad
\infer{\refl : M = N}{\cdot \vdash M = N}
\quad
\infer{\pair{v_1}{v_2} : T_1 \times T_2}
      {v_1 : T_1 & v_2 : T_2}
\\[0.8em]
\infer{\inj{i} v : T_1 \tsum T_2}{v : T_i}
\quad
\infer{\pack{M}{v} : \sigt{u \ann U} T}
      {\cdot \gives M \oft U & v : T[\sub{M}{u}]}
\quad
\infer{\inj{\mu} v : (\rect{X \ann K} \Lam{\vec u} T) \App \vec M}
      {v : T[\many{\sub{M}{u}};\sub{\rect{X \ann K} \Lam{\vec u} T}{X}]}
\\[0.8em]
\infer{\inj{0} v \oft T_{\Rec} \App 0 \App \vec M}{v \oft T_0 \App \vec M}
\quad
\infer{\inj{\suc} v \oft T_{\Rec} \App (\suc N) \vec M}{v \oft T_s[\sub{N}{u};\sub{T_{\Rec} \App N}{X}] \App \vec M}
\\[0.8em]
\infer
    {\close{\corec f t}{\theta}{\sigma} \cdot \vec N\ v
      : \msub \theta {(\corect{X \ann K} \Lam{\many{u\pr}} T)}\ \vec N}
    {\deduce{\Delta;\cdot,X \ann K;\Gamma,f \ann (\iarr{\many{u \ann U}}{S}{X \vec{u}})
     \gives t \chk \iarr{\many{u \ann U}}{S}{T[\many{\sub{u}{u\pr}}]}}
     {\cdot \gives \theta : \Delta
  & \sigma : \msub{\theta} \Gamma
  & \cdot \vdash \vec N : \vec U
  & v : S[\theta,\many{\sub{N}{u}}]}}
\\[1em]
\fbox{$\sigma: \Gamma$}\quad
\mbox{Environment $\sigma$ has domain $\Gamma$} \\[0.8em]
    \infer{\cdot : \cdot}{}
\qquad
\infer{(\sigma, v/x) : \Gamma, x{:}T}
      {\sigma : \Gamma \qquad v : T}
\end{array}
\]
\end{figureone}

\LONGVERSION{
\subsection{Proofs}

\subred
\begin{proof}
  By mutual induction on the evaluation judgments.
  We include a few key cases here.
  \\[1em]
 \textbf{Case}~~ $\icnc{\eval{s}{\theta;\sigma}{\refl}}{\Delta\pr \gives \Theta \match \theta \synth (\empt \sep \theta\pr)}{\eval{t}{\theta\pr;\sigma}{v}}{\eval{(\eqelim{s}{\Theta}{\Delta\pr}{t})}{\theta;\sigma} v}{}$
\\[1em]
$\Delta;\empt;\Gamma \gives s \syn M = N$ and
$\Delta\pr;\empt;\Gamma[\Theta] \gives t \chk T[\Theta]$ \hfill by inversion of typing \\
$\refl \oft (M = N)[\theta]$ \hfill by I.H. \\
$\refl \oft M[\theta] = N[\theta]$ \hfill by type substitution \\
$\gives M[\theta] = N[\theta]$ \hfill by inversion of value typing \\
$\gives \theta\pr \oft \Delta\pr$ and $\gives \Theta[\theta\pr] = \theta$ \hfill by soundness of matching (Req. \ref{thm:sound-match}) \\
$T[\Theta][\theta\pr] = T[\Theta[\theta\pr]] = T[\theta]$ \hfill by associativity of type substitution \\
$\Gamma[\Theta][\theta\pr] = \Gamma[\Theta[\theta\pr]] = \Gamma[\theta]$ \hfill similarly for contexts \\
$\sigma \oft \Gamma[\theta]$ \hfill by assumption \\
$\sigma \oft \Gamma[\Theta][\theta\pr]$ \hfill by context equality \\
$v \oft T[\Theta][\theta\pr]$ \hfill by I.H. \\
$v \oft T[\theta]$ \hfill by type equality
\\[1em]
 \textbf{Case}~~ $\icnc{\eval{t}{\theta;\sigma}{c}}{\eval{s}{\theta;\sigma}{v}}{\appval{c}{\many{M[\theta]}}{v} \evto w}{\eval{(t \app \vec M \app s)}{\theta;\sigma}{w}}{}$
\\[1em]
$\gives \theta : \Delta$ and $\sigma : \Gamma[\theta]$ \hfill by assumption \\
$\Delta;\cdot;\Gamma \gives t \app \vec M \app s \syn T[\many{M/u}]$ \hfill by assumption \\
$\Delta;\cdot;\Gamma \gives t \syn \iarr{\many{u \ann U}}{S}{T}$ \\
$\Delta \gives \vec M \oft (\many{u \ann U})$ \\
$\Delta;\cdot;\Gamma \gives s \chk S[\many{\sub{M}{u}}]$ \hfill by inversion of typing \\
$\gives \many{M[\theta]} \oft (\many{u \ann U})[\theta]$ \hfill extending Req. \ref{req:isubst-typ-sim} to index spines \\
$c : (\iarr{\many{u \ann U}}{S}{T})[\theta]$ \hfill by I.H. \\
$v : S[\many{M/u}][\theta]$ \hfill by I.H. \\
$v : S[\theta, \many{\sub{M[\theta]}{u}}]$ \hfill by associativity of type substitution \\
$c = g[\theta\pr; \sigma\pr]$ where $g$ is a function value
\hfill by closure grammar and typing \\
$\Delta\pr; \empt; \Gamma\pr \gives g \chk G$ and $g[\theta\pr ; \sigma\pr] : G[\theta\pr]$
\hfill by closure typing \\
$G[\theta\pr] = (\iarr{\many{u \ann U}}{S}{T})[\theta]$ \hfill by previous lines \\
$G = \iarr{\many{u\ann U\pr}}{S\pr}{T\pr}$ where $\many{U\pr[\theta\pr]} = \many{U[\theta]}$ \\
and $S\pr[\theta\pr, \many{\sub{u}{u}}] = S[\theta, \many{\sub{u}{u}}]$
and $T\pr[\theta\pr, \many{\sub{u}{u}}] = T[\theta, \many{\sub{u}{u}}]$ \hfill by equality of types \\
$v : S\pr[\theta\pr, \many{\sub{M[\theta]}{u}}]$ \hfill by type equality \\
$\gives \many{M[\theta]} : (\many{u\ann U\pr[\theta\pr]})$ \hfill by type equality \\
$w : T\pr[\theta\pr, \many{\sub{M[\theta]}{u}}]$ \hfill by I.H. 2 \\
$w : T[\theta, \many{\sub{M[\theta]}{u}}]$ \hfill by type equality
\\[1em]
\textbf{Case}~~$\ianc{t[\theta,\many{\sub{N}{u}};\sigma,\sub{v}{x}] \evto w}
                     {\evalapp{(\ilam{\vec u}{x} t)[\theta;\sigma]}{\vec N}{v}{w}}{}$
\\[1em]
$\Delta ; \cdot ; \Gamma \gives \ilam{\vec u}{x} t \chk \iarr{\many{u\ann U}}{S}{T}$ \hfill by assumption \\
$\Delta,\many{u \ann U};\dot;\Gamma, x \ann S \gives t \chk T$ \hfill by inversion of typing \\
$\gives \theta : \Delta$ \hfill by assumption \\
$\gives \theta, \many{\sub{N}{u}} : \Delta, \many{u\ann U}$ \hfill by substitution typing \\
$\sigma : \Gamma[\theta]$ \hfill by assumption \\
$\sigma : \Gamma[\theta, \many{\sub{N}{u}}]$ \hfill by weakening since $\vec u$ do not occur in $\Gamma$ \\
$\sigma, \sub{v}{x} : \Gamma[\theta, \many{\sub{N}{u}}], x:S[\theta, \many{\sub{N}{u}}]$ \hfill by environment typing \\
$\sigma, \sub{v}{x} : (\Gamma, x:S)[\theta, \many{\sub{N}{u}}]$ \hfill by def. of context substitution \\
$w : T[\theta, \many{\sub{N}{u}}]$ \hfill by I.H.
\\[1em]
\textbf{Case}~~$\ibnc{t[\theta;\sigma,\sub{(\rec f t)[\theta;\sigma]}{f}] \evto c}
                     {\evalapp{c}{\vec N}{v\pr}{w}}
                     {\evalapp{(\rec f t)[\theta;\sigma]}{\vec N}{(\inj{\mu} v\pr)}{w}}{}$
\\[1em]
$\Delta;\cdot;\Gamma \gives \rec f t \chk \iarr{\many{u \ann U}}{(\rect{X \ann K} \Lam{\many{v}} S) \App \vec{u}}{T}$ \hfill by assumption
\\
$\Delta;X \ann K;\Gamma,f \ann (\iarr{\many{u \ann U}}{X \vec{u}}{T})
  \gives t \chk \iarr{\many{u \ann U}}{S[\many{\sub{u}{v}}]}{T}$ \hfill by inversion of typing \\
$\Delta; \empt; \Gamma , f \ann (\iarr{\many{u \ann U}}{(\rect{X \ann K} \Lam{\many{v}} S)~\vec u}{T}) \gives t \chk
\iarr{\many{u \ann U}}{S[\many{\sub{u}{v}}; \rect{X \ann K} \Lam{\many{v}} S/X]}{T}$ \\
  {} \hfill by substitution property of type variables \\
$\inj{\mu}v\pr : ((\rect{X \ann K} \Lam{\many{v}} S)~\vec u))[\theta, \many{N/u}]$ \hfill by assumption \\
$v\pr : S[\theta, \many{N/u};~\rect{X \ann K} \Lam{\many{v}} S/X]$ \hfill by value typing \\
$\gives \theta : \Delta$ \hfill by assumption \\
$\sigma, \sub{(\rec f t)[\theta;\sigma]}{f} : \Gamma[\theta], f \ann (\iarr{\many{u \ann U}}{(\rect{X \ann K} \Lam{\many{v}} S)~\vec u}{T})[\theta]$
\hfill by environment typing \\
$\sigma, \sub{(\rec f t)[\theta;\sigma]}{f} : (\Gamma, f \ann (\iarr{\many{u \ann U}}{(\rect{X \ann K} \Lam{\many{v}} S)~\vec u}{T}))[\theta]$
\hfill by def. of context substitution \\
$c : (\iarr{\many{u \ann U}}{S[\many{\sub{u}{v}} ;\rect{X \ann K} \Lam{\many{v}} S/X]}{T})[\theta]$ \hfill by I.H.  \\
$c = g[\theta\pr; \sigma\pr]$ where $\Delta\pr; \empt; \Gamma\pr \gives g \chk G$ and $g[\theta\pr ; \sigma\pr] : G[\theta\pr]$ \\
and $\gives \theta\pr : \Delta\pr$ and $\sigma\pr : \Gamma\pr[\theta\pr]$
\hfill by closure grammar and typing \\
$G[\theta\pr] = (\iarr{\many{u \ann U}}{S}{T})[\theta]$ \hfill by previous lines \\
$G = \iarr{\many{u\ann U\pr}}{S\pr}{T\pr}$ where $\many{U\pr[\theta\pr]} = \many{U[\theta]}$ \\
and $S\pr[\theta\pr, \many{\sub{u}{u}}] = S[\theta, \many{\sub{u}{u}}~;~ \rect{X \ann K} \Lam{\many{v}} S/X]$
and $T\pr[\theta\pr, \many{\sub{u}{u}}] = T[\theta, \many{\sub{u}{u}}]$ \hfill by type equality \\
$\Delta\pr; \empt; \Gamma\pr \gives g \chk \iarr{\many{u\ann U\pr}}{S\pr}{T\pr}$ \hfill by previous lines \\
$\gives \vec N \oft (\many{u:U[\theta]})$ \hfill by assumption \\
$\gives \vec N : (\many{u:U\pr[\theta\pr]})$ \hfill by type equality \\
$v\pr : S\pr[\theta\pr, \many{\sub{N}{u}}]$ \hfill by type equality \\
$w : T\pr[\theta\pr,\many{\sub{N}{u}}]$ \hfill by I.H. \\
$w : T[\theta, \many{\sub{N}{u}}]$ \hfill by type equality \\
\\[1em]
\textbf{Case}~~$\infer{\appout{((\corec f t)[\theta;\sigma] \cdot \vec N\ v)} \evto w}
      {t[\theta;\sigma,\sub{(\corec f t)[\theta;\sigma]}{f}] \evto c
        & \evalapp{c}{\vec N}{v}{w}}$
      \\[1em]
$(\corec f t)[\theta;\sigma] \cdot \vec N\ v
      : (\corect{X \ann K} \Lam{\many{u\pr}} T) \App \vec M$ \hfill by assumption \\
$\close{\corec f t}{\theta}{\sigma} \cdot \vec N\ v
      : \msub \theta {(\corect{X \ann K} \Lam{\many{u\pr}} T\pr)}\ \vec N$ \hfill by value typing \\
$(\corect{X \ann K} \Lam{\many{u\pr}} T) = (\corect{X \ann K} \Lam{\many{u\pr}} T\pr)[\theta]$ and $\vec N = \vec M$ \hfill by type equality \\
$\Delta;\cdot,X \ann K;\Gamma,f \ann (\iarr{\many{u \ann U}}{S}{X \vec{u}})
      \gives t \chk \iarr{\many{u \ann U}}{S}{T'[\many{\sub{u}{u\pr}}]}$ \\
and $\cdot \gives \theta : \Delta$ \\
and $\sigma : \msub{\theta} \Gamma$ \\
and  $\cdot \vdash \vec N : \vec U$ \\
and $v : S[\theta,\many{\sub{N}{u}}]$ \hfill by inversion on value typing \\
$\Delta;\cdot;\Gamma,f \ann (\iarr{\many{u \ann U}}{S}{(\corect{X \ann K} \Lam{\many{u\pr}} T\pr)\ \vec{u}})
     \gives t' \chk \iarr{\many{u \ann U}}{S}{T\pr[\many{\sub{u}{u\pr}};(\corect{X \ann K} \Lam{\many{u\pr}} T\pr)/X]}$
     \\ {} \hfill by substitution property of type variables \\
$\sigma, \sub{(\corec f t)[\theta;\sigma]}{f} : \Gamma[\theta], f \ann (\iarr{\many{u \ann U}}{S}{(\corect{X \ann K} \Lam{\many{u\pr}} T\pr)~\vec u})[\theta]$
\hfill by environment typing \\
$\sigma, \sub{(\corec f t)[\theta;\sigma]}{f} : (\Gamma, f \ann (\iarr{\many{u \ann U}}{S}{(\corect{X \ann K} \Lam{\many{u\pr}} T\pr)~\vec u}))[\theta]$
\hfill by def. of ctx subst. \\
$c : (\iarr{\many{u \ann U}}{S}{T\pr[\many{\sub{u}{u\pr}};(\corect{X \ann K} \Lam{\many{u\pr}} T\pr)/X]})[\theta]$ \hfill by I.H.\\
$c = g[\theta';\sigma']$ where $\Delta'; \cdot; \Gamma' \vdash g \chk G$
and $g[\theta';\sigma'] : G[\theta']$ \hfill by closure grammar and typing\\
$G[\theta'] = (\iarr{\many{u \ann U}}{S}{T\pr[\many{\sub{u}{u\pr}};(\corect{X \ann K} \Lam{\many{u\pr}} T\pr)/X]})[\theta]$
\hfill by type uniqueness\\
$G = (\iarr{\many{u \ann U\pr}}{S\pr}{T''})$ where $\many{U\pr[\theta']} = \many{U[\theta]}$
and $S\pr[\theta', \many{\sub{u}{u}}] = S[\theta, \many{\sub{u}{u}}]$\\
and $T''[\theta', \many{\sub{u}{u}}] = T\pr[\theta, \many{\sub{u}{u\pr}}~;~ (\corect{X \ann K} \Lam{\many{u\pr}} T\pr)[\theta]/X]$ \hfill by type equality \\
$\Delta'; \empt; \Gamma' \gives g \chk \iarr{\many{u\ann U\pr}}{S\pr}{T''}$ \hfill by previous lines \\
$\vdash \vec N : \many{U\pr[\theta']}$ \hfill by type equality \\
$v : S'[\theta', \many{\sub{N}{u}}]$ \hfill by type equality \\
$w : T''[\theta',\many{\sub{N}{u}}]$ \hfill by I.H. \\
$w : T'[\theta, \many{\sub{M}{u\pr}}~;~ (\corect{X \ann K} \Lam{\many{u\pr}} T')[\theta]/X]$ \hfill by type equality \\
$w : T[\many{\sub{M}{u\pr}}~;~ (\corect{X \ann K} \Lam{\many{u\pr}} T)/X]$ \hfill by type equality


\end{proof}

\prefp
\begin{proof}
  Recall that $\semlfp \F = \meet \mc S$ where
  $\mc S = \{ \C \in \mc L \sep \fall{\X \in \mc L} \X \leq \C \implies \F(\X) \leq \C \}$.
  To show $\F(\semlfp \F) \leq \meet \mc S$, it suffices to show
  $\F(\semlfp \F) \leq \C$ for every $\C \in \mc S$.
  By definition of the meet, $\semlfp \F \leq \C$ for every $\C \in \mc S$.
  But by definition of $\mc S$, this implies that $\F(\semlfp \F) \leq \C$ as required.
  (This argument exploits our impredicative definition of $\semlfp$.)
\end{proof}

\funprefp
\begin{proof}
  We will reframe the lemma statement using a new piece of notation.
  For a closure $c \in \VAL$ and $\B \in \vec \U \to \powset \VAL = \mc L$, define $\E_c(\B) \in \mc L$ by
  $\E_c(\B)(\vec M) = \{ v \in \VAL \sep \evalapp{c}{\vec M}{v}{w} \in \B(\vec M) \}$.
  One can see that $c \in \semarr{\vec \U}{\A}{\B} \iff \A \leq \E_c(\B)$ (using the ordering on $\mc L$).
  We can now rewrite the lemma as the following:
  \begin{enumerate}
  \item if $\fall{\X \in \mc L} \X \leq \E_c(\B) \implies \F \X \leq \E_c(\B)$ then $\semlfp \F \leq \E_c(\B)$;
  \item if $\fall{\X \in \mc L} \E_c(\B) \leq \X \implies \E_c(\B) \leq \F \X$ then $\E_c(\B) \leq \semgfp \F$.
  \end{enumerate}

  To prove each of them, we first assume the premise. Let us do the first statement now.
  Recall that $\semlfp \F = \meet \mc S$ where
  $\mc S = \{ \C \in \mc L \sep \fall{\X \in \mc L} \X \leq \C \implies \F(\X) \leq \C \}$.
  By definition of the meet, $\semlfp \F \leq \C$ for every $\C \in \mc S$.
  Therefore it suffices to show that there is just one $\C \in \mc S$ for which $\C \leq \E_c(\B)$.
  However, our assumption is exactly that $\E_c(\B) \in \mc S$, and clearly $\E_c(\B) \leq \E_c(\B)$,
  so we are done.
  (This proof again makes use of impredicativity in the definition of $\semlfp$.)

  The second case follows the same idea. $\semgfp \F = \join \mc S$ where
  $\mc S = \{ \C \in \mc L \sep \fall{\X \in \mc L} \C \leq \X \implies \C \leq \F(\X) \}$.
  By definition of join, $\C \leq \semgfp F$ for every $\C \in \mc S$. It thus suffices
  to show that there is one $\C \in \mc S$ for which $\E_c(\B) \leq \C$. Again,
  $\E_c(\B) \in \mc S$ and $\E_c(\B) \leq \E_c(\B)$ so we are done.
\end{proof}

\appprefp
\begin{proof}
  Let $\mc L = \sem{K}(\theta)$. \\
  Define $\F : \mc L \to \mc L$ by $\F(\X) = \vec N \mapsto \inj{\mu} \sem{S}(\theta,\many{\sub{N}{u}};\eta,\sub{\X}{X})$. \\
  Then $\sem{R}(\theta;\eta)$ \\
  $= \semlfp(\X \mapsto \injop{\mu} \sem{\Lam{\vec u} S}(\theta;\eta,\sub{\X}{X}))$
    \hfill by $\sem{\rect X T}$ def. \\
  $= \semlfp(\X \mapsto \vec N \mapsto \inj{\mu} \sem{S}(\theta,\many{\sub{N}{u}};\eta,\sub{\X}{X}))$
  \hfill by $\sem{\Lam{\vec u} T}$ def. \\
  $= \semlfp \F$ \hfill by $\F$ def. \\
  $\F(\sem{R}(\theta;\eta)) \leq_{\mc L} \sem{R}(\theta;\eta)$ \hfill by Lemma \ref{lem:prefp} \\
  Now $\inj{\mu} \sem{S[\many{\sub{M}{u}};\sub{R}{X}]}(\theta;\eta)$ \\
  $= \inj{\mu} \sem{S}((\id_{\Delta},\many{\sub{M}{u}})[\theta];\eta,\sub{\sem{R}(\theta;\eta)}{X})$
    \hfill by Lemma \ref{lem:subst} \\
  $= \inj{\mu} \sem{S}(\theta,\many{\sub{M[\theta]}{u}};\eta,\sub{\sem{R}(\theta;\eta)}{X})$
    \hfill by \LONGSHORT{Def. \ref{def:isubst-comp}}{composition of index substitutions} \\
  $= \F(\sem{R}(\theta;\eta))(\many{M[\theta]})$ \hfill by $\F$ def. \\
  $\subseteq \sem{R}(\theta;\eta)(\many{M[\theta]})$
    \hfill since $\F(\sem{R}(\theta;\eta)) \leq_{\mc L} \sem{R}(\theta;\eta)$ \\
  $= \sem{R \App \vec M}(\theta;\eta)$ \hfill by $\sem{R \App \vec M}$ def. \\
\end{proof}

\backclos
\begin{proof}
  \begin{enumerate}
  \item Let $c = (\rec f t)[\theta;\sigma]$. \\
  Suppose $\vec M \in \vec \U$ and $v\pr \in (\injop{\mu} \A)(\vec M)$. \\
  $v \in \inj{\mu} \A(\vec M)$ \hfill by $\injop{}$ def. \\
  $v\pr = \inj{\mu} v$ where $v \in \A(\vec M)$ \\
  $t[\theta;\sigma, \sub{c}{f}] \evto c\pr \in \semarr{\vec \U}{\A}{\B}$ \hfill assumption of lemma \\
  $\evalapp{c\pr}{\vec M}{v}{w} \in \B(\vec M)$ \hfill by $\semarr{\vec \U}{\A}{\B}$ def. \\
  $\evalapp{c}{\vec M}{(\inj{\mu} v)}{w} \in \B(\vec M)$ \hfill by $\eapprec$ \\
  $\evalapp{c}{\vec M}{v\pr}{w} \in \B(\vec M)$ \hfill since $v\pr = \inj{\mu} v$ \\
  $c \in \semarr{\vec \U}{\injop{\mu} \A}{\B}$ \hfill by $\semarr{\vec \U}{\A}{\B}$ def.\\

  \item Let $c = (\corec f t)[\theta;\sigma]$. \\
    Suppose $\vec M \in \vec \U$ and $v \in \A(\vec M)$.\\
    $t[\theta;\sigma, \sub{c}{f}] \evto c\pr \in \semarr{\vec \U}{\A}{\B}$ \hfill assumption \\
    $\evalapp{c\pr}{\vec M}{v}{w} \in \B(\vec M)$ \hfill by $\semarr{\vec \U}{\A}{\B}$ def. \\
    $\appout{(c \cdot \vec M\ v)} \evto w \in \B(\vec M)$ \hfill by $\eout{\nu}$\\
    $c \cdot  \vec M\ v \in \outop{\nu}(\B(\vec M))$ \hfill by $\outop{\nu}$ def. \\
    $c \in \semarr{\vec \U}{\A}{\outop{\nu} \B}$ \hfill by $\semarr{\vec \U}{\A}{\B}$ def.
  \end{enumerate}
\end{proof}

\begin{lemma}[Stratified types equivalent to unfolding]
\label{lem:trec}$\;$\\
  Let $T_{\Rec} \equiv \Recnat{T_z}{n}{X}{T_s}{K}$ where
  $K = \pik{n \ann \nat} \pik{\many{u \ann U}} \type$ and $\Delta;\Xi \gives T_{\Rec} \syn K$,
  and $\Delta \gives \vec M \oft (\many{u \ann U})$ and $\Delta \gives N \oft \nat$
  and $\gives \theta \oft \Delta$ and $\eta \in \sem{\Xi}(\theta)$. Then
  \begin{enumerate}
    \item
    $\sem{T_{\Rec} \App 0 \App \vec M}(\theta;\eta) = \inj{0} (\sem{T_z \App \vec M}(\theta;\eta))$ and
    \item
    $\sem{T_{\Rec} \App (\suc N) \App \vec M}(\theta;\eta) =
    \inj{\suc} (\sem{T_s[\sub{N}{n};\sub{(T_{\Rec} \App N)}{X}] \App \vec M}(\theta;\eta))$.
  \end{enumerate}
\end{lemma}

\begin{proof}
  Let $\C = \injop{0} \sem{T_z}(\theta;\eta)$ and
  $\F = (N \mapsto \X \mapsto \injop{\suc} \sem{T_s}(\theta,\sub{N}{n};\eta,\sub{\X}{X}))$.
  \begin{enumerate}
    \item
      $\sem{T_{\Rec} \App 0 \App \vec M}(\theta;\eta)$ \\
      $= \sem{T_{\Rec}}(\theta;\eta)(0)(\many{M[\theta]})$ \hfill by $\sem{T \App \vec M}$ def. \\
      $= \RECnat{\C}{\F} \App 0 \App \many{M[\theta]}$ \hfill by $\sem{T_{\Rec}}$ def. \\
      $= \C \App \many{M[\theta]}$ \hfill by $\REC$ def \\
      $= (\injop{0} \sem{T_z}(\theta;\eta))(\many{M[\theta]})$ \hfill by $\C$ def. \\
      $= \inj{0} (\sem{T_z}(\theta;\eta)(\many{M[\theta]}))$ \hfill by $\injop{}$ def. \\
      $= \inj{0} (\sem{T_z \App \vec M}(\theta;\eta))$ \hfill by $\sem{T \App \vec M}$ def. \\
    \item
      $\sem{T_{\Rec} \App (\suc N) \App \vec M}(\theta;\eta)$ \\
      $= \sem{T_{\Rec}}(\theta;\eta)(\suc N[\theta])(\many{M[\theta]})$ \hfill by $\sem{T \App \vec M}$ def. \\
      $= \RECnat{\C}{\F} \App (\suc N[\theta]) \App \many{M[\theta]}$ \hfill by $\sem{T_{\Rec}}$ def. \\
      $= \F \App N[\theta] \App (\RECnat{\C}{\F} \App N[\theta]) \App \many{M[\theta]}$ \hfill by $\REC$ def. \\
      $= (\injop{\suc} \sem{T_s}(\theta,\sub{N[\theta]}{n};\eta,\sub{(\RECnat{\C}{\F} \App N[\theta])}{X}))(\many{M[\theta]})$ \hfill by $\F$ def. \\
      $= \inj{\suc} (\sem{T_s}(\theta,\sub{N[\theta]}{n};\eta,\sub{(\RECnat{\C}{\F} \App N[\theta])}{X})(\many{M[\theta]}))$
      \hfill by $\injop{}$ def. \\
      $= \inj{\suc} (\sem{T_s}((\id_{\Delta},\sub{N}{n})[\theta];\eta,\sub{\sem{T_{\Rec} \App N}(\theta;\eta)}{X})(\many{M[\theta]}))$
      \hfill by \LONGSHORT{Def. \ref{def:isubst-comp}}{composition of index substitutions} and $\sem{T_{\Rec}}$ def. \\
      $= \inj{\suc} (\sem{T_s[\sub{N}{n};\sub{(T_{\Rec} \App N)}{X}]}(\theta;\eta)(\many{M[\theta]}))$ \hfill by Lemma \ref{lem:subst} \\
      $= \inj{\suc} (\sem{T_s[\sub{N}{n};\sub{(T_{\Rec} \App N)}{X}] \App \vec M}(\theta;\eta))$ \hfill by $\sem{T \App \vec M}$ def. \\
  \end{enumerate}
\end{proof}

\termination
\begin{proof}
  The proof is by induction on the typing derivation.
  Technically this is a mutual induction on the dual judgments of type checking and synthesis.
  In each case we introduce the assumptions $\gives \theta \oft \Delta$ and $\eta \in \sem{\Xi}(\theta)$
  and $\sigma \in \sem{\Gamma}(\theta;\eta)$, where $\Delta$, $\Xi$ and $\Gamma$ appear in the
  conclusion of the relevant typing rule.
  We will slightly abuse notation to introduce an existentially quantified variable in the judgment $t[\theta;\sigma] \evto v \in \V$, to mean that
  $\exists v.\, t[\theta;\sigma] \evto v \wedge v \in \V$.
  Note that the cases involving stratified types and induction over indices are specific to the
  particular index language (and assume an induction principle over closed index types).
  The rest of the proof is generic, only assuming the properties in Section \ref{sec:tores-index}.

  \paragraph*{Case:}
  \[ \infer[\tunit]{\Delta;\Xi;\Gamma \gives \unit \chk \Unit}{} \]
  $\unit[\theta;\sigma] \evto \unit$ \hfill by $\eunit$ \\
  $\unit \in \sem{\Unit}(\theta;\eta)$ \hfill by $\sem \Unit$ def. \\

  \paragraph*{Case:}
  \[ \infer[\tvar]{\Delta;\Xi;\Gamma \gives x \syn T}{x \ann T \in \Gamma} \]
  $\sigma \in \sem{\Gamma}(\theta;\eta)$ \hfill assumption of Thm \ref{thm:term} \\
  $x \ann T \in \Gamma$ \hfill premise of $\tvar$ \\
  $\sigma(x) = v \in \sem{T}(\theta;\eta)$ \hfill by Def. \ref{def:sem_typctx} \\
  $x[\theta;\sigma] \evto v$ \hfill by $\evar$ \\

  \paragraph*{Case:}
  \[ \infer[\tlam]
       {\Delta;\Xi;\Gamma \gives \ilam{\vec u}{x} s \chk \iarr{\many{u \ann U}}{R}{S}}
       {\Delta,\many{u \ann U};\Xi;\Gamma,x \ann R \gives s \chk S} \]
  Let $c$ be the closure $(\ilam{\vec u}{x}{s})[\theta;\sigma]$. \\
  $(\ilam{\vec u}{x}{s})[\theta;\sigma] \evto c$ \hfill by $\elam$ \\
  Suffices to show $c \in \sem{\iarr{\many{u \ann U}}{R}{S}}(\theta;\eta)$, i.e. \\
  $\fall{\vec M \in \sem{(\many{u \ann U})}(\theta)} \fall{v \in \sem{R}(\theta\pr;\eta)}
   \evalapp{c}{\vec M}{v}{w} \in \sem{S}(\theta\pr;\eta)$,
  where $\theta\pr = \theta,\many{\sub{M}{u}}$. \\
  Suppose $\vec M \in \sem{(\many{u \ann U})}(\theta)$ and $v \in \sem{R}(\theta\pr;\eta)$
  where $\theta\pr = \theta,\many{\sub{M}{u}}$. \\
  $\gives \theta \oft \Delta$ \hfill assumption of Thm \ref{thm:term} \\
  $\gives \theta\pr \oft \Delta,\many{u \ann U}$ \hfill by index substitution typing \\
  Let $\sigma\pr = \sigma,\sub{v}{x}$. \\
  $\sigma \in \sem{\Gamma}(\theta;\eta)$ \hfill assumption of Thm \ref{thm:term} \\
  $\sigma \in \sem{\Gamma}(\theta\pr;\eta)$ \hfill since $\vec u \notin \FV(\Gamma)$ \\
  $\sigma\pr \in \sem{\Gamma,x \ann R}(\theta\pr;\eta)$ \hfill by Def. \ref{def:sem_typctx} \\
  $s[\theta\pr;\sigma\pr] \evto w \in \sem{S}(\theta\pr;\eta)$ \hfill
  by I.H. with $\theta\pr$, $\eta$ and $\sigma\pr$ \\
  $\evalapp{c}{\vec M}{v}{w}$ \hfill by $\eapplam$ \\

  \paragraph*{Case:}
  \[ \infer[\tapp]{\Delta;\Xi;\Gamma \gives q \app \vec M \app r \syn S[\many{M/u}]}
                  {\Delta;\Xi;\Gamma \gives q \syn \iarr{\many{u \ann U}}{R}{S} &
                   \Delta \gives \vec M \oft (\many{u \ann U}) &
                   \Delta;\Xi;\Gamma \gives r \chk R[\many{\sub{M}{u}}]} \]
  $q[\theta;\sigma] \evto c \in \sem{\iarr{\many{u \ann U}}{R}{S}}(\theta;\eta)$ \hfill by I.H. \\
  $c \in \semarr{\sem{(\many{u \ann U})}(\theta)}
                {(\vec N \mapsto \sem{R}(\theta,\many{\sub{M}{u}};\eta))}
                {(\vec N \mapsto \sem{S}(\theta,\many{\sub{M}{u}};\eta))}$
  \hfill by $\sem{\iarr{\many{u \ann U}}{R}{S}}(\theta;\eta)$ def. \\
  $\many{M[\theta]} \in \sem{(\many{u \ann U})}(\theta)$ \hfill by Lemma \ref{lem:sem_isubst_spine} \\
  $r[\theta;\sigma] \evto v \in \sem{R[\many{\sub{M}{u}}]}(\theta;\eta)$ \hfill by I.H. \\
  $v \in \sem{R}((\id_{\Delta},\many{\sub{M}{u}})[\theta];\eta)$ \hfill by Lemma \ref{lem:subst} \\
  $v \in \sem{R}(\theta,\many{\sub{M[\theta]}{u}};\eta)$
  \hfill by \LONGSHORT{Def. \ref{def:isubst-comp}}{composition of index substitutions} \\
  $\appval{c}{\many{M[\theta]}}{v} \evto w \in \sem{S}(\theta,\many{\sub{M[\theta]}{u}};\eta)$
  \hfill by \cref{def:sem-funsp} \\
  $w \in \sem{S}((\id_{\Delta},\many{\sub{M}{u}})[\theta];\eta)$
  \hfill by \LONGSHORT{Def. \ref{def:isubst-comp}}{composition of index substitutions} \\
  $w \in \sem{S[\many{\sub{M}{u}}]}(\theta;\eta)$ \hfill by Lemma \ref{lem:subst} \\
  $(q \app \vec M \app r)[\theta;\sigma] \evto w$ \hfill by $\eapp$ \\

  \paragraph*{Case:}
  \[ \infer[\tpair]{\Delta;\Xi;\Gamma \gives \pair{t_1}{t_2} \chk T_1 \times T_2}
                   {\Delta;\Xi;\Gamma \gives t_1 \chk T_1 & \Delta;\Xi;\Gamma \gives t_2 \chk T_2} \]
  $t_i[\theta;\sigma] \evto v_i \in \sem{T_i}(\theta;\eta)$ for $i \in \{1, 2\}$ \hfill by I.H. \\
  $\pair{t_1}{t_2}[\theta;\sigma] \evto \pair{v_1}{v_2}$ \hfill by $\epair$ \\
  $\pair{v_1}{v_2} \in \sem{T_1 \cross T_2}(\theta;\eta)$ \hfill by $\sem{T_1 \cross T_2}$ def. \\

  \paragraph*{Case:}
  \[ \infer[\tsplit]{\Delta;\Xi;\Gamma \gives \psplit{p}{x_1}{x_2} s \chk T}
                    {\Delta;\Xi;\Gamma \gives p \syn T_1 \times T_2 &
                     \Delta;\Xi;\Gamma,x_1 \ann T_1,x_2 \ann T_2 \gives s \chk T} \]
  $p[\theta;\sigma] \evto w \in \sem{T_1 \cross T_2}(\theta;\eta)$ \hfill by I.H. \\
  $w = \pair{v_1}{v_2}$ where $v_i \in \sem{T_i}(\theta;\eta)$ for $i \in \{1, 2\}$ \hfill by $\sem{T_1 \cross T_2}$ def. \\
  $\sigma,\sub{v_1}{x_1},\sub{v_2}{x_2} \in \sem{\Gamma,x_1 \ann T_1,x_2 \ann T_2}(\theta;\eta)$ \hfill by Def. \ref{def:sem_typctx} \\
  $s[\theta;\sigma,\sub{v_1}{x_1},\sub{v_2}{x_2}] \evto v \in \sem{T}(\theta;\eta)$ \hfill by I.H. \\
  $t[\theta;\sigma] \evto v$ \hfill by $\esplit$ \\

  \paragraph*{Case:}
  \[ \infer[\tinj{i}]{\Delta;\Xi;\Gamma \gives \inj{i} t_i \chk T_1 \tsum T_2}
                     {\Delta;\Xi;\Gamma \gives t_i \chk T_i}
     \qquad\text{for $i \in \{1, 2\}$} \]
  This is really two cases. Fix $i \in \{1, 2\}$. \\
  $t_i[\theta;\sigma] \evto v_i \in \sem{T_i}(\theta;\eta)$ \hfill by I.H. \\
  $(\inj{i} t_i)[\theta;\sigma] \evto \inj{i} v_i$ \hfill by $\einj{i}$. \\
  $\inj{i} v_i \in \inj{i} \sem{T_i}(\theta;\eta) \subseteq \sem{T_1 \tsum T_2}(\theta;\eta)$
  \hfill by $\sem{T_1 \tsum T_2}$ def. \\

  \paragraph*{Case:}
  \[ \infer[\tcase]{\Delta;\Xi;\Gamma \gives (\case s \inj{1} x_1 \mapsto t_1 \sep \inj{2} x_2 \mapsto t_2) \chk T}
                   {\Delta;\Xi;\Gamma \gives s \syn T_1 \tsum T_2 &
                    \Delta;\Xi;\Gamma,x_1 \ann T_1 \gives t_1 \chk T &
                    \Delta;\Xi;\Gamma,x_2 \ann T_2 \gives t_2 \chk T} \]
  $s[\theta;\sigma] \evto w \in \sem{T_1 \tsum T_2}(\theta;\eta)$ \hfill by I.H. \\
  $w \in \inj{1}\sem{T_1}(\theta;\eta)$ or $w \in \inj{2} \sem{T_2}(\theta;\eta)$
  \hfill by $\sem{T_1 \tsum T_2}$ def. \\
  $w = \inj{i} v_i$ where $v_i \in \sem{T_i}(\theta;\eta)$, for some $i \in \{1, 2\}$. \\
  $\sigma \in \sem{\Gamma}(\theta;\eta)$ \hfill by assumption of Thm \ref{thm:term} \\
  $\sigma,\sub{v_i}{x_i} \in \sem{\Gamma,x_i \ann T_i}(\theta;\eta)$ \hfill by Def. \ref{def:sem_typctx} \\
  $t_i[\theta;\sigma,\sub{v_i}{x_i}] \evto v \in \sem{T}(\theta;\eta)$ \hfill by I.H. on $t_i$ \\
  $t[\theta;\sigma] \evto v$ \hfill by $\ecase{i}$ \\

  \paragraph*{Case:}
  \[ \infer[\tpack]{\Delta;\Xi;\Gamma \gives \pack{M}{s} \chk \sigt{u \ann U} S}
                   {\Delta \gives M \oft U & \Delta;\Xi;\Gamma \gives s \chk S[\sub{M}{u}]} \]
  $s[\theta;\sigma] \evto w \in \sem{S[\sub{M}{u}]}(\theta;\eta)$ \hfill by I.H. \\
  $w \in \sem{S}((\id_{\Delta},\sub{M}{u})[\theta];\eta)$ \hfill by Lemma \ref{lem:subst} \\
  $w \in \sem{S}(\theta,\sub{M[\theta]}{u};\eta)$
  \hfill by \LONGSHORT{Def. \ref{def:isubst-comp}}{composition of index substitutions} \\
  $(\pack{M}{s})[\theta;\sigma] \evto \pack{M[\theta]}{w}$ \hfill by $\epack$ \\
  $M[\theta] \in \sem{U}(\theta)$ \hfill by Lemma \ref{lem:sem_isubst} \\
  $\pack{M[\theta]}{w} \in \sem{\sigt{u \ann U} S}(\theta;\eta)$ \hfill by $\sem{\sigt{u \ann U} S}$ def. \\

  \paragraph*{Case:}
  \[ \infer[\tunpack]{\Delta;\Xi;\Gamma \gives \unpack{p}{u}{x} q \chk T}
                     {\Delta;\Xi;\Gamma \gives p \syn \sigt{u \ann U}{S} &
                      \Delta,u \ann U;\Xi;\Gamma,x \ann S \gives q \chk T} \]
  $p[\theta;\sigma] \evto w\pr \in \sem{\sigt{u \ann U} S}(\theta;\eta)$ \hfill by I.H. \\
  $w\pr = \pack{M}{w}$ where $M \in \sem{U}(\theta)$ and $w \in \sem{S}(\theta,\sub{M}{u};\eta)$
  \hfill by $\sem{\sigt{u \ann U} S}$ def. \\
  Let $\theta\pr = \theta,\sub{M}{u}$. \\
  $\gives \theta\pr \oft \Delta,u \ann U$ \hfill by index substitution typing \\
  $\sigma \in \sem{\Gamma}(\theta;\eta)$ \hfill assumption of Thm \ref{thm:term} \\
  $\sigma \in \sem{\Gamma}(\theta\pr;\eta)$ \hfill since $u \notin \FV(\Gamma)$ \\
  Let $\sigma\pr = \sigma,\sub{w}{x}$. \\
  $\sigma\pr \in \sem{\Gamma,x \ann S}(\theta\pr;\eta)$ \hfill by Def. \ref{def:sem_typctx} \\
  $q[\theta\pr;\sigma\pr] \evto v \in \sem{T}(\theta\pr;\eta)$ \hfill by I.H. \\
  $(\unpack{p}{u}{x}{q})[\theta;\sigma] \evto v$ \hfill by $\eunpack$ \\
  $v \in \sem{T}(\theta;\eta)$ \hfill since $u \notin \FV(T)$ \\

  \paragraph*{Case:}
  \[ \infer[\trefl]{\Delta;\Xi;\Gamma \gives \refl \chk M = N}{\Delta \gives M = N} \]
  $\gives M[\theta] = N[\theta]$ \hfill by Req. \ref{req:isubst-eq} \\
  $\refl \in \sem{M = N}(\theta;\eta)$ \hfill by $\sem{M = N}$ def. \\
  $\refl[\theta;\sigma] \evto \refl$ \hfill by $\erefl$ \\

  \paragraph*{Case:}
  \[ \infer[\teq]{\Delta;\Xi;\Gamma \gives \eqelim{q}{\Theta}{\Delta\pr}{s} \chk T}
                 {\Delta;\Xi;\Gamma \gives q \syn M = N &
                  \Delta \gives M \unif N \gen (\Delta\pr \mid \Theta) &
                  \Delta\pr;\Xi[\Theta];\Gamma[\Theta] \gives s \chk T[\Theta]} \]
  $q[\theta;\sigma] \evto w \in \sem{M = N}(\theta;\eta)$ \hfill by I.H. \\
  $w = \refl$ and $\gives M[\theta] = N[\theta]$ \hfill by $\sem{M = N}$ def. \\
  $\Delta \gives M \unif N \gen (\Delta\pr \mid \Theta)$ \hfill premise of $\teq$ \\
  $\gives \theta \oft \Delta$ \hfill assumption of Thm \ref{thm:term} \\
  $\Delta\pr \gives \Theta \match \theta \synth (\empt \sep \theta\pr)$ \hfill
  \hfill by \cref{req:match-comp} \\
  $\sem{\Xi[\Theta]}(\theta\pr) = \sem{\Xi}(\Theta[\theta\pr])$ and
  $\sem{T[\Theta]}(\theta\pr;\eta) = \sem{T}(\Theta[\theta\pr];\eta)$ \hfill by Lemma \ref{lem:subst} \\
  $\sem{\Xi[\Theta]}(\theta\pr) = \sem{\Xi}(\theta])$ and
  $\sem{T[\Theta]}(\theta\pr;\eta) = \sem{T}(\theta;\eta)$ \hfill by Thm \ref{thm:sound-match} \\
  Extending Lemma \ref{lem:subst} from types $T$ to typing contexts $\Gamma$, \\
  $\sem{\Gamma[\Theta]}(\theta\pr;\eta) = \sem{\Gamma}(\Theta[\theta\pr];\eta) = \sem{\Gamma}(\theta;\eta)$. \\
  $\eta \in \sem{\Xi[\Theta]}(\theta\pr)$ \hfill since $\eta \in \sem{\Xi}(\theta)$ \\
  $\sigma \in \sem{\Gamma[\Theta]}(\theta\pr;\eta)$ \hfill since $\sigma \in \sem{\Gamma}(\theta;\eta)$ \\
  $s[\theta\pr;\sigma] \evto v \in \sem{T[\Theta]}(\theta\pr;\eta)$ \hfill by I.H. \\
  $v \in \sem{T}(\theta;\eta)$ \\
  $(\eqelim{q}{\Theta}{\Delta\pr}{s})[\theta;\sigma] \evto v$ \hfill by $\eeq$ \\

  \paragraph*{Case:}
  \[ \infer[\teqfalse]{\Delta;\Xi;\Gamma \gives \eqelimfalse s \chk T}
                      {\Delta;\Xi;\Gamma \gives s \syn M = N &
                       \Delta \gives M \unif N \gen \fail} \]
  $s[\theta;\sigma] \evto w \in \sem{M = N}(\theta;\eta)$ \hfill by I.H. \\
  $w = \refl$ and $\gives M[\theta] = N[\theta]$ \hfill by $\sem{M = N}$ def. \\
  $\theta$ unifies $M$ and $N$ in $\Delta$ \hfill by def. of a unifier \\
  $\Delta \gives M \unif N \gen \fail$ \hfill premise of $\teqfalse$ \\
  There is no unifier for $M$ and $N$ \hfill consequence of \cref{req:dec-unif} \\
  Contradiction: derive $(\eqelimfalse s)[\theta;\sigma] \evto v \in \sem{T}(\theta;\eta)$ \\

  \paragraph*{Case:}
  \[ \infer[\tinj{\mu}]{\Delta;\Xi;\Gamma \gives \inj{\mu} s \chk (\rect{X \ann K} \Lam{\vec u} S) \App \vec M}
                       {\Delta;\Xi;\Gamma \gives s \chk S[\many{\sub{M}{u}};\sub{\rect{X \ann K} \Lam{\vec u} S}{X}]} \]
  Let $R = \rect{X \ann K} \Lam{\vec u} S$. \\
  $s[\theta;\sigma] \evto w \in \sem{S[\many{\sub{M}{u}};\sub{R}{X}]}(\theta;\eta)$ \hfill by I.H. \\
  $(\inj{\mu} s)[\theta;\sigma] \evto \inj{\mu} w$ \hfill by $\einj{\mu}$ \\
  $\inj{\mu} w \in \sem{R \App \vec M}(\theta;\eta)$ \hfill by Lemma \ref{lem:app_prefp} \\


  \paragraph*{Case:}
  \[ \infer[\trec]{\Delta;\Xi;\Gamma \gives \rec f s \chk \iarr{\many{u \ann U}}{(\rect{X \ann K} \Lam{\many{u\pr}} R) \App \vec{u}}{S}}
                  {\Delta;\Xi,X \ann K;\Gamma, f \ann \iarr{\many{u \ann U}}{X \App \vec u}{S} \gives
                     s \chk \iarr{\many{u \ann U}}{R[\many{u/u\pr}]}{S}
                   & \vec u \notin \FV(R)} \]
  Let $c = (\rec f s)[\theta;\sigma]$ and
  $C = (\rect{X \ann K} \Lam{\many{u\pr}} R) \App \vec{u}$.
  By $\erec$, $(\rec f s)[\theta;\sigma] \evto c$.
  We need to show that $c \in \sem{\iarr{\many{u \ann U}}{C}{S}}(\theta;\eta)$,

  Let $\vec \U = \sem{(\many{u \ann U})}(\theta) = \sem{(\many{u\pr \ann U})}(\theta)$
  (both $\vec u$ and $\many{u\pr}$ do not appear in $\theta$).
  From the kinding rules we know that $K = \pik{\many{u\pr \ann U}} \type$ so
  $\sem{K}(\theta) = \vec \U \to \powset \VAL$.
  Let $\mc L = \sem{K}(\theta)$ and define $\F \in \mc L \to \mc L$ by
  $\F(\X) = \injop{\mu} \sem{\Lam{\many{u\pr}} R}(\theta;\eta,\sub{\X}{X})$.

  For $\vec M \in \vec \U$,
  \begin{align*}
    \sem{C}(\theta,\many{\sub{M}{u}};\eta)
    &= \sem{\rect{X \ann K} \Lam{\many{u\pr}} R}(\theta,\many{\sub{M}{u}};\eta)(\vec u[\theta,\many{\sub{M}{u}}])
      &\qquad\text{by \sem{T} def} \\
    &= \sem{\rect{X \ann K} \Lam{\many{u\pr}} R}(\theta;\eta)(\vec M) \\
      &\qquad\text{dropping mappings for $\vec u$ since $\vec u \notin \FV(R)$} \\
    &= \semlfp(\X \mapsto \injop{\mu} \sem{\Lam{\many{u\pr}} R}(\theta;\eta,\sub{\X}{X}))(\vec M)
      &\qquad\text{by \sem{T} def} \\
    &= (\semlfp \F)(\vec M) &\qquad\text{by $\F$ def.} \\
  \end{align*}

  Define $\B \in \mc L$ by $\B(\vec M) = \sem{S}(\theta,\many{\sub{M}{u}};\eta)$.
  Then
  \[ \sem{\iarr{\many{u \ann U}}{C}{S}}(\theta;\eta) = \semarr{\vec \U}{\semlfp \F}{\B} \qquad\text{by $\sem T$ def.} \]

  We want to show $c \in \semarr{\vec \U}{\semlfp \F}{\B}$.
  We will instead prove the sufficient condition given in Lemma \ref{lem:fun-prefp}.
  To this end, suppose $\X \in \vec \U \to \powset \VAL = \mc L$ and $c \in \semarr{\vec \U}{\X}{\B}$.
  The goal is now to show $c \in \semarr{\vec \U}{\F(\X)}{\B}$.

  Define $\A \in \mc L$ by $\A = \sem{\Lam{\many{u\pr}} R}(\theta;\eta,\sub{\X}{X})$,
  so $\F(\X) = \injop{\mu} \A$.
  By Lemma \ref{lem:back_clos}, it suffices to show that
  \[ s[\theta;\sigma,\sub{c}{f}] \evto c\pr \in \semarr{\vec \U}{\A}{\B}. \]

  We need to interpret the types $\iarr{\many{u \ann U}}{X \App \vec u}{S}$ and
  $\iarr{\many{u \ann U}}{R[\many{\sub{u}{u\pr}}]}{S}$ appearing in the premise of $\trec$.
  Note that these types are well-kinded under the contexts $\Delta;\Xi,X \ann K$.
  Since $\X \in \mc L = \sem{K}(\theta)$, we interpret them under the environments $\theta$ and
  $\eta,\sub{\X}{X} \in \sem{\Xi,X \ann K}(\theta)$.

  \begin{align*}
    &\sem{\iarr{\many{u \ann U}}{X \App \vec u}{S}}(\theta;\eta,\sub{\X}{X}) \\
    &= \semarr{\vec \U}{(\vec M \mapsto \sem{X \App \vec u}(\theta,\many{\sub{M}{u}};\eta,\sub{\X}{X}))}
                             {(\vec M \mapsto \sem{S}(\theta,\many{\sub{M}{u}};\eta,\sub{\X}{X}))} \\
    &= \semarr{\vec \U}{(\vec M \mapsto \X(\vec M))}
                             {(\vec M \mapsto \sem{S}(\theta,\many{\sub{M}{u}};\eta))} \\
    &\qquad\text{dropping a mapping for $X$ since $X \notin \FV(S)$} \\
    &= \semarr{\vec \U}{\X}{\B}
  \end{align*}

  \begin{align*}
    &\sem{\iarr{\many{u \ann U}}{R[\many{\sub{u}{u\pr}}]}{S}}(\theta;\eta,\X/X) \\
    &= \semarr{\vec \U}{(\vec M \mapsto \sem{R[\many{\sub{u}{u\pr}}]}(\theta,\many{\sub{M}{u}};\eta,\sub{\X}{X}))}
                             {(\vec M \mapsto \sem{S}(\theta,\many{\sub{M}{u}};\eta,\sub{\X}{X}))} \\
    &= \semarr{\vec \U}{(\vec M \mapsto \sem{R}((\id_{\Delta},\many{\sub{u}{u\pr}})[\theta,\many{\sub{M}{u\pr}}];\eta,\sub{\X}{X}))}
                             {(\vec M \mapsto \sem{S}(\theta,\many{\sub{M}{u}};\eta))} \\
    &\qquad\text{by Lemma \ref{lem:subst} and again dropping a mapping for $X$} \\
    &= \semarr{\vec \U}{(\vec M \mapsto \sem{R}(\theta,\many{\sub{M}{u\pr}};\eta,\sub{\X}{X}))}
                             {(\vec M \mapsto \sem{S}(\theta,\many{\sub{M}{u}};\eta))} \\
    &= \semarr{\vec \U}{\A}{\B}.
  \end{align*}

  Our assumption from Lemma \ref{lem:fun-prefp} is that $c \in \semarr{\vec \U}{\X}{\B}$.
  Moreover, since $X \notin \FV(\Gamma)$, $\sem{\Gamma}(\theta;\eta,\sub{\X}{X}) = \sem{\Gamma}(\theta;\eta) \ni \sigma$.
  Hence $\sigma, \sub{c}{f} \in \sem{\Gamma, f \ann \iarr{\many{u \ann U}}{X \App \vec u}{S}}(\theta;\eta,\sub{\X}{X})$.
  Now we can apply the induction hypothesis with $\eta\pr = \eta, \sub{\X}{X}$ and $\sigma\pr = \sigma, \sub{c}{f}$
  to learn that $s[\theta;\sigma\pr] \evto c\pr$ where
  $c\pr \in \sem{\iarr{\many{u \ann U}}{R[\many{\sub{u}{u\pr}}]}{S}}(\theta;\eta\pr) = \semarr{\vec \U}{\A}{\B}$.


\paragraph*{Case:}
  \[ \infer[\tcorec]{\Delta;\Xi;\Gamma \gives \corec f s \chk \iarr{\many{u \ann U}}{S}{(\corect{X \ann K} \Lam{\many{u\pr}} R) \App \vec{u}}}
                  {\Delta;\Xi,X \ann K;\Gamma, f \ann \iarr{\many{u \ann U}}{S}{X \App \vec u} \gives
                     s \chk \iarr{\many{u \ann U}}{S}{R[\many{u/u\pr}]}
                   & \vec u \notin \FV(R)} \]

  Let $c = (\corec f s)[\theta;\sigma]$ and
  $C = (\corect{X \ann K} \Lam{\many{u\pr}} R) \App \vec{u}$.
  By $\ecorec$, $(\corec f s)[\theta;\sigma] \evto c$.
  We need to show that $c \in \sem{\iarr{\many{u \ann U}}{S}{C}}(\theta;\eta)$,

  Let $\vec \U = \sem{(\many{u \ann U})}(\theta) = \sem{(\many{u\pr \ann U})}(\theta)$
  (both $\vec u$ and $\many{u\pr}$ do not appear in $\theta$).
  From the kinding rules we know that $K = \pik{\many{u\pr \ann U}} \type$ so
  $\sem{K}(\theta) = \vec \U \to \powset \VAL$.
  Let $\mc L = \sem{K}(\theta)$ and define $\F \in \mc L \to \mc L$ by
  $\F(\X) = \outop{\nu} (\sem{\Lam{\many{u\pr}} R}(\theta;\eta,\sub{\X}{X}))$.

  For $\vec M \in \vec \U$,
  \begin{align*}
    \sem{C}(\theta,\many{\sub{M}{u}};\eta)
    &= \sem{\corect{X \ann K} \Lam{\many{u\pr}} R}(\theta,\many{\sub{M}{u}};\eta)(\vec u[\theta,\many{\sub{M}{u}}])
      &\qquad\text{by \sem{T} def} \\
    &= \sem{\corect{X \ann K} \Lam{\many{u\pr}} R}(\theta;\eta)(\vec M) \\
      &\qquad\text{dropping mappings for $\vec u$ since $\vec u \notin \FV(R)$} \\
    &= \semgfp(\X \mapsto \outop{\nu} (\sem{\Lam{\many{u\pr}} R}(\theta;\eta,\sub{\X}{X})))(\vec M)
      &\qquad\text{by \sem{T} def} \\
    &= (\semgfp \F)(\vec M) &\qquad\text{by $\F$ def.} \\
  \end{align*}

  Define $\A \in \mc L$ by $\A(\vec M) = \sem{S}(\theta,\many{\sub{M}{u}};\eta)$.
  Then
  \[ \sem{\iarr{\many{u \ann U}}{S}{C}}(\theta;\eta) = \semarr{\vec \U}{A}{\semgfp \F} \qquad\text{by $\sem T$ def.} \]

  We want to show $c \in \semarr{\vec \U}{\A}{\semgfp \F}$.
  We will instead prove the sufficient condition given in Lemma \ref{lem:fun-prefp}.
  To this end, suppose $\X \in \vec \U \to \powset \VAL = \mc L$ and $c \in \semarr{\vec \U}{\A}{\X}$.
  The goal is now to show $c \in \semarr{\vec \U}{\A}{\F(\X)}$.

  Define $\B \in \mc L$ by $\B = \sem{\Lam{\many{u\pr}} R}(\theta;\eta,\sub{\X}{X})$,
  so $\F(\X) = \outop{\nu} \B$.
  By Lemma \ref{lem:back_clos}, it suffices to show that
  \[ s[\theta;\sigma,\sub{c}{f}] \evto c\pr \in \semarr{\vec \U}{\A}{\B}. \]

  We need to interpret the types $\iarr{\many{u \ann U}}{S}{X \App \vec u}$ and
  $\iarr{\many{u \ann U}}{S}{R[\many{\sub{u}{u\pr}}]}$ appearing in the premise of $\tcorec$.
  Note that these types are well-kinded under the contexts $\Delta;\Xi,X \ann K$.
  Since $\X \in \mc L = \sem{K}(\theta)$, we interpret them under the environments $\theta$ and
  $\eta,\sub{\X}{X} \in \sem{\Xi,X \ann K}(\theta)$.

  \begin{align*}
    &\sem{\iarr{\many{u \ann U}}{S}{X \App \vec u}}(\theta;\eta,\sub{\X}{X}) \\
    &= \semarr{\vec \U}{(\vec M \mapsto \sem{S}(\theta,\many{\sub{M}{u}};\eta,\sub{\X}{X}))}
                             {(\vec M \mapsto \sem{X \App \vec u}(\theta,\many{\sub{M}{u}};\eta,\sub{\X}{X}))} \\
    &= \semarr{\vec \U}{(\vec M \mapsto \sem{S}(\theta,\many{\sub{M}{u}};\eta))}
                             {(\vec M \mapsto \X(\vec M))} \\
    &\qquad\text{dropping a mapping for $X$ since $X \notin \FV(S)$} \\
    &= \semarr{\vec \U}{\A}{\X}
  \end{align*}

  \begin{align*}
    &\sem{\iarr{\many{u \ann U}}{S}{R[\many{\sub{u}{u\pr}}]}}(\theta;\eta,\X/X) \\
    &= \semarr{\vec \U}{(\vec M \mapsto \sem{S}(\theta,\many{\sub{M}{u}};\eta,\sub{\X}{X}))}
                             {(\vec M \mapsto \sem{R[\many{\sub{u}{u\pr}}]}(\theta,\many{\sub{M}{u}};\eta,\sub{\X}{X}))} \\
    &= \semarr{\vec \U}{(\vec M \mapsto \sem{S}(\theta,\many{\sub{M}{u}};\eta))}
                             {(\vec M \mapsto \sem{R}((\id_{\Delta},\many{\sub{u}{u\pr}})[\theta,\many{\sub{M}{u\pr}}];\eta,\sub{\X}{X}))} \\
    &\qquad\text{by Lemma \ref{lem:subst} and again dropping a mapping for $X$} \\
    &= \semarr{\vec \U}{(\vec M \mapsto \sem{S}(\theta,\many{\sub{M}{u}};\eta))}
                             {(\vec M \mapsto \sem{R}(\theta,\many{\sub{M}{u\pr}};\eta,\sub{\X}{X}))} \\
    &= \semarr{\vec \U}{\A}{\B}.
  \end{align*}

  Our assumption from Lemma \ref{lem:fun-prefp} is that $c \in \semarr{\vec \U}{\A}{\X}$.
  Moreover, since $X \notin \FV(\Gamma)$, $\sem{\Gamma}(\theta;\eta,\sub{\X}{X}) = \sem{\Gamma}(\theta;\eta) \ni \sigma$.
  Hence $\sigma, \sub{c}{f} \in \sem{\Gamma, f \ann \iarr{\many{u \ann U}}{S}{X \App \vec u}}(\theta;\eta,\sub{\X}{X})$.
  Now we can apply the induction hypothesis with $\eta\pr = \eta, \sub{\X}{X}$ and $\sigma\pr = \sigma, \sub{c}{f}$
  to learn that $s[\theta;\sigma\pr] \evto c\pr$ where
  $c\pr \in \sem{\iarr{\many{u \ann U}}{S}{R[\many{\sub{u}{u\pr}}]}}(\theta;\eta\pr) = \semarr{\vec \U}{\A}{\B}$.

\paragraph*{Case:}
  \[
\infer{\Delta;\Xi;\Gamma \gives \out{\nu} t \syn T[\many{\sub{M}{u}};\sub{\corect{X \ann K} \Lam{\vec u} T}{X}]}
      {\Delta;\Xi;\Gamma \gives t \syn (\corect{X \ann K} \Lam{\vec u} T) \App \vec M}
      \]

$t[\theta;\sigma] \evto v \in \sem{(\corect{X \ann K} \Lam{\vec u} T) \App \vec M}(\theta;\eta)$ for some $v$ \hfill by I.H.\\
$v \in \sem{\corect{X \ann K} \Lam{\vec u} T}(\theta;\eta)\ (\vec M[\theta])$ \hfill by \sem{T} def.\\
$v \in \semgfp(\X \mapsto \outop{\nu} (\sem{\Lam{\vec u}T}(\theta;\eta,\sub{\X}{X})\}))\ (\vec M[\theta])$ \hfill by $\sem{T}$ def.\\
$v \in \join \{ \C \in \mc L \sep \fall{\X \in \mc L} \C \leq_{\mc L} \X \implies \C \leq_{\mc L}  \outop{\nu} (\sem{\Lam{\vec u}T}(\theta;\eta,\sub{\X}{X}))\}\ (\vec M[\theta])$
\\\hfill by definition of $\semgfp$\\
$v \in \C(\vec M[\theta])$ such that $\fall{\X} \C(\vec M[\theta]) \leq \X(\vec M[\theta])$\\\indent
$\implies \C(\vec M[\theta]) \leq (\outop{\nu} (\sem{\Lam{\vec u}T}(\theta;\eta,\sub{\X}{X})))(\vec M[\theta])$ \hfill by def of $\join$\\
The right-hand side can be simplified as $\C(M[\theta]) \leq (\outop{\nu} (\sem{T}(\theta,\many{\sub{(M[\theta])}{u}};\eta,\sub{\X}{X})))$ \hfill by def of $\sem{T}$\\
Choosing $\sem{\corect{X \ann K} \Lam{\vec u} T}(\theta;\eta)$ for $\X$, the left-hand side
holds trivially by def. of $\sem{\corect{X \ann K} \Lam{\vec u} T}$ and of $\join$\\
Hence, $v \in (\outop{\nu} (\sem{T}(\theta,\many{\sub{(M[\theta])}{u}};\eta,\sub{\sem{\corect{X \ann K} \Lam{\vec u} T}(\theta;\eta)}{X})))$\\
$ \appout v \evto w \in \sem{T}(\theta,\many{\sub{(M[\theta])}{u}};\eta,\sem{\corect{X \ann K} \Lam{\vec u} T}(\theta;\eta)/X)$ \\
$v = (\corec f t)[\theta',\sigma'] \cdot \vec N\ v'$ \hfill by inversion on $\ecorecout$\\
$ (\out{\nu} t)[\sigma;\theta]\evto w \in \sem{T}(\theta,\many{\sub{(M[\theta])}{u}};\eta,\sem{\corect{X \ann K} \Lam{\vec u} T}(\theta;\eta)/X)$ \hfill by $\eout{\nu}$ \\
$\out{\nu} t \evto w \in \sem{T[\many{\sub{M}{u}};(\corect {X{:}K}\Lam {\vec u} T)/X]}(\theta;\eta)$ \hfill  by Lemma \ref{lem:subst}


  \paragraph*{Case:}
  \[ \infer[\tinj{0}]{\Delta;\Xi;\Gamma \gives \inj{0} s \chk T_{\Rec} \App 0 \App \vec M}
                     {\Delta;\Xi;\Gamma \gives s \chk T_z \App \vec M} \]
  $s[\theta;\sigma] \evto w \in \sem{T_z \App \vec M}(\theta;\eta)$ \hfill by I.H. \\
  $(\inj{0} s)[\theta;\sigma] \evto \inj{0} w$ \hfill by $\einj{0}$ \\
  $\inj{0} w \in \inj{0} \sem{T_z \App \vec M}(\theta;\eta)$ \\
  $\inj{0} w \in \sem{T_{\Rec} \App 0 \App \vec M}(\theta;\eta)$ \hfill by Lemma \ref{lem:trec} \\

  \paragraph*{Case:}
  \[ \infer[\tinj{\suc}]{\Delta;\Xi;\Gamma \gives \inj{\suc} s \chk T_{\Rec} \App (\suc N) \App \vec M}
                        {\Delta;\Xi;\Gamma \gives s \chk T_s[\sub{N}{u};\sub{(T_{\Rec} \App N)}{X}] \App \vec M} \]
  $s[\theta;\sigma] \evto w \in \sem{T_s[\sub{N}{u};\sub{(T_{\Rec} \App N)}{X}] \App \vec M}(\theta;\eta)$ \hfill by I.H. \\
  $(\inj{\suc} s)[\theta;\sigma] \evto \inj{\suc} w$ \hfill by $\einj{\suc}$ \\
  $\inj{\suc} w \in \inj{\suc} (\sem{T_s[\sub{N}{u};\sub{(T_{\Rec} \App N)}{X}] \App \vec M}(\theta;\eta))$ \\
  $\inj{\suc} w \in \sem{T_{\Rec} \App (\suc N) \App \vec M}(\theta;\eta)$ \hfill by Lemma \ref{lem:trec} \\

  \paragraph*{Case:}
  \[ \infer[\tout{0}]{\Delta;\Xi;\Gamma \gives \out{0} s \syn T_z \App \vec M}
                     {\Delta;\Xi;\Gamma \gives s \syn T_{\Rec} \App 0 \App \vec M} \]
  $s[\theta;\sigma] \evto w \in \sem{T_{\Rec} \App 0 \App \vec M}(\theta;\eta)$ \hfill by I.H. \\
  $w = \inj{0} v$ for some $v \in (\sem{T_z \App \vec M}(\theta;\eta))$ \hfill by Lemma \ref{lem:trec} \\
  $(\out{0} s)[\theta;\sigma] \evto v$ \hfill by $\eout{0}$ \\

  \paragraph*{Case:}
  \[ \infer[\tout{\suc}]{\Delta;\Xi;\Gamma \gives \out{\suc} s \syn T_s[\sub{N}{u};\sub{(T_{\Rec} \App N)}{X}] \App \vec M}
                        {\Delta;\Xi;\Gamma \gives s \syn T_{\Rec} \App (\suc N) \App \vec M} \]
  $s[\theta;\sigma] \evto w \in \sem{T_{\Rec} \App (\suc N) \App \vec M}(\theta;\eta)$ \hfill by I.H. \\
  $w = \inj{\suc} v$ for some $v \in \sem{T_s[\sub{N}{u};\sub{(T_{\Rec} \App N)}{X}] \App \vec M}(\theta;\eta)$ \hfill by Lemma \ref{lem:trec} \\
  $(\out{\suc} s)[\theta;\sigma] \evto v$ \hfill by $\eout{\suc}$ \\

  \paragraph*{Case:}
  \[ \infer[\tind]{\Delta;\Xi;\Gamma \gives \indnat{t_z}{u}{x}{t_s} \chk \iarr{u \ann \nat}{\Unit}{S}}
  {\Delta;\Xi;\Gamma \gives t_z \chk S[\sub{0}{u}] & \Delta,u \ann \nat;\Xi;\Gamma,x \ann S \gives t_s \chk S[\sub{\suc u}{u}]} \]
  Let $c$ be the closure $(\indnat{t_z}{u}{x}{t_s})[\theta;\sigma]$. \\
  $(\ilam{\vec u}{x}{s})[\theta;\sigma] \evto c$ \hfill by $\eind$ \\
  Suffices to show $c \in \sem{\iarr{u \ann \nat}{1}{S}}(\theta;\eta)$, i.e. \\
  $\fall{N \in \sem \nat} \evalapp{c}{N}{\unit}{w} \in \sem{S}(\theta,\sub{N}{u};\eta)$. \\
  Proceed by induction on $N$. \\
  Base case: $N = 0$. \\
  $t_z[\theta;\sigma] \evto w \in \sem{S[\sub{0}{u}]}(\theta;\eta)$ \hfill by I.H. \\
  $w \in \sem{S}((\id_{\Delta},\sub{0}{u})[\theta];\eta)$ \hfill by Lemma \ref{lem:subst} \\
  $w \in \sem{S}(\theta,\sub{0}{u};\eta)$
  \hfill by \LONGSHORT{Def. \ref{def:isubst-comp}}{composition of index substitutions} \\
  $\appval{c}{0}{\unit} \evto w$ \hfill by $\eappind{0}$ \\
  Step case: $N = \suc N\pr$ for some $N\pr \in \sem \nat$. \\
  $\appval{c}{N\pr}{\unit} \evto v \in \sem{S}(\theta,\sub{N\pr}{u};\eta)$ \hfill by inner I.H. \\
  Let $\theta\pr = \theta,\sub{N\pr}{u}$, so $\gives \theta\pr \oft \Delta,u \ann \nat$. \\
  $\sigma \in \sem{\Gamma}(\theta\pr;\eta)$ \hfill since $u \notin \FV(\Gamma)$ \\
  $\sigma,\sub{v}{x} \in \sem{\Gamma,x \ann S}(\theta\pr;\eta)$ \hfill by Def. \ref{def:sem_typctx} \\
  $t_s[\theta\pr;\sigma,\sub{v}{x}] \evto w \in \sem{S[\sub{\suc u}{u}]}(\theta\pr;\eta)$ \hfill by I.H. \\
  $w \in \sem{S}((\id_{\Delta},\sub{\suc u}{u})[\theta\pr];\eta)$ \hfill by Lemma \ref{lem:subst} \\
  $w \in \sem{S}(\theta\pr,\sub{(\suc u)[\theta\pr]}{u};\eta)$
  \hfill by \LONGSHORT{Def. \ref{def:isubst-comp}}{composition of index substitutions} \\
  $w \in \sem{S}(\theta\pr,\sub{\suc N\pr}{u};\eta)$ \hfill by $\theta\pr$ def. \\
  $w \in \sem{S}(\theta,\sub{N}{u};\eta)$ \hfill by $N$ def. and overwriting $u$ in $\theta\pr$ \\
  $\appval{c}{N}{\unit} \evto w$ \hfill by $\eappind{\suc}$ \\

\end{proof}
}

\end{document}